\definecolor{darkblue}{rgb}{0,0,.6}
\definecolor{darkgreen}{rgb}{0,.4,0}
\begin{document}

\title{The Generalized Independent and Dominating Set Problems on Unit Disk Graphs\thanks{Preliminary
version of this paper appeared in FAW, 2018}}



\author{Sangram K. Jena \and Ramesh K. Jallu \thanks{Ramesh K. Jallu was supported by the Czech Science Foundation, grant number GJ19-06792Y, and by institutional support RVO:67985807} \and Gautam K. Das \and Subhas C. Nandy}

\authorrunning{Sangram et al.} 

\institute{ Sangram K. Jena \at
              Department of Mathematics\\Indian Institute of Technology Guwahati \\
              \email{sangram@iitg.ac.in}           
           \and
           Ramesh K. Jallu \at
              The Czech Academy of Sciences, Institute of Computer Science \\ 
              \email{jallu@cs.cas.cz}
              \and
             Gautam K. Das  \at
              Department of Mathematics\\Indian Institute of Technology Guwahati  \\ 
              \email{gkd@iitg.ac.in}
              \and
               Subhas C. Nandy \at
              Indian Statistical Institute, Kolkata \\ 
              \email{nandysc@isical.ac.in}
}

\date{Received: date / Accepted: date}

\maketitle

\begin{abstract}
In this article, we study a generalized version of the maximum independent set and minimum dominating set problems, namely, the maximum $d$-distance independent set problem and the minimum $d$-distance dominating set problem on unit disk graphs for a positive integer $d>0$. We first
show that the maximum $d$-distance independent set problem and the minimum $d$-distance dominating set problem belongs to NP-hard class. Next, we propose a simple polynomial-time constant-factor approximation algorithms and  PTAS for both the problems.
\keywords{Independent set \and Dominating set \and Approximation algorithm \and Approximation scheme}
\end{abstract}

\section{Introduction}
\label{sec:intro}
The \emph{independent set problem} is one of the well known classical combinatorial optimization problems in graph theory due to its many important applications, including but not limited to networks, map labeling, computer vision, coding theory, scheduling, clustering. Like \emph{independent set problem}, \emph{dominating set problem} is also an important well studied combinatorial optimization problem in graph theory. The dominating set problem has a wide range of applications including wireless networking, facility location problems etc.

Given an unweighted graph $G = (V,E)$, a non-empty subset of pairwise non-adjacent 
vertices of $G$ is known as an \emph{independent set} of $G$. The maximum independent set problem
asks to find an independent set of maximum size in a given 
unweighted graph $G$, and such a set is called as \emph{maximum independent 
set} (MIS) of $G$. For an integer $d \geq 2$, a \emph{distance-$d$ 
independent set} (D$d$IS) of an unweighted graph $G=(V,E)$ is an independent set $I$ 
of $G$ such that the shortest path distance (i.e., the number of edges on a shortest path) 
between every pair of vertices in $I$ is at least $d$. For a 
given unweighted graph $G$, the objective of the maximum distance-$d$ independent set 
problem is to find a D$d$IS of maximum 
cardinality in $G$. A D$d$IS of maximum possible size 
is called as \emph{maximum distance-$d$ independent set} (MD$d$IS). 
Observe that the D$d$IS problem is a generalization of the MIS problem and 
in fact for $d=2$, the D$d$IS problem and MIS problem are the same.\\
In a simple unweighted graph $G=(V,E)$, dominating set is defined as a set of vertices $V'\subseteq V$ such that for each vertex $u\in V$, either (i) $u\in V'$, or (ii) there exist $v\in V'$ such that $v$ is a neighbor of $u$. The dominating set of minimum cardinality in a graph $G$ is called the minimum dominating set (MDS) of $G$. The objective of MDS problem is to find a dominating set of minimum cardinality in $G$. We define a generalized version of MDS problem as distance-$d$ dominating set (D$d$DS) problem. A D$d$DS for an integer $d\geq 1$ in a simple unweighted graph $G=(V,E)$  is defined as a set of vertices $V'\subseteq V$ such that, for each vertex $u\in V$, either (i) $u\in V'$, or (ii) $ v \in V'$ such that, the shortest path distance between $u$ and $v$ is at most $d$. The objective of the \emph{minimum distance-$d$ dominating set} (MD$d$DS) problem is to find a D$d$DS of minimum cardinality in a given graph $G$.

Given a set $P = \{p_1,p_2,\ldots,p_n\}$ of $n$ points in the plane, a \emph{unit disk graph} (UDG) 
corresponding to the point set $P$ is a simple graph $G=(V,E)$ such that 
$V = P$, and 
$E = \{(p_i,p_j) \mid d(p_i,p_j) \leq 1\}$, where $d(p_i,p_j)$ denotes the Euclidean distance 
between $p_i$ and $p_j$. In other words, a unit disk graph is an intersection graph of 
disks of unit diameter centered at the points in $P$.

An algorithm for a minimization (resp. maximization) problem 
is said to be a  {\it $\rho$-factor approximation algorithm} 
if for every instance of the problem the algorithm produces 
a feasible solution whose value is within a factor of $\rho$ (resp. at least a factor of $\frac{1}{\rho}$) of the 
optimal solution value and runs in polynomial-time of the input size. 
Here, $\rho$ is called \emph{the approximation 
factor} or \emph{approximation ratio} of the algorithm and the optimization problem is said to have a $\rho$-factor 
approximation algorithm. A \emph{polynomial-time approximation scheme} 
(PTAS) for an optimization problem is a collection of algorithms $\{{\cal A_\epsilon}\}$ such 
that for a given $\epsilon >0$, ${\cal A_\epsilon}$ is a $(1 + \epsilon)$-factor 
approximation algorithm in case of minimization problem ($(1 - \epsilon)$ in 
case of maximization). The running time of ${\cal A_\epsilon}$ is required 
to be polynomial in the size of the problem depending on ${\epsilon}$.
\section{Related work}\label{sec:rel_work}
The MIS problem is known to be NP-hard for general graphs \cite{garey2002} including many sub-class of planar  
graphs, namely planar graphs of maximum degree 3 \cite{garey1977}, 
planar graphs of large girth \cite{murphy1992}, cubic planar graphs \cite{garey1974}, triangle-free graphs 
\cite{poljak1974},  $K_{1,4}$-free graphs \cite{minty1980}, etc. 

Tarjan and Trojanowski \cite{tarjan1977} presented a naive algorithm for finding maximum independent set in a graph having $n$-vertices in $O(2^\frac{n}{3})$ time. Later Robson \cite{robson1986} improved the complexity  to $O(2^{0.276n})$. Xiao and Nagamochi \cite{xiao2017} gave a better bound for finding the maximum independent set problem in $1 . 1996^n n^{O(1)}$ time and in polynomial space. For the graphs with maximum degree $6$ and $7$, they gave algorithms to find MIS, which run in  $1.1893^nn^{O(1)}$ time and $1.1970^nn^ {O(1)}$ time,  respectively. Johnson et al. \cite{johnson1988} presented an algorithm, which produces lexicographic ordering of all maximal independent sets of a graph having polynomial delay between two successive independent sets with exponential space complexity. In that paper, they also proved that there is no such polynomial-delay algorithm exists for generating all maximal independent sets in reverse lexicographic order, unless P=NP.

Andrade \cite{andrade2012} gave the first local search algorithm for finding independent set of a graph. Later the problem studied on pseudo-disks in the plane by Chan and Har-Peled \cite{chan2012}. They analysed the problem for both weighted and unweighted cases and  gave a PTAS via local-search algorithm for unweighted case and for weighted case, they gave a constant-factor approximation by an LP based  rounding scheme.

In general, the MIS problem cannot be approximated within a constant factor unless P=NP \cite{arora1998}. 
However, the problem is polynomially 
solvable for bipartite graphs, outerplanar graphs, 
perfect graphs, claw-free graphs, chordal graphs, etc. \cite{gavril1972,hsu1981}.
The MIS problem is well studied on UDGs too and is shown to be NP-hard  \cite{clark1990}. 
Unlike in general graphs, the problem admits approximation algorithms 
\cite{marathe1995,halldorsson1995,matsui2000,Das2015,jallu2016,nandy2017} and 
approximation schemes \cite{erlebach2005,nieberg2005,Das2015,jallu2016}.

The distance-$d$ independent set (D$d$IS) problem, for any fixed $d \geq 3$, is known to be NP-hard for bipartite graphs \cite{chang1984} and 
planar bipartite graphs of maximum degree 3 \cite{eto2014}. It is also known that getting an 
$n^{\frac{1}{2}-\epsilon}$-factor approximation 
result, for any $\epsilon > 0$, on bipartite graphs is NP-hard (this result 
also holds for chordal graphs when $d \geq 3$ is an odd number)
\cite{eto2014}. 
The problem is polynomially solvable for some intersection 
graphs, such as interval graphs, trapezoid graphs, and circular arc graphs \cite{agnarsson2003}.
If the input graph is restricted to be a chordal graph, then the problem is solvable in polynomial time for any even $d \geq 2$; 
on the other hand, the problem is NP-hard for any odd $d \geq 3$ \cite{eto2014}.
Eto et al. \cite{eto2016} studied the problem 
on $r$-regular graphs and planar graphs. The authors showed that for $d \geq 3$ and 
$r \geq 3$, the D$d$IS problem on $r$-regular graphs is APX-hard, and proposed 
$O(r^{d-1})$ and $O(\frac{r^{d-2}}{d})$-factor approximation algorithms. When $d=r=3$, they 
enhanced their $O(\frac{r^{d-2}}{d})$-factor result to a 2-factor approximation 
result (later, the approximation factor is improved to 1.875 \cite{eto2017}).
Finally, they proposed a PTAS in case of planar graphs. 
Montealegre and Todinca studied the problem in graphs with few minimal separators \cite{montealegre}.\\

The minimum dominating set (MDS) problem is known to be NP-hard \cite{garey2002}. Raz and Safra \cite{raz1997sub} proved the inapproximability for the MDS problem by showing that there does not exist any approximation algorithm better than $O(n \log n)$-factor approximation algorithm unless P=NP.
Due to lack of scope in better approximation result in general graphs, researchers tried geometric version of MDS problem to get better approximation factor.

The MDS problem is studied on UDG and proved to be NP-hard \cite{clark1990}. Nieberg and Hurink \cite{nieberg2005ptas1} proved that the problem admits a $(1+\epsilon)$-factor approximation algorithm for $0 < \epsilon \leq 1$. By assigning $\epsilon = 1$, a $2$-approximation algorithm can be obtained, which is fastest. The running time of this algorithm is $O(n^{81})$ \cite{de2013approximation}. Gibson and Pirwani
\cite{gibson2010algorithms} gave a PTAS for MDS problem
of arbitrary size disk graph, which runs in  $n^{O(\frac{1}{\epsilon^2})}$ time.

For the MDS problem in unit disk graphs, a 5-factor approximation algorithm is proposed by Marathe et al. \cite{marathe1995simple} in $O(n^2)$ time. Carmi et al. \cite{carmi2008polynomial} proposed a 5-factor
approximation algorithm for the MDS problem in arbitrary size disk graph.
Fonseca et al. \cite{FonsecaFSM12} improved the factor to $\frac{44}{9}$ for MDS problem in unit disk graph by using  the local improvement technique, which runs in  $O(n \log n)$ time.  De et al. \cite{de2013approximation} proposed a 12-factor approximation algorithm for the MDS problem in unit disk graph with running time $O(n \log n)$. In the same paper, they improvised the approximation factors to  4-factor, and 3-factor in time  $O(n^8 \log n)$, and $O(n^{15}\log n)$ respectively. Carmi et al. \cite{rameshanna} improved the time complexity of 4-factor approximation algorithm to $O(n^6\log n)$. They also proposed a simple 5-factor approximation algorithm in $O(n \log k)$ time  for this problem, where $k$ is the size of the output. In the same paper, they also proposed $\frac{14}{3}$-factor, 3-factor and $\frac{45}{13}$-factor approximation algorithm for MDS problem in unit disk graphs with time complexity $O(n^ 5 \log n)$, $O(n^{11} \log n)$ and $O(n^{10} \log n)$ respectively. Finally, with the help of shifting lemma, they proposed a $\frac{5}{2}$-factor approximation algorithm in $O(n^{20} \log n)$ time.
\subsection{Our work}\label{sec:our_work}
In this paper, we study the distance-$d$ independent set (D$d$IS) problem and distance-$d$ dominating set (D$d$DS) problem on unit disk graphs, where position of the disk centers are known. We call the geometric version of D$d$IS problem as \emph{geometric distance-$d$ independent set} (GD$d$IS) problem and D$d$DS problem as \emph{the geometric distance-$d$ dominating set} (GD$d$DS) problem. 
We show that the decision version of the GD$d$IS problem (for $d \geq 3$) is NP-complete on unit disk graphs (refer to Section \ref{sec:hardness}). We proposed a simple 4-factor approximation algorithm for GD$d$IS problem in Section \ref{sec:aa}, and a PTAS for this problem in Section \ref{sec:ptas}. We also show that the decision version of the GD$d$DS problem (for $d \geq 2$) is NP-complete on unit disk graphs (refer to Section \ref{sec:hardness-1}). We proposed a simple 4-factor approximation algorithm for GD$d$DS problem in Section \ref{sec:apprx}, and a PTAS for GD$d$DS problem in Section \ref{sec:apprx_schm}. Finally, we conclude the paper in Section \ref{sec:conclusion}.
\section{The GD$d$IS Problem on Unit Disk Graphs} \label{sec:hardness}
For an integer $d\geq 3$, we define the GD$d$IS problem as follows:
\begin{itemize}
\item []\emph{Given an unweighted unit disk graph $G=(V,E)$ corresponding 
to a point set $P = \{p_1,p_2,\ldots,p_n\}$ in the plane, find a maximum 
cardinality subset $I\subseteq V$, such that for every pair of vertices 
$p_i,p_j \in I$ the length (number of edges) of the shortest path between $p_i$ and $p_j$ in $G$ is  at least $d$.}
\end{itemize}

For a fixed constant $d\geq 3$, the decision version D(GD$d$IS) of the GD$d$IS problem is defined as follows:
\begin{description}
 \item[Input.] An unweighted unit disk graph $G=(V,E)$ defined on a point set $P$ 
 and a positive integer $k \leq |V|$.
 \item[Question.] Does there exist a distance-$d$ independent set of size at least $k$ in $G$?
\end{description}
\begin{lemma}\label{lem:np}
The problem belongs to class NP due to following reason.
\end{lemma}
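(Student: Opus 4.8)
The plan is to exhibit a polynomially verifiable certificate, which is the standard route for placing a decision problem in NP. Given an instance $(G,k)$ of D(GD$d$IS), I would take as the certificate a candidate subset $I \subseteq V$. The length of such a certificate is at most $|V|$, hence polynomial in the input size, so it remains only to describe a polynomial-time verifier.

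The verifier would carry out two checks. First, it confirms that $|I| \geq k$, which costs $O(|I|)$ time. Second, it confirms that $I$ is in fact a distance-$d$ independent set, i.e., that every pair of distinct vertices $p_i, p_j \in I$ satisfies $\mathrm{dist}_G(p_i, p_j) \geq d$, where $\mathrm{dist}_G$ denotes the shortest-path distance (number of edges) in $G$. Since $G$ is unweighted, I would compute these distances by running a breadth-first search from each vertex of $I$: one BFS from a source $p_i$ yields $\mathrm{dist}_G(p_i, \cdot)$ to all vertices in $O(|V| + |E|)$ time, and after at most $|I|$ such searches the verifier has every pairwise distance it needs.

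The key observation is that all of this runs in polynomial time: the distance computations take $O(|I|\,(|V| + |E|)) = O(|V|\,(|V| + |E|))$ time overall, and comparing each computed distance against the fixed constant $d$ adds only $O(|I|^2)$ further work. Because both the certificate length and the verification time are polynomial in the size of $G$, the problem D(GD$d$IS) lies in NP. There is no genuine obstacle here; the only point that warrants care is that the distances must be measured in the graph $G$ rather than in the Euclidean plane, so the verifier must operate on the combinatorial structure of $G$, and BFS on the unweighted graph is precisely the right tool for this.
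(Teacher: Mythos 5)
Your proposal is correct and follows essentially the same route as the paper: exhibit the candidate subset as a polynomial-size certificate and verify pairwise graph distances with a polynomial-time shortest-path computation. The only difference is that you use repeated BFS where the paper invokes Floyd--Warshall, which is an implementation detail (and in fact slightly more efficient for an unweighted graph).
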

\begin{proof}
 Given any subset $V' \subseteq V$, we can verify whether each pair of vertices in $V'$ is $d$-distance independent or not in polynomial time using Floyd-Warshall's all-pair shortest path algorithm \cite{cormen}.
\end{proof}
Now, we show that the D(GD$d$IS) ($d \geq 3$) problem belongs to NP-hard class by polynomial time reduction of D(GD$d$IS) from \emph{distance-$d$ independent set problem 
($d \geq 3$) on planar bipartite graphs with girth\footnote{the length of a smallest cycle in the graph} at least $d$ and maximum degree 3}, which is known to be NP-hard.

\begin{description}
 \item[\textsc{Decision Version of D$d$IS Problem on Planar Bipartite Graphs}]
 \item[Input.] An unweighted planar bipartite graph $G=(V,E)$ with girth at least $d$ and maximum vertex degree 3, 
 and a positive integer $k \leq |V|$.
 \item[Question.] Does there exist a distance-$d$ independent set of size at least $k$ in $G$?
\end{description}
In \cite{eto2014}, it has been shown that the distance-$d$ independent set problem on planar bipartite graphs with maximum degree 3 is NP-hard using polynomial time reduction of it from the distance-2 independent set problem on planar cubic graphs, which is known to be NP-hard \cite{johnson1976some}.
In fact, the reduced graph in their reduction has girth at least $d$ and 
hence the distance-$d$ independent set problem on planar bipartite graphs 
with maximum degree 3 and girth at least $d$ is NP-hard.

Our reduction is based on the concept of planar 
embedding of planar graphs. The following lemma is very useful in our reduction.

\begin{lemma}\cite{clark1990}\label{embedding}
 A planar graph $G=(V,E)$ with maximum degree 4 can be embedded in the plane using $O(|V|^2)$ 
 area in such a way that its vertices are at integer coordinates and its edges are 
 drawn using axis-parallel line segments at integer coordinates
 (i.e., edges lie on the lines $x=i_1, i_2, \ldots$ and/or $y=j_1, j_2, \ldots$, 
 where $i_1, i_2, \ldots, j_1, j_2, \ldots$ are integers). 
\end{lemma}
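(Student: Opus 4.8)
The statement is a rectilinear (orthogonal) planar grid-drawing result, and the plan is to produce a crossing-free orthogonal drawing of $G$ on an integer grid whose two side lengths are both $O(|V|)$, so that the bounding box has area $O(|V|^2)$. Since $G$ is planar I would first fix a combinatorial planar embedding, that is, the cyclic order of the edges around each vertex together with the induced face structure; this is exactly the data that any grid drawing must realize. The hypothesis that the maximum degree is at most $4$ enters here in an essential way: a grid vertex has precisely four ports (north, east, south, west), so every incident edge can be assigned its own port and no two edges are ever forced to leave a vertex along the same grid direction.

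Next I would place the vertices at integer coordinates. One clean way is to take a canonical (or $st$-) ordering $v_1,\dots,v_n$ of the vertices compatible with the fixed embedding and to add the vertices one at a time, maintaining the invariant that the current partial drawing is a planar orthogonal drawing whose outer boundary is accessible from above. When $v_i$ is inserted it is placed on a fresh grid row above the existing drawing and joined to its already-placed neighbours---which, by the canonical ordering, occupy a contiguous stretch of the current boundary---using vertical segments together with a bounded number of horizontal segments, all bending at integer points. (A disconnected input is handled by drawing its components in disjoint grid boxes, and the ordering can be arranged to work for connected planar graphs; alternatively one may start from a straight-line grid drawing on an $O(n)\times O(n)$ grid produced by the shift method, and then replace each straight edge by an axis-parallel L- or staircase path routed inside a thin corridor.)

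The degree bound guarantees that the at most four edges at $v_i$ can be split between ``downward'' connections to earlier neighbours and ports reserved for later neighbours in a way that respects the cyclic order, so the routing introduces no crossings. For the area bound I would argue that inserting each vertex enlarges the grid by only a constant number of rows and columns; over all $n$ insertions this yields width $O(n)$ and height $O(n)$, hence total area $O(n^2)$, with every vertex and every bend sitting at an integer coordinate by construction.

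I expect the crux to be the routing and port-assignment step: simultaneously maintaining (i) crossing-freeness, (ii) consistency with the fixed cyclic edge order at each vertex, and (iii) the invariant that each inserted vertex costs only $O(1)$ additional rows and columns. The four-port structure coming from the degree-$4$ assumption is exactly what makes a feasible assignment exist, but verifying that these three requirements can be met \emph{together}---rather than any one of them in isolation---is where the real work of the proof lies.
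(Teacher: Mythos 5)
The paper does not actually prove this lemma: it is imported by citation (the classical orthogonal grid-drawing result used in \cite{clark1990}, with the bend-bounded linear-time version attributed later in the text to Biedl and Kant \cite{biedl1998}). So there is no in-paper argument to compare against; your sketch should be judged as a reconstruction of the known proof. On that score, your plan is the standard and correct route --- fix a combinatorial embedding, take a canonical or $st$-ordering, insert vertices one at a time so that each insertion costs $O(1)$ new rows and columns, and use the four grid ports to absorb the degree-$4$ bound --- and the $O(|V|)\times O(|V|)$ bounding box does follow from the per-insertion accounting.

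However, as written the proposal is a plan rather than a proof, and you concede this yourself: the simultaneous satisfaction of (i) planarity of the routing, (ii) consistency with the fixed cyclic edge order, and (iii) the $O(1)$ rows-and-columns-per-vertex invariant is exactly the content of the lemma, and it is deferred rather than established. Two concrete points would need to be filled in. First, canonical orderings and $st$-orderings require (bi)connectivity assumptions; a general planar graph of maximum degree $4$ must be decomposed into blocks (or augmented) before such an ordering exists, and the augmentation must not violate the degree bound --- this is handled in Biedl--Kant but is invisible in your sketch. Second, your fallback of converting a straight-line shift-method drawing into an orthogonal one by routing each edge in a ``thin corridor'' does not obviously preserve an $O(n)\times O(n)$ integer grid: corridors around edges incident to a common vertex compete for the same unit of grid space, and making them disjoint at integer coordinates generically inflates the grid by more than a constant factor. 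I would drop that alternative and commit to the incremental construction, carrying out the port-assignment case analysis explicitly.
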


\begin{corollary}\label{cor:graph_embed}
 Let $G=(V,E)$ be a planar bipartite graph with maximum degree 3 and  
 girth at least $d ~(d \geq 3)$. $G$ can be 
 embedded on a grid in the plane, whose each grid cell is of size  $d \times d$, 
 so that its vertices lie at points of the form $(i*d,j*d)$ and its edges are drawn using a 
 sequence of consecutive line segments drawn on the vertical lines of 
 the form $x=i*d$ and/or horizontal lines of the form  
 $y=j*d$, for some integers $i$ and $j$ \emph{(see Figure. \ref{fig:embedding})}.
\end{corollary}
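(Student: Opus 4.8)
The plan is to derive this directly from Lemma~\ref{embedding} by a simple scaling argument, since the only new feature required is the coarser grid of cell size $d \times d$. First I would observe that $G$ has maximum degree $3 \le 4$, so the hypothesis of Lemma~\ref{embedding} is satisfied. Applying that lemma yields a planar embedding of $G$ using $O(|V|^2)$ area in which every vertex sits at an integer point and every edge is drawn as a sequence of axis-parallel unit-grid segments lying on lines of the form $x = i$ or $y = j$ for integers $i,j$.

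Next I would apply the uniform scaling map $\phi(x,y) = (d\,x,\, d\,y)$ to this embedding. Because $\phi$ is a similarity transformation (in particular a homeomorphism of the plane), it preserves planarity and introduces no new edge crossings, so its image is again a valid planar embedding of $G$. Under $\phi$ each vertex located at $(i,j)$ is sent to $(i\cdot d,\, j\cdot d)$, which is exactly the required form, and each edge segment lying on the line $x=i$ (respectively $y=j$) is sent to a segment lying on $x = i\cdot d$ (respectively $y = j\cdot d$). Thus the vertical grid lines become $x = i\cdot d$ and the horizontal grid lines become $y = j\cdot d$, consecutive grid lines are spaced $d$ apart, and every grid cell has size $d \times d$, as claimed.

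Finally I would note that $\phi$ scales area by a factor $d^2$, so the embedding occupies $O(d^2 |V|^2)$ area; since $d$ is a fixed constant this is still $O(|V|^2)$. I do not expect a genuine obstacle here, as the combinatorial content is entirely inherited from Lemma~\ref{embedding} and the scaling merely rescales integer coordinates to multiples of $d$. The only point that warrants an explicit line of justification is that $\phi$, being a similarity, cannot create crossings between edge segments that were previously disjoint, which guarantees that the scaled drawing remains a proper planar embedding. I would also remark that the girth and bipartiteness hypotheses play no role in the embedding itself; they are recorded in the statement only because they are needed by the gadget construction in the subsequent reduction, where the $d$-unit spacing of each edge segment provides the room required to route the corresponding chain of disks.
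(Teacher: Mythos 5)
Your proof is correct and takes essentially the same route as the paper: both derive the corollary directly from Lemma~\ref{embedding}, and you merely make explicit the uniform scaling $(x,y)\mapsto(dx,dy)$ that the paper's own (rather terse) proof leaves implicit. Your closing remark that bipartiteness and girth play no role in the embedding itself, but only in the subsequent reduction, is also accurate.
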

\begin{figure}[!ht]
  \centering
  \begin{minipage}{.45\textwidth}
  \centering
  \includegraphics[scale=0.85]{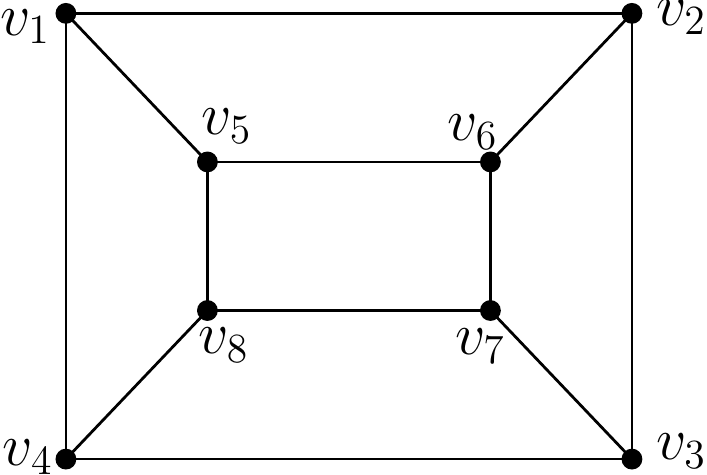} \\ {(a)}
  \end{minipage}
  \begin{minipage}{.45\textwidth}
  \centering
  \includegraphics[scale=0.85]{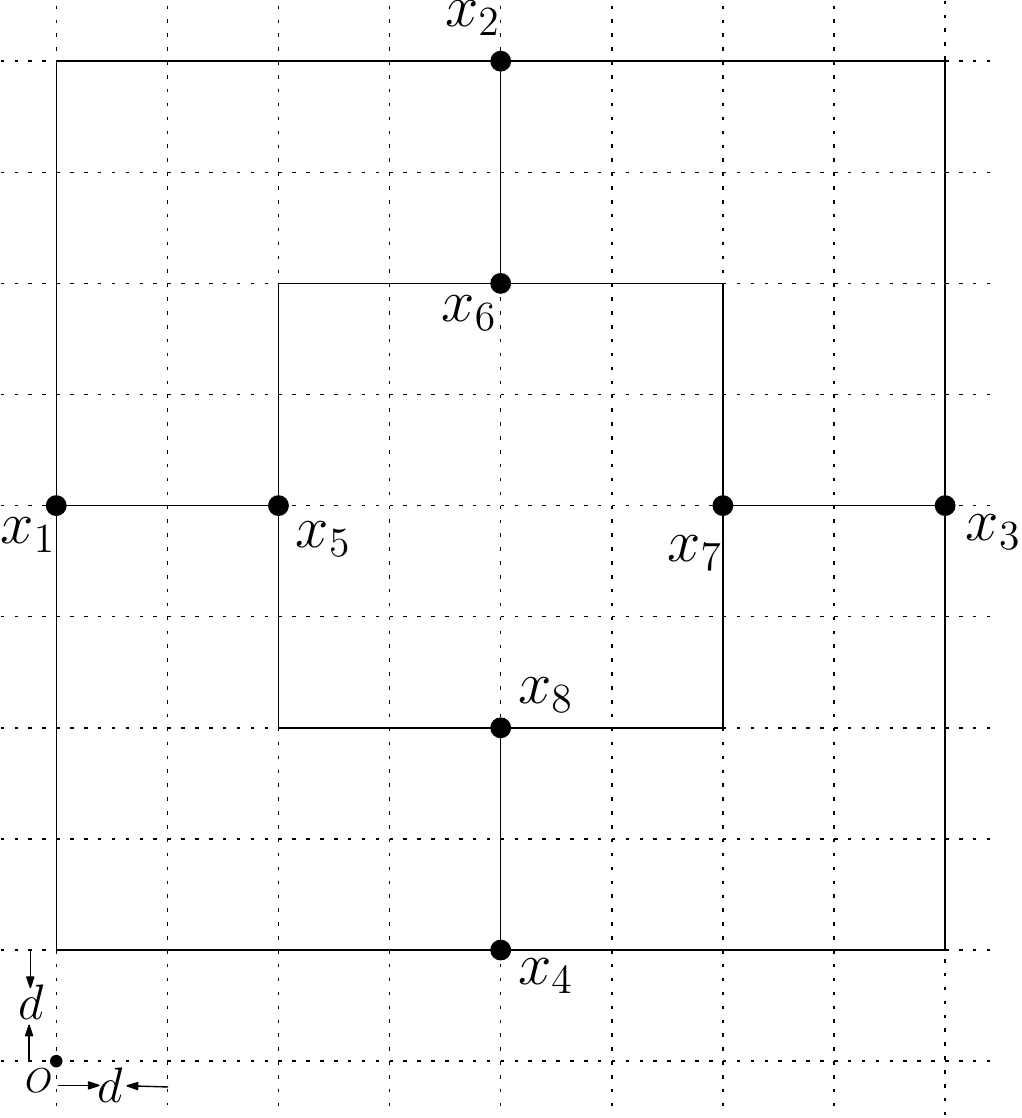}\\
  {(b)}
  \end{minipage}
  \begin{minipage}{.45\textwidth}
  \centering
  \hspace{-2.0cm}
  \includegraphics[scale=0.85]{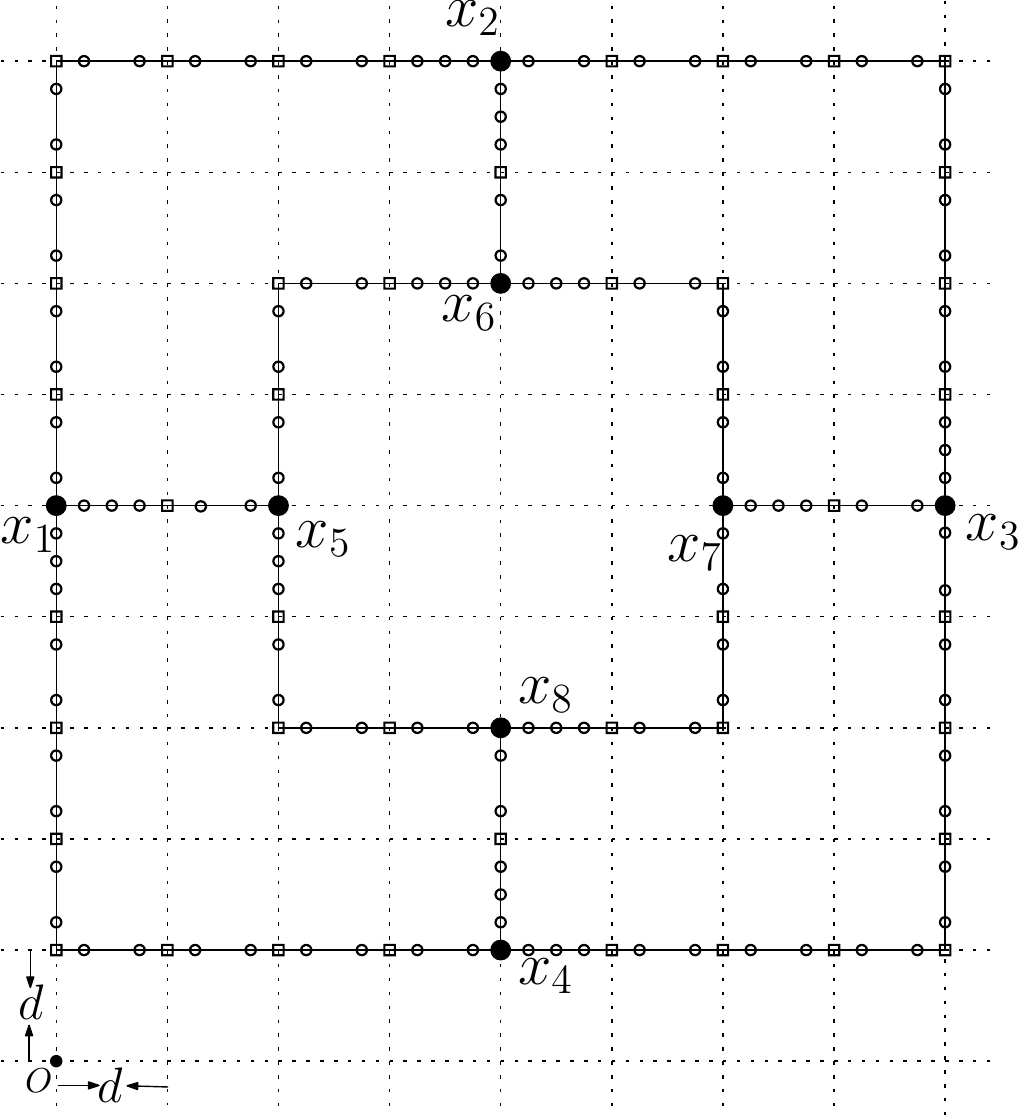}\\
  {(c)}
  \end{minipage}
  \begin{minipage}{.45\textwidth}
  \centering
  \hspace{2.5cm}
  \includegraphics[scale=0.85]{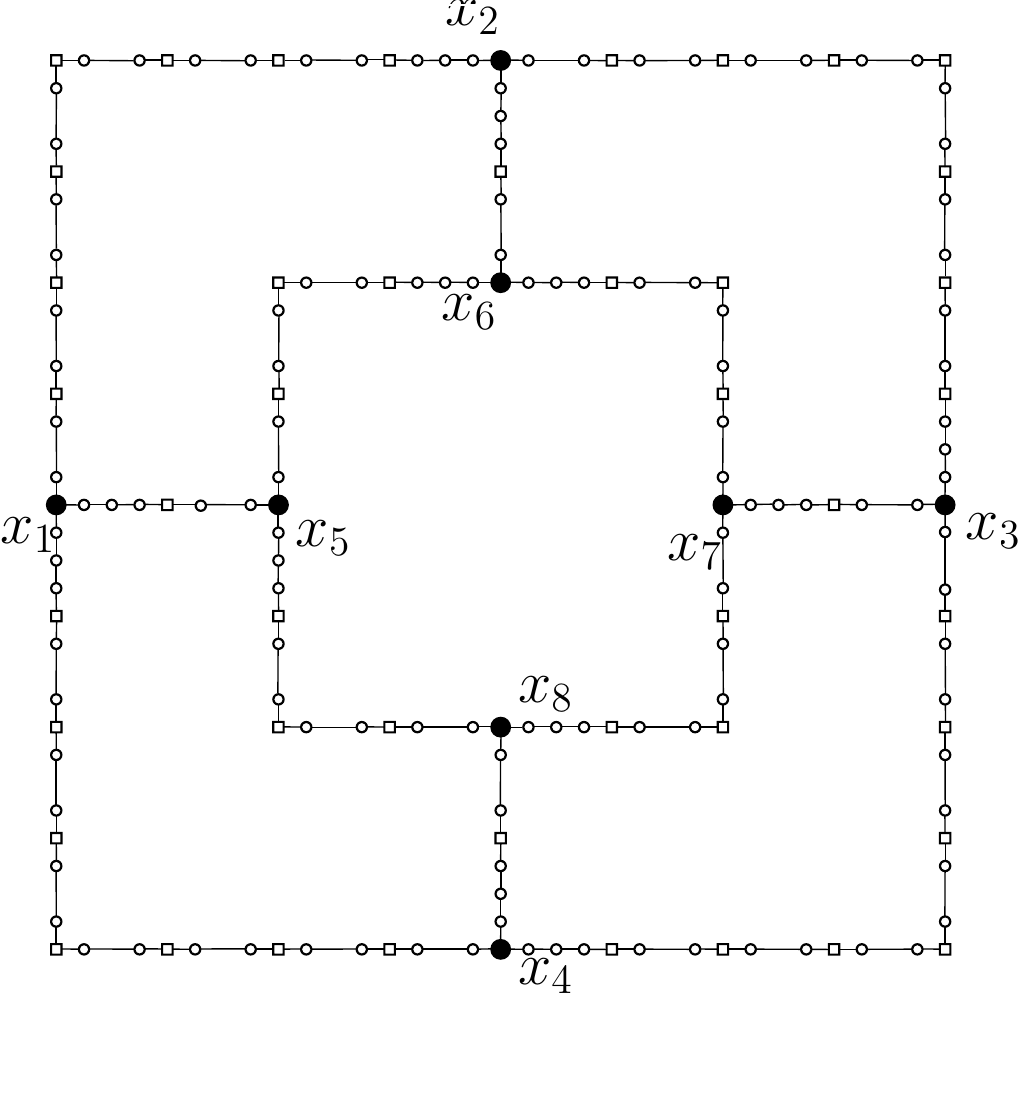}\\
  {(d)}
  \end{minipage}
  \caption{(a) A planar bipartite graph $G$ of maximum degree 3, (b) its embedding $G'$ on a grid of cell size
  $3\times 3$, (c) adding of extra points to $G'$, (d) the obtained UDG $G''$.}\label{fig:embedding}
  \end{figure}
  \begin{proof}
   Lemma \ref{embedding} suggests that, any planar graph $G$ of maximum 
degree 4 can be embedded on a grid in the plane so that; 
\begin{enumerate}
 \item Each vertex $v_i$ of $G$ is associated with a point with integer coordinates in the plane.
 \item An edge of $G$ is represented as a sequence of alternating 
 horizontal and/or vertical line segments drawn 
 on the grid lines. For example, see edges $(v_1,v_4)$ or $(v_2,v_6)$ in Figure. \ref{fig:embedding}(a). 
 The edge $(v_1,v_4)$ is drawn as a sequence of four vertical line segments and four horizontal 
 line segments in the embedding (see $(x_1,x_4)$ in Figure. \ref{fig:embedding}(b)). Similarly, 
 the edge $(v_2,v_6)$ is drawn as a sequence of two vertical line segments in the embedding.
 \item No two sets of consecutive line segments correspond to two distinct 
 edges of $G$ have a common point unless the edges incident at a vertex in $G$. 
\end{enumerate}
\end{proof}  

This kind of embedding is known as \emph{orthogonal drawing} of a graph. Biedl and 
Kant \cite{biedl1998} gave a linear time algorithm that produces an orthogonal 
drawing of a given graph with the property that the number of bends along each edge is at most 2.
 
Let $G=(V,E)$ be an arbitrary instance of D$d$IS for planar bipartite graph having maximum 
degree three and girth at least $d$. Let $V=\{v_1,v_2, \ldots ,v_n\}$ and $E=\{e_1,e_2,\ldots,e_m\}$. 
We denote the shortest path distance between two vertices $v_i$ and $v_j$ in $G$ by 
$d_G(v_i,v_j)$ and $v_i, v_j$ are said to be distance-$d$ independent in $G$ if and only if $d_G(v_i,v_j) \geq d$.

We construct a graph $G'=(V',E')$ by embedding $G$ on a grid in which each cell is of size $d \times d$ 
as described in Corollary \ref{cor:graph_embed}. 
Let $V'=\{x_1,x_2,\ldots,x_n\}$ be the vertices in $G'$ corresponding to $v_1,v_2,\ldots,v_n$ 
in $G$. The coordinate of each member in $V'$ is of the form $(d.i, d.j)$, where $i,j$ are 
integers, and shown using big dots in Figure. \ref{embedding}(c). Let $\ell$ be the number of line 
segments used for drawing all the edges in $G'$. To make $G'$ a UDG we introduce a set $Y$ of 
extra points on the segments used to draw the edges of $G'$. Thus, the set of points in $V'$ 
(hereafter denoted by $X$) together with $Y$ form a UDG $G''$.
Let $(x_i,x_j)$ be an edge in $G'$ 
corresponding to the edge $(v_i,v_j)$ in $G$ and has $\ell'$ grid segments. We introduce $\ell'd$ 
points on the polyline denoting the edge $(x_i,x_j)$ in such a way that (i) after adding the 
extra points, the length of the path from $x_i$ to $x_j$ is exactly $\ell'd+1$, (ii) a point is placed at
each of the co-ordinates of the form $(d \cdot i, d \cdot j)$, where $i$ and $j$ are integers (shown using small squares in Figure. 
\ref{embedding}(c)), (iii) the segment adjacent to the point $x_i$ or $x_j$ contains exactly $d$ newly 
added points and other segments on the path from $x_i$ to $x_j$ have $d-1$ points 
(shown using small circles in Figure. \ref{embedding}(c)), and 
(iv) only consecutive points on the path $x_i \rightsquigarrow x_j$ are within unit distance 
apart. 

Now, we construct a UDG $G''=(V'',E'')$, where  $V'' = X \cup Y$, and  
$E'' = \{(p_i,p_j) \mid p_i,p_j \in V'' \text{ and } d(p_i,p_j) \leq 1\}$. 
Here $|V''| = |X| + |Y| = n + \ell d$, and $|E''| = \ell d + m$, where $m$ 
is the number of edges in $G$. Thus, $G''$ can be constructed in polynomial time.
We will use the term {\it $d$-grid} for a grid whose each cell is of size $d \times d$.

The notion of points and vertices of $G''$ are used interchangeably in the rest of the paper.
Unless otherwise specified, the term distance refers to graph-distance.
\begin{lemma}\label{lemma_2}
Any D$d$IS of $G''$ contains at most $\ell$ points from $Y$.
\end{lemma}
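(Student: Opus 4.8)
The plan is to charge the points of $Y$ used by any D$d$IS to the individual polylines (one per edge of $G$) and to bound each polyline's contribution by its number of grid segments. Recall that the polyline drawn for an edge $(v_i,v_j)$ of $G$ uses $\ell'$ grid segments and carries exactly $\ell' d$ points of $Y$; together with its endpoints $x_i,x_j\in X$ these points form a path in $G''$. Since $\ell=\sum \ell'$, where the sum ranges over all edges of $G$, it suffices to show that a D$d$IS can contain at most $\ell'$ of the $\ell' d$ interior $Y$-points of each such polyline, and then to add these bounds over all edges.

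First I would fix one polyline and label its vertices $x_i=u_0,u_1,\dots,u_{\ell' d},u_{\ell' d+1}=x_j$ in the order they appear, so that the points of $Y$ on it are exactly $u_1,\dots,u_{\ell' d}$ sitting at positions $1,\dots,\ell' d$, with consecutive $u$'s joined by an edge of $G''$ (construction property (iv)). For any two such points $u_a,u_b$ selected into a D$d$IS, the subpath $u_a,u_{a+1},\dots,u_b$ is a walk of length $|a-b|$ in $G''$, so $d_{G''}(u_a,u_b)\le |a-b|$; combined with the defining inequality $d_{G''}(u_a,u_b)\ge d$ of a D$d$IS this yields $|a-b|\ge d$. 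Thus the selected positions are pairwise at least $d$ apart, so $k$ of them span an index range of at least $(k-1)d$, which must fit inside $[1,\ell' d]$; hence $(k-1)d\le \ell' d-1$, and therefore $k\le \ell'$.

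Finally I would argue that the $Y$-points of distinct polylines are disjoint: by the orthogonal embedding (Corollary \ref{cor:graph_embed} and property (iii) of the drawing) the only points two distinct edge-polylines can share are the $X$-vertices at a common endpoint in $G$, never a point of $Y$. Hence the $Y$-points chosen by a D$d$IS split cleanly across the polylines, and summing the per-polyline bound gives at most $\sum \ell'=\ell$ points of $Y$ in total. The only step needing care is the use of $d_{G''}(u_a,u_b)\le |a-b|$: even if some chord of $G''$ shortened the true graph distance below the along-polyline distance, this could only strengthen the constraint $|a-b|\ge d$ and so cannot increase the count; I expect verifying this monotonicity, together with the disjointness of the $Y$-labels across polylines, to be the only genuinely delicate part of the argument.
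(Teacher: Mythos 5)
Your argument is correct and is essentially the paper's: both rest on the observation that consecutive $Y$-points along an edge-polyline are adjacent in $G''$, so any two $Y$-points selected into a D$d$IS must sit at least $d$ apart in polyline order, which packs at most one per grid segment. The only difference is granularity --- the paper charges at most one $Y$-point directly to each of the $\ell$ segments (each carrying only $d$ or $d-1$ points of $Y$), whereas you run the index-packing count over an entire $\ell'$-segment polyline and then sum $\sum \ell' = \ell$; the two counts coincide.
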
 
\begin{proof}
For each segment in the $d$-grid used to draw $G'$, the number of points of $Y$ appearing on it 
is $d$ or $d-1$. Thus, each segment may contain at most one point from $Y$ in the  D$d$IS of $G''$.
In particular, if two end-points of a segment $\eta$ of the $d$-grid (that are vertices of $G''$) are chosen 
in D$d$IS, then no point of $Y$ lying on $\eta$ will be chosen. Now, the result follows from the fact 
that $\ell$ many segments of the $d$-grid are used to draw $G'$.  
\end{proof} 
 
\begin{lemma}\label{lem:claim}
$G$ has a D$d$IS of cardinality at least $k$ if and only 
if $G''$ has a D$d$IS of cardinality at least $k+\ell$.
\end{lemma}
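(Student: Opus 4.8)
The plan is to establish the two directions separately; together they are equivalent to the single identity $\alpha(G'')=\alpha(G)+\ell$, where I write $\alpha(\cdot)$ for the maximum cardinality of a D$d$IS. Throughout I would lean on one structural observation: by construction $G''$ is a \emph{subdivision} of $G$, in which every edge $e=(v_i,v_j)$ is replaced by an internally disjoint polyline of graph-length $\ell'_e d+1$ joining $x_i$ and $x_j$, and by property (iv) together with the third embedding property of Corollary~\ref{cor:graph_embed} these polylines meet only at the vertices of $X$. Hence every graph-distance in $G''$ is realized along the polylines; in particular any two distinct vertices of $X$ lie at distance at least $d+1$.

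For the forward implication, suppose $I$ is a D$d$IS of $G$ with $|I|\ge k$. I would select the copies $\{x_i : v_i\in I\}$ and augment them by exactly one point of $Y$ on each of the $\ell$ grid-segments, reaching total size $|I|+\ell\ge k+\ell$; the task is to make this augmentation a D$d$IS. On each edge $e$ I place its $\ell'_e$ points at graph-distances $d,2d,\dots,\ell'_e d$ from a suitably chosen reference endpoint, so that they are pairwise at distance at least $d$ and at distance at least $d$ from any selected endpoint. The only interaction left to control is between points lying near a common \emph{unselected} vertex $v$ but coming from different incident edges. Two facts resolve this: because $I$ is a D$d$IS, $v$ has at most one neighbor in $I$ (two such neighbors would be at $G$-distance $2<d$), so at most one incident edge is forced to place a point at distance $1$ from $v$; and because $G$ has girth at least $d$ and maximum degree $3$, the remaining incident polylines are long enough and mutually far enough that their nearest points can be pushed to an appropriate distance from $v$ while still fitting $\ell'_e$ points per edge. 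A short case check—whose tight instances occur precisely when $d=3$, the larger values being ruled out by the girth bound—shows the augmentation is always feasible.

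For the reverse implication, let $I''$ be a D$d$IS of $G''$ with $|I''|\ge k+\ell$. By Lemma~\ref{lemma_2} at most $\ell$ of its points lie in $Y$, so $|I''\cap X|\ge k$. Writing $I_X=\{v_i : x_i\in I''\}$, I would extract from $I_X$ a genuine D$d$IS of $G$ of size at least $|I''|-\ell\ge k$. The subtlety is that $I_X$ itself need not be a D$d$IS: since every edge of $G$ becomes a polyline of length at least $d+1$, two $X$-vertices whose originals are at $G$-distance $<d$ may both lie in $I''$. The repair is a charging argument. Whenever two vertices of $I_X$ are at $G$-distance less than $d$, the short $G$-path between them—unique because the girth is at least $d$—corresponds in $G''$ to a run of polylines on which the D$d$IS condition forbids the full quota of $Y$-points, exhibiting at least one segment carrying no point of $I''$. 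I would delete a minimum set of $X$-vertices meeting all such conflicts and inject these deletions into distinct point-free segments; since the number of point-free segments is exactly the deficit $\ell-|I''\cap Y|$, this deletes at most $\ell-|I''\cap Y|$ vertices, leaving a D$d$IS of $G$ of size at least $|I_X|-(\ell-|I''\cap Y|)=|I''|-\ell\ge k$.

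The main obstacle is this backward charging: making the correspondence ``each excess $X$-vertex is paid for by a distinct point-free segment'' precise and injective, especially for conflicts at $G$-distances $2,\dots,d-1$ rather than merely adjacent pairs. This is exactly where the hypothesis girth $\ge d$ is indispensable, as it forces the short conflicting paths to be internally disjoint and tree-like, so that the empty segments they expose can be charged without double counting. By contrast, the forward augmentation—though it requires a careful placement of points around low-degree unselected vertices—is comparatively routine once the girth and degree bounds exclude the obstructive local configurations.
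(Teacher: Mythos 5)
Your overall architecture matches the paper's (the same $k+\ell$ accounting, the same use of Lemma~\ref{lemma_2} to cap the $Y$-contribution at $\ell$, and the same idea of placing one $Y$-point per segment at multiples of $d$), but in both directions you stop exactly where the work is. The more serious gap is in the sufficiency direction: your argument rests on the claim that a minimum set of deletions resolving all conflicts can be injectively charged to point-free segments, and you yourself flag this as ``the main obstacle'' without carrying it out. It is genuinely not free: distinct conflicting pairs of $I_X$ can have connecting paths that share segments, so the empty segment exposed by one conflict may coincide with the one exposed by another, and the needed Hall-type matching between conflicts and empty segments is never established. (Your supporting claim that girth $\geq d$ makes the short connecting path \emph{unique} is also false --- girth $\geq d$ only excludes cycles of length below $d$, so two vertices at $G$-distance $d'\geq d/2$ may still be joined by two internally disjoint shortest paths.) The paper avoids this charging problem altogether: for each conflicting pair it deletes one of the two offending $X$-vertices from $D'$ and re-places points along the connecting path at every $d$-th position, so that the path still carries $\hat\ell+1$ points and $|D'|$ never decreases; Lemma~\ref{lemma_2} is then invoked once at the end to conclude $|D'\cap X|\geq |D'|-\ell\geq k$. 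If you want to keep your charging route, you must prove the injectivity; otherwise you should switch to a count-preserving local rewiring as the paper does.

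The necessity direction is closer to complete but still only a sketch: the ``short case check'' around an unselected vertex $v$ with neighbours in $I$ is asserted rather than performed, and it is precisely there that one must verify that a point forced to lie at distance $1$ from $x_v$ on one incident polyline is at distance $\geq d$ from the nearest chosen point on each other incident polyline. (The correct reason the bad configuration with two such forced points cannot occur for $d>3$ is the D$d$IS property of $I$ itself --- two vertices of $I$ at $G$-distance at most $3$ --- rather than the girth bound you cite.) The paper's placement by breadth-first traversal from each selected $x_i$ out to $G$-distance $\lfloor d/2\rfloor$, with the $\lceil d/2\rceil$-th point on every untraversed segment, is designed so that this verification reduces to the single inequality $\lfloor d/2\rfloor+\lceil d/2\rceil\geq d$; adopting that placement would let you close this direction cleanly.
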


\begin{proof}
{(\bf Necessity)} Let $G$ have a D$d$IS $D$ of size at least $k$. 
Let $X' = \{x_i \in X \mid v_i \in D\}$. Let $G_{i,\alpha}$ denote a spanning tree of $G$ with the set of vertices 
$V_{i,\alpha} = \{v_j\in V(G)\mid d_G(v_i,v_j)\leq \alpha\}$. For each $v_i \in D$ start traversing from $x_i$ in $G''$.
Let $Y_i = \{y_\theta \in Y \mid d_{G''}(x_i,y_\theta)=d.\theta,\forall  \theta = 1,2,\ldots, \ell'\}$,
where $\ell'$ is the number of segments between $x_i$ and $x_j$, where $x_j$ corresponds to $v_j \in V_{i,\lfloor\frac{d}{2}\rfloor}$. 
Let $Y' = \bigcup_{x_i \in X'} Y_i$. The set $X'\cup Y'$ is a D$d$IS in $G''$.
Observe that there are some segments (corresponding to the edges which are not part 
of any $G_{i,\lfloor\frac{d}{2}\rfloor}$) that have not been traversed in the above process.
Now, we consider every such segment and choose the $\lceil\frac{d}{2}\rceil$-th point on it. Let $Y''$ be the set of chosen points. Needless to say, $Y''$ is also a D$d$IS of $G''$. 

By the way, we obtained the sets $Y'$ and $Y''$, there exists no pair of points 
$y_\alpha \in Y'$ and $y_\beta \in Y''$ such that $d_{G''}(y_\alpha,y_\beta) < d$. On the contrary, suppose $d_{G''}(y_\alpha,y_\beta) < d$. Implies, 
$y_\alpha$ and $y_\beta$ are from two segments, each having one, incident at some $x_j \in X\setminus X'$,
where $x_j$ corresponds to a leaf $v_j \in G_{i,\lfloor\frac{d}{2}\rfloor}$.
Note that $d_{G''}(y_\alpha,x_j) \geq \lfloor\frac{d}{2}\rfloor$ and $d_{G''}(x_j,y_\beta) \geq \lceil\frac{d}{2}\rceil$. Implies, $d_{G''}(y_\alpha,y_\beta) \geq d$, arrived at a contradiction.
Let $D'=X'\cup Y'\cup Y''$. As per our selection method each segment contributes
 one point in $Y'\cup Y''$. Thus, $|D'|\geq k+\ell$ since
 $|X'|\geq k$ and $|Y'\cup Y''|=\ell$.

\noindent
{(\bf Sufficiency)}
Let $G''$ have a D$d$IS $D'$ of cardinality at least $k+\ell$ and  
$D = \{v_i \in V \mid x_i \in D' \cap X\}$. Observe that $|D' \cap Y| \leq \ell$ 
(due to Lemma \ref{lemma_2}); so $|D|\geq k$. 
We shall show that, by suitably modifying $D$ (i.e., by removing or changing some of
the vertices in $D$), we get at least $k$ points from $X$ such that the set of
corresponding vertices in $G$ is a D$d$IS of $G$.
Consider a pair of vertices $v_i,v_j \in D$ such that  
$d_G(v_i,v_j)=d'  < d$ in $G$ (if there is no such pair, then $D$ is a D$d$IS of $G$ with $|D| \geq k$). 
Let $x_i,x_j\in D'\cap X$ be the vertices in $G''$ corresponding to $v_i,v_j \in D$, 
respectively. Also, let $\hat\ell$ be the number of segments on the path ${x_i \rightsquigarrow x_j}$ corresponding to the shortest path ${v_i \rightsquigarrow v_j}$. 
As each segment can contribute at most one point (from $Y$) in any solution, 
$D'$ can contain at most $\hat\ell+1$ points (including $x_i$ and $x_j$) from the path ${x_i \rightsquigarrow x_j}$. 

As per our construction of $G''$, the distance between $x_i$ and $x_j$ is $\hat\ell.d + 1$ in $G''$.
We update the solution along the  path ${x_i \rightsquigarrow x_j}$ as follows:
delete $x_j$ and other points of the path from $D'$.
Start traversing the path from $x_i$ and add every ${(d.\theta)}-$th point to $D'$, where $1 \leq \theta \leq \hat\ell$. 
The last point chosen is the point which is $d'$ distance away from $x_j$. 
The number of points in $D'$ on the path ${x_i \rightsquigarrow x_j}$ is $\hat\ell+1$.
Thus, $D'$ a new feasible solution in $G''$ whose size is at least as that of the 
previous solution. Observe that, the points in $D'$ that are on the segments outside the 
path ${x_i \rightsquigarrow x_j}$ will not be effected 
by the newly chosen points, and $|D'|\geq k+\ell$. 

We repeat the same for all pair of points in $D$ 
for which the shortest path distance in $G$ is less than $d$. Therefore,   
$|D|\geq k$ (from Lemma \ref{lemma_2}) and $D$ is a distance-$d$ 
independent set in $G$. 
\end{proof}

\begin{theorem}
GD$d$IS problem is NP-complete for unit disk graphs.
\end{theorem}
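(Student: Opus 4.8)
The plan is to derive NP-completeness of the decision problem D(GD$d$IS) by combining the already-established membership in NP with a polynomial-time many-one reduction, and then to read off NP-hardness of the optimization problem GD$d$IS. First I would invoke Lemma~\ref{lem:np}, which certifies that D(GD$d$IS) lies in NP: given a candidate subset $V' \subseteq V$, an all-pairs shortest path computation verifies in polynomial time both that $|V'| \geq k$ and that every pair of vertices in $V'$ is at graph-distance at least $d$. Hence the only remaining work is to establish NP-hardness.

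For hardness I would reduce from the \emph{distance-$d$ independent set problem on planar bipartite graphs of maximum degree $3$ and girth at least $d$}, which is known to be NP-hard for $d \geq 3$ (following \cite{eto2014}, whose reduced instances already have girth at least $d$). Given such an instance $(G,k)$, the reduction is exactly the construction set up in this section: embed $G$ orthogonally on a $d$-grid via Corollary~\ref{cor:graph_embed} to obtain $G'$, and then insert the point set $Y$ along the edge polylines under rules (i)--(iv) to produce the unit disk graph $G'' = (V'',E'')$ with $|V''| = n + \ell d$ and $|E''| = \ell d + m$. Because the embedding of Lemma~\ref{embedding} occupies only $O(|V|^2)$ area, the total number $\ell$ of unit grid segments used to draw the edges is polynomial in $|V|$; since $d$ is a fixed constant, both $G''$ and the target value $k + \ell$ are computable in time polynomial in the size of $G$.

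The correctness of the reduction is precisely Lemma~\ref{lem:claim}: $G$ admits a D$d$IS of cardinality at least $k$ if and only if $G''$ admits a D$d$IS of cardinality at least $k + \ell$. Thus $(G'', k+\ell)$ is a yes-instance of D(GD$d$IS) exactly when $(G,k)$ is a yes-instance of the source problem. Combining NP-membership with this polynomial-time reduction yields NP-completeness of D(GD$d$IS), and hence NP-hardness of the optimization problem GD$d$IS, for every fixed $d \geq 3$.

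The genuinely delicate part of this argument is not the final assembly but the correctness equivalence folded into Lemma~\ref{lem:claim}, which the rest of the section is devoted to establishing. Its subtlety is two-directional: in the \emph{necessity} direction one must route the selected $Y$-points so that exactly one survives per grid segment while preserving pairwise distance $\geq d$ across segments meeting at a shared unselected vertex $x_j$ (the estimates $d_{G''}(y_\alpha,x_j) \geq \lfloor\frac{d}{2}\rfloor$ and $d_{G''}(x_j,y_\beta) \geq \lceil\frac{d}{2}\rceil$ are what guarantee this); and in the \emph{sufficiency} direction one must re-normalize an arbitrary optimal solution of $G''$, which may place its $Y$-points irregularly along a short path $x_i \rightsquigarrow x_j$, into one that certifies an honest D$d$IS back in $G$, using Lemma~\ref{lemma_2} to cap the $Y$-contribution at $\ell$ so that at least $k$ points of $X$ remain. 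Given those lemmas, the theorem itself is immediate.
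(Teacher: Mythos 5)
Your proposal is correct and follows essentially the same route as the paper: NP-membership via Lemma~\ref{lem:np}, a polynomial-time reduction from the distance-$d$ independent set problem on planar bipartite graphs of maximum degree $3$ and girth at least $d$ using the grid embedding and the construction of $G''$, and correctness via Lemma~\ref{lem:claim}. The paper's proof is just a one-line citation of those two lemmas; you have merely spelled out the same assembly, including the (correct) observation that $\ell$ is polynomially bounded so the reduction runs in polynomial time.
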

\begin{proof}
Follows from Lemma \ref{lem:np} and Lemma \ref{lem:claim}.
\end{proof}
\section{Approximation Algorithm}\label{sec:aa}

In this section, we discuss a simple 4-factor approximation algorithm for 
the GD$d$IS problem, for a fixed constant $d \geq 3$.
Let ${\cal R}$ be the rectangular region containing the point set $P$ (disk centers). 
From now on we deal with the point set $P$ rather than the UDG $G$ defined on $P$. 
We partition ${\cal R}$ into disjoint horizontal strips $H_1,H_2,\ldots,H_\nu$, 
each of width $d$ ($H_\nu$ may be of width less than $d$). The basic idea behind 
our algorithm is as follows:
\begin{itemize}
\item[(i)] Compute a feasible solution for each non-empty strip $H_i~(1 \leq i \leq \nu)$ 
independently as stated below:
\begin{itemize}
\item[] We split the horizontal strip into squares of size $d \times d$. In each square, 
we compute an optimum solution of the GD$d$IS problem defined by the points set inside that square. We consider 
all odd numbered squares and compute  the union $S^i_{odd}$ of  
optimum solutions of these squares. Similarly, the union $S^i_{even}$ of optimum solutions 
of all even numbered squares are also computed. Each of these is a feasible solution of 
GD$d$IS problem in $H_i$ as the minimum distance 
between each pair of considered squares is at least $d$. We choose $S^{i}=S^i_{even}$ or $S^i_{odd}$ 
such that $|S^i| = \max(|S^i_{even}|,|S^i_{odd}|)$ as the desired feasible solution for the strip $H_i$. 
\end{itemize} 

\begin{figure}[t]
\centering
\includegraphics[scale=0.9]{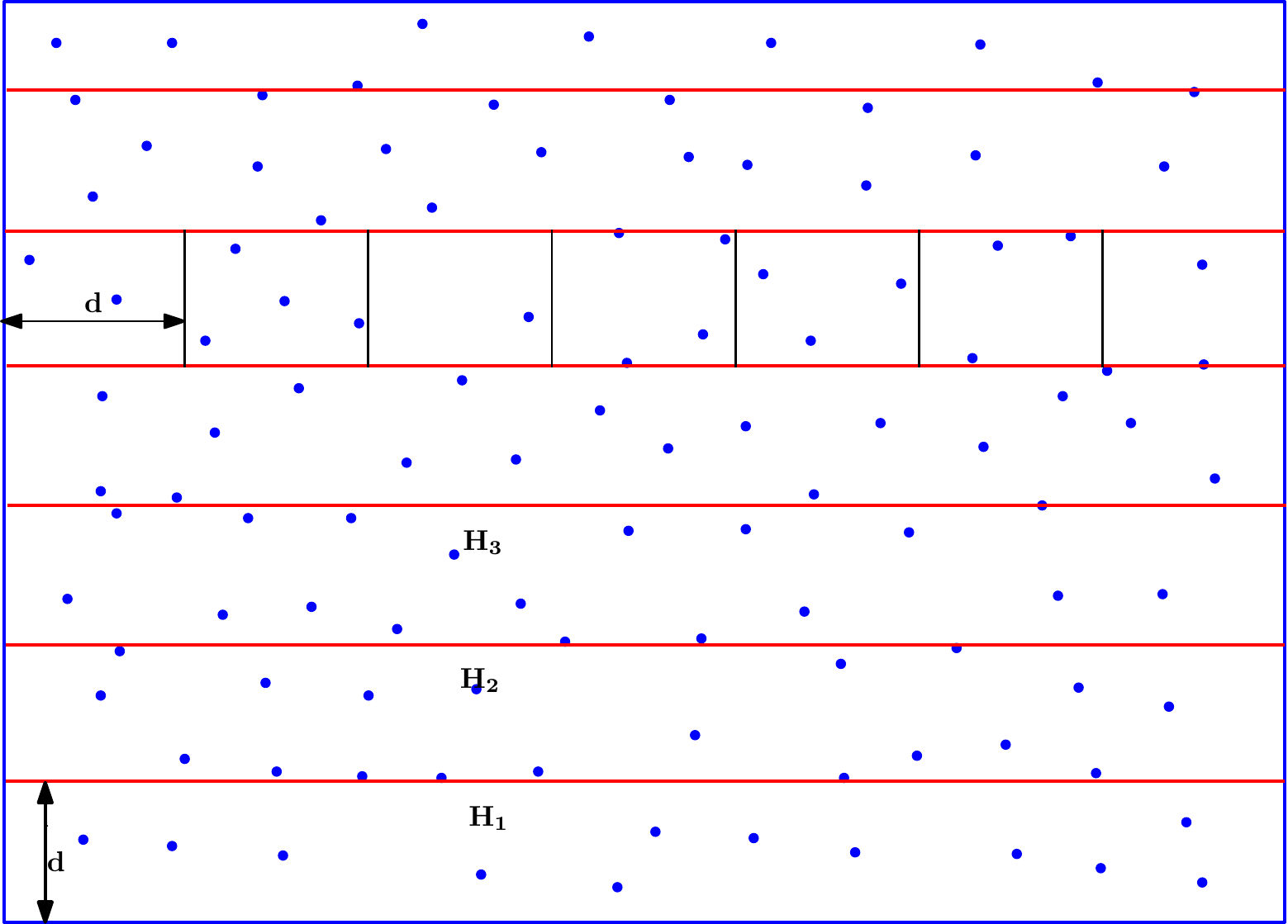}
\caption{A horizontal and vertical partition of the strips of width $d$} \label{fig:apx}
\end{figure}

\item[(ii)] Compute $S_{even}$ and $S_{odd}$, which are the union of the solutions of 
even and odd strips respectively, and 
\item[(iii)] Report $S^* = S_{even}$ or $S_{odd}$ such that 
$|S^*| = \max(|S_{even}|, |S_{odd}|)$ as a solution to the GD$d$IS problem. 
\end{itemize}
Note that, thus, the solution 
obtained in the above process is a feasible solution for the entire problem. 

\begin{lemma}
If $OPT$ is an optimum solution for the GD$d$IS problem, then 
$\max(|S_{even}|, |S_{odd}|) \geq \frac{1}{4}|OPT|$. 
\end{lemma}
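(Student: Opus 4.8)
The plan is to establish the $\frac14$ bound through two nested averaging arguments that mirror the two levels of partitioning used by the algorithm (horizontal strips, and then $d\times d$ squares within each strip). The entire argument rests on a single monotonicity fact, so I would isolate that first and treat the rest as bookkeeping.

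\emph{Key observation.} For each $d\times d$ square $Q$, let $\mathrm{OPT}(Q)$ be the optimum the algorithm computes on the subproblem induced by the points inside $Q$. I claim $|\mathrm{OPT}(Q)|\ge |OPT\cap Q|$. Indeed, $OPT\cap Q$ is itself a feasible D$d$IS for the subproblem on $Q$: the UDG on the points of $Q$ is an induced subgraph of $G$, so graph distances inside it can only exceed the corresponding distances in $G$; hence every pair in $OPT\cap Q$, being at $G$-distance at least $d$, stays at distance at least $d$ in the subgraph, and optimality of $\mathrm{OPT}(Q)$ gives the inequality. This is the step I expect to be the main (conceptual) obstacle, since everything afterward is routine. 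Here I would also record the companion fact that every edge of $G$ has Euclidean length at most $1$, so Euclidean distance is a lower bound on graph distance; this is what guarantees that the unions taken across squares, or across strips, separated by Euclidean distance at least $d$ are genuinely D$d$IS feasible.

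\emph{First averaging (within a strip).} Because the squares partition $H_i$, the points of $OPT\cap H_i$ are distributed among them without overlap. Summing the Key observation over the squares of $H_i$ and splitting the sum according to odd- and even-indexed squares yields $|S^i_{odd}|+|S^i_{even}|\ge \sum_{Q\subseteq H_i}|OPT\cap Q| = |OPT\cap H_i|$. Since $S^i$ is chosen as the larger of the two halves, $|S^i|=\max(|S^i_{odd}|,|S^i_{even}|)\ge \tfrac12|OPT\cap H_i|$.

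\emph{Second averaging (across strips).} The strips $H_1,\dots,H_\nu$ partition ${\cal R}\supseteq P$, so $\sum_i |OPT\cap H_i|=|OPT|$. Summing the strip bound over all $i$ and again splitting into odd and even strips gives $|S_{odd}|+|S_{even}|=\sum_i|S^i|\ge \tfrac12\sum_i|OPT\cap H_i|=\tfrac12|OPT|$. Applying $\max(a,b)\ge\tfrac12(a+b)$ a final time, $\max(|S_{even}|,|S_{odd}|)\ge\tfrac12\bigl(|S_{even}|+|S_{odd}|\bigr)\ge\tfrac14|OPT|$, as required. The two factors of $\tfrac12$ lost by keeping only the better of the odd/even halves at each level are exactly what combine to give the factor $\tfrac14$.
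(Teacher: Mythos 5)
Your proof is correct and follows essentially the same route as the paper: two nested odd/even averaging arguments, each costing a factor of $2$, with feasibility of the restricted sets justified by the fact that distances in an induced subgraph can only increase. The only (cosmetic) difference is that you restrict the global optimum directly to squares and strips, whereas the paper introduces intermediate per-strip optima $OPT^i$ and chains the two inequalities $|OPT|\le 2|OPT_*|$ and $|OPT^i|\le 2|S^{i_*}|$.
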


\begin{proof}
Let us denote by $OPT^i$ an optimum solution of the non-empty strip $H_i$.
Since any two even (resp. odd) numbered strips, say $H_i$ and $H_j$, 
are at least $d$ distance apart, the feasible solutions computed in 
any method for $H_i$ and $H_j$ are independent\footnote{by independent 
we mean for any $p_i \in H_i \cap P$ and $p_j \in H_j \cap P$, $p_i$ and $p_j$ are 
distance-$d$ independent and also, $OPT^i \cap OPT^j = \emptyset$}. 
Thus, both $OPT_{even}=\bigcup\limits_{i \texttt{ is even}} OPT^i$ 
and $OPT_{odd}=\bigcup\limits_{i \texttt{ is odd}} OPT^i$
are feasible solutions for the given GD$d$IS problem. 

Note that $|OPT| \leq |OPT_{even}|+|OPT_{odd}| \leq 2 |OPT_*|$, where 
$OPT_*=OPT_{even}$ if $|OPT_{even}| > |OPT_{odd}|$; otherwise $OPT_*=OPT_{odd}$.

Also, note that we have not computed $OPT^i$ for the strip $H_i$. Instead, 
we have computed $S^i_{even}$ and $S^i_{odd}$ by splitting the strip $H_i$ 
into $d\times d$ squares, and accumulating the  optimum solutions of even and 
odd numbered squares separately. By the same argument as stated above, 
we have $|OPT^i| \leq 2|S^{i_*}|$, where $S^{i_*}=S^i_{even}$ if $|S^i_{even}| \geq 
|S^i_{odd}|$; otherwise $S^{i_*}=S^i_{odd}$.

Combining both the inequalities, we have 
$|OPT| \leq 4\max(|S_{even}|, |S_{odd}|)$. 
\end{proof}

\subsection{Solving a $d \times d$ square optimally} \label{dd}
Let ${\cal Q} \subseteq P$ be the set of points inside a $d \times d$ square 
$\chi$, and $G_\chi$ be the UDG defined on ${\cal Q}$. Let $C_1,C_2,\ldots,C_l$ be 
the connected components of $G_\chi$. Without loss of generality we assume that any two 
components in $G_\chi$ are at least $d$ distance apart\footnote{if there 
are two components having distance less than $d$ in $G$, then  
we can view them as a single component} in $G$.

\begin{lemma} \label{lem:lem1}
The worst case number of different connected components in $G_\chi$ is $O(d^2)$.
\end{lemma}
\begin{proof}
Partition $\chi$ into $O(d^2)$ cells, each of size 
$\frac{\sqrt{3}}{2} \times \frac{1}{2}$. The result follows from the fact that 
the points lying inside each cell are mutually connected.
\end{proof}   

In order to have the worst case size of a D$d$IS 
in $G_\chi$, we need to have an idea about the worst case size of a  
D$d$IS in a connected component in $G_\chi$.

\begin{lemma}\label{lem:comp_size}
 Let $C$ be any component of $G_\chi$. The number of mutually 
 distance-$d$ independent points in $C$ is bounded by $O(d)$.
\end{lemma}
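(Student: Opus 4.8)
The plan is to reduce the statement to a Euclidean packing estimate: I will charge to each chosen point a planar region of area $\Omega(d)$ sitting inside (a slight enlargement of) the $d\times d$ square, arrange that these regions are pairwise disjoint, and then compare total areas. Since the enclosing region has area $O(d^2)$, this forces the number of chosen points to be $O(d)$.

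First I would record two elementary facts that drive everything. Since $d\ge 3$, two mutually distance-$d$ independent points are non-adjacent in $G_\chi$, hence lie at Euclidean distance strictly greater than $1$. Second, if $u_0,u_1,\dots,u_k$ is a \emph{shortest} path in $G_\chi$, then any two vertices $u_a,u_b$ with $|a-b|\ge 2$ are non-adjacent and therefore also at Euclidean distance $>1$; in particular the even-indexed vertices $u_0,u_2,u_4,\dots$ are pairwise at Euclidean distance $>1$. Now let $I=\{p_1,\dots,p_t\}$ be a set of mutually distance-$d$ independent points of $C$ (assume $t\ge 2$, otherwise there is nothing to prove), and set $r=\floor{d/2}-1$. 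For each $p_i$, some other point of $I$ lies at graph distance $\ge d>r$ from $p_i$ inside $C$, so a breadth-first search from $p_i$ produces a shortest path $\pi_i\colon p_i=u^i_0,u^i_1,\dots,u^i_r$ of length $r$. I would then assign to $p_i$ the region $R_i$ equal to the union of the open Euclidean disks of radius $\tfrac12$ centred at the even-indexed vertices of $\pi_i$. By the second fact these centres are pairwise more than $1$ apart, so the roughly $r/2=\Theta(d)$ disks constituting $R_i$ are pairwise disjoint and the area of $R_i$ is $\Omega(d)$.

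The crux of the argument — and the step I expect to be the main obstacle — is showing that the regions $R_1,\dots,R_t$ are pairwise disjoint, since only then does their combined area equal the sum of the individual areas. This is exactly where the dense-cluster difficulty (many vertices crammed into a tiny area, which defeats a naive packing) must be neutralised, and the choice $r=\floor{d/2}-1$ is made precisely for this. Suppose a disk of $R_i$ and a disk of $R_j$ with $i\ne j$ overlapped; then their centres $a\in\pi_i$ and $b\in\pi_j$ are at Euclidean distance $\le 1$, hence $d_{G_\chi}(a,b)\le 1$. Since $d_{G_\chi}(p_i,a)\le r$ and $d_{G_\chi}(p_j,b)\le r$, the triangle inequality for graph distance yields $d_{G_\chi}(p_i,p_j)\le 2r+1=2\floor{d/2}-1\le d-1$, contradicting $d_{G_\chi}(p_i,p_j)\ge d$. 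Hence no two regions overlap.

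To conclude, I would note that every centre lies in $\chi$, so all disks lie within the $d\times d$ square enlarged by $\tfrac12$ on each side, a region of area $(d+1)^2=O(d^2)$. Comparing areas gives $t\cdot\Omega(d)\le O(d^2)$, whence $t=O(d)$, as claimed; for the few small values of $d$ for which $r$ fails to be positive the quantity is a trivial constant, still consistent with the bound $O(d)$. The serpentine component (rows spaced at distance $1$ inside $\chi$, joined into one path of length $\Theta(d^2)$) shows this bound is tight, which is a good sanity check that $O(d)$, rather than the weaker packing bound $O(d^2)$, is the right order.
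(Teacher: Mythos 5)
Your proof is correct, and it takes a genuinely different (and tighter) route than the paper's. The paper encloses $\chi$ in a $3d\times 3d$ square, partitions it into $O(d^2)$ small cells each inducing a clique, and argues that each newly added independent point ``prohibits'' $\Omega(d)$ of these cells along shortest paths joining the previously chosen points; the crucial fact that these prohibited cell sets are not double-counted across different points is asserted rather than verified, so that argument is essentially a heuristic charging scheme. You instead charge each chosen point $p_i$ an explicit region $R_i$ of area $\Theta(d)$ --- the disjoint union of radius-$\tfrac12$ disks on the even-indexed vertices of a length-$\bigl(\lfloor d/2\rfloor-1\bigr)$ shortest-path prefix --- and prove pairwise disjointness of the $R_i$ cleanly via the graph triangle inequality and the threshold $2r+1\le d-1$, after which the area comparison against $(d+1)^2$ closes the argument. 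This buys full rigor, and it also stays entirely inside $G_\chi$ (no need for the $3d\times 3d$ enlargement, since you only use path prefixes of length below $d/2$ emanating from each chosen point rather than whole paths between pairs); the paper's version, in exchange, reuses its clique-cell decomposition elsewhere (e.g.\ in Lemma 4.2). Two trivial caveats: you should make sure $t\ge 2$ is handled (you do), and in your tightness example the serpentine rows must be spaced slightly \emph{more} than $1$ apart, else vertical edges create shortcuts --- but that aside is not part of the proof.
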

\begin{proof}
Consider the square region $\chi'$ of size $3d \times 3d$ whose each side 
is $d$ distance away from the corresponding side of $\chi$. Let $Q' 
\subseteq P$ be the subset of points in $\chi'$. Partition $\chi'$ into 
cells of size $\frac{1}{2\sqrt{2}} \times \frac{1}{2\sqrt{2}}$. Thus, the number of cells 
in $\chi'$ is $O(d^2)$, and in each cell the unit disks centered at the 
points inside that cell are mutually connected. Let a pair of points 
$p_i,p_j \in C$ which are distance-$d$ independent. The shortest path 
$p_i\rightsquigarrow p_j$ between $p_i$ and $p_j$ entirely lies inside $\chi'$. If 
there is another point $p_k\in C$ which is distance-$d$ independent with 
both $p_i$ and $p_j$, then $p_k$ is at least distance $\frac{d}{2}$ away 
from each point on the path $p_i \rightsquigarrow p_j$. 
Thus, the path from $p_k$ to any point on the path $p_i \rightsquigarrow p_j$  
occupies at least $O(d)$ cells, and none of the points from these cells 
are distance-$d$ independent to all the points $p_i,p_j,p_k$. 
Thus, the addition of each point in the set of mutually distance-$d$ 
independent points in $\chi$ prohibits points in $O(d)$ cells to belong in 
that set, and hence the lemma follows.
\end{proof}

\begin{lemma}\label{lem:run_time}
 An optimal (i.e., maximum size) D$d$IS in $\chi$ can be computed in $d^2n^{O(d)}$ time.
\end{lemma}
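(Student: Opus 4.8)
The goal is to show that an optimal D$d$IS in a single $d\times d$ square $\chi$ can be computed in $d^2 n^{O(d)}$ time. The plan is to combine the two structural lemmas already proved (Lemma~\ref{lem:lem1} and Lemma~\ref{lem:comp_size}) with an exhaustive-search procedure over candidate solutions. First I would recall that the UDG $G_\chi$ on the points $\mathcal{Q}\subseteq P$ inside $\chi$ splits into connected components $C_1,\ldots,C_l$ that are pairwise at distance at least $d$ in $G$. Because distinct components are $\geq d$ apart, any choice of points from one component is automatically distance-$d$ independent from any choice of points in another; hence an optimal D$d$IS of $\chi$ is simply the union of optimal D$d$ISs of the individual components, and I can solve each component separately and concatenate.

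The key step is therefore to bound the cost of solving one component $C$ optimally. By Lemma~\ref{lem:comp_size}, any D$d$IS inside a component has size $O(d)$, so an optimal solution selects at most $c\,d$ points for some constant $c$. The natural approach is brute force: enumerate every subset of $C$ of size at most $c\,d$, check feasibility (all pairwise graph-distances $\geq d$, verified via an all-pairs shortest-path computation as in Lemma~\ref{lem:np}), and keep the largest feasible one. Since $|C|\leq n$, the number of such subsets is $\sum_{t=0}^{cd}\binom{n}{t}=n^{O(d)}$, and each feasibility check costs polynomial time, which is absorbed into the $n^{O(d)}$ factor. This yields an $n^{O(d)}$-time algorithm per component.

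Finally I would assemble the running time over all components. By Lemma~\ref{lem:lem1} there are $O(d^2)$ components, and processing each costs $n^{O(d)}$ time; summing gives $O(d^2)\cdot n^{O(d)} = d^2 n^{O(d)}$, matching the claimed bound. The shortest-path distances needed for the feasibility tests can be precomputed once on $G_\chi$ (or on the relevant $3d\times 3d$ enlarged region $\chi'$ used in Lemma~\ref{lem:comp_size}, since shortest paths realizing distance-$d$ independence stay inside $\chi'$), so the distance oracle is not a bottleneck.

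The part requiring the most care is justifying that the enumeration really captures an \emph{optimal} solution rather than merely a feasible one, and that the decomposition into components loses nothing. The former follows because the optimum has size $O(d)$ by Lemma~\ref{lem:comp_size}, so it is among the enumerated subsets; the latter follows from the $\geq d$ separation between components, which guarantees the union of per-component optima is both feasible and maximum. I would also note the subtlety that distances must be measured in the full graph $G$ (or at least in $\chi'$) rather than within the square $\chi$ alone, but Lemma~\ref{lem:comp_size} already establishes that the relevant shortest paths are confined to $\chi'$, so restricting attention to this bounded region is sound.
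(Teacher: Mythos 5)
Your proposal is correct and follows essentially the same route as the paper: decompose $G_\chi$ into connected components, use the $O(d)$ bound from Lemma~\ref{lem:comp_size} to justify enumerating all candidate subsets of size $O(d)$ per component ($n^{O(d)}$ of them), verify feasibility via a precomputed all-pairs shortest-path matrix on the enlarged region, and combine over the $O(d^2)$ components. The only cosmetic difference is in the bookkeeping of the $d^2$ factor (the paper charges $O(d^2)$ to each per-tuple feasibility check rather than to the number of components), which does not change the argument.
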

\begin{proof}
We first construct a weighted complete graph $G'=(V',E')$ where $V'$ 
corresponds to the points in $Q'$. For each edge $(v_i,v_j)\in E'$, the 
weight $w(v_i,v_j)=1$ if $d(p_i,p_j) \leq 1$; otherwise $w(v_i,v_j)=\infty$.
 Next, we compute the 
all pair shortest paths between every pair of vertices in $G'$, and store them in a matrix $M$.  

By definition, intersection of distance-$d$ 
 independent sets of any two components is empty. Thus, a D$d$IS 
 of maximum size in $G_\chi$ can be computed by considering 
all  components of the UDG $G_\chi$, and computing 
the union of the D$d$IS of maximum sizes of those 
 components. We consider each  component of  
$G_\chi$ separately. For each component $C$, we consider all possible 
tuples of size at most $O(d)$ (due to Lemma \ref{lem:comp_size})  and for each 
tuple, we check whether they form a D$d$IS or not by 
consulting the matrix $M$ in $O(d^2)$ time. Thus, a maximum size 
D$d$IS in $C$ can be computed in $O(d^2|C|^{O(d)})$ 
time and the total time for computing a maximum size D$d$IS
in $G_\chi$ is $O(d^2 \sum\limits_{C \in G_{\chi}} |C|^{O(d)})$ = 
$d^2 n_\chi^{O(d)}$, where $n_\chi = \sum\limits_{C \in G_{\chi}} |C|$, the number of vertices in $G_\chi$.
\end{proof}
%
%


\begin{theorem}
 Given a set $P$ of $n$ points in the plane, 
 we can always compute a D$d$IS of size at least 
 $\frac{1}{4}|OPT|$ in $d^2n^{O(d)}$ time, where $|OPT|$ is the maximum cardinality of a GD$d$IS.
\end{theorem}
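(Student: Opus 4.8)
The plan is to derive the theorem by gluing together the two facts already established for the algorithm of Section \ref{sec:aa}: the quality bound on the reported set and the per-square running-time bound of Lemma \ref{lem:run_time}. First I would confirm feasibility of the output $S^*$. Within a strip, the even-numbered (resp. odd-numbered) $d\times d$ squares are mutually at least $d$ apart, so the union of their individual optima is a valid D$d$IS; hence each $S^i$ is feasible, and because consecutive strips of the same parity are at least $d$ apart, both $S_{even}$ and $S_{odd}$ are feasible D$d$IS of the whole instance. Thus $S^*=\arg\max(|S_{even}|,|S_{odd}|)$ is a feasible solution.

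For the approximation factor, the work is done by the lemma asserting $\max(|S_{even}|,|S_{odd}|)\geq\frac{1}{4}|OPT|$. Since the algorithm reports $S^*$ with $|S^*|=\max(|S_{even}|,|S_{odd}|)$, this immediately gives $|S^*|\geq\frac{1}{4}|OPT|$, which is the claimed guarantee.

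For the running time, I would count the subproblems. A $d\times d$ square is processed only if it is non-empty, and since the squares partition the plane each point lies in exactly one square, so there are at most $n$ non-empty squares in total across all strips. Solving the square $\chi$ containing $n_\chi$ points costs $d^2 n_\chi^{O(d)}$ by Lemma \ref{lem:run_time} (whose bound rests in turn on Lemmas \ref{lem:lem1} and \ref{lem:comp_size}). Aggregating, the total cost of the optimal square solves is
\[
\sum_{\chi} d^2\, n_\chi^{\,O(d)} \;\leq\; d^2\, n \cdot n^{\,O(d)} \;=\; d^2\, n^{\,O(d)},
\]
using $\sum_\chi n_\chi = n$ together with $n_\chi\leq n$ for each square. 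Building the strip-and-square partition, forming the unions $S^i_{even},S^i_{odd},S_{even},S_{odd}$, and taking the two max comparisons add only $O(n)$ overhead and are absorbed into the same bound, yielding the stated $d^2 n^{O(d)}$ running time.

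The only step that needs genuine care is this running-time aggregation: one must argue that the number of non-empty squares is polynomially bounded (it is at most $n$) and that summing the per-square bound $d^2 n_\chi^{O(d)}$ does not blow up the exponent, which follows since each $n_\chi\le n$ absorbs cleanly into $n^{O(d)}$. Everything else—feasibility and the factor $\frac{1}{4}$—is a direct invocation of the preceding lemmas, so I do not expect any substantive obstacle there.
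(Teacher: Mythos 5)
Your proposal is correct and follows essentially the same route as the paper: the paper's proof simply cites the $\tfrac{1}{4}$-factor lemma together with Lemmas \ref{lem:lem1}, \ref{lem:comp_size} and \ref{lem:run_time}, and you are assembling exactly those ingredients, just spelling out the feasibility check and the running-time aggregation over the at most $n$ non-empty squares that the paper leaves implicit.
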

\begin{proof}
Follows from Lemma \ref{lem:lem1}, Lemma \ref{lem:comp_size} and Lemma \ref{lem:run_time}.
\end{proof}
\section{Approximation Scheme}\label{sec:ptas}

In this section, using the shifting strategy \cite{hochbaum}, we propose a 
polynomial time approximation scheme (PTAS) for the 
GD$d$IS problem, for a given fixed constant $d \geq 3$.
Let ${\cal R}$ be an axis parallel rectangular region containing the point set $P$ 
(i.e., centers of the disks of the given UDG). We use two-level nested shifting strategy. 
The first level executes $k$ iterations, where $k \gg d$. 
The $i$-th iteration ($1 \leq i \leq k$) of the first level is as follows: 
\begin{itemize}
\item[$\bullet$] Assuming $\cal R$ is left-open, partition ${\cal R}$ into vertical strips such that (a) first strip 
is of width $i$, (b) every even strip is of width $d$, and (c) every odd 
strip, except the first strip, is of width $k$. 
\item[$\bullet$] Without loss 
of generality, assume that the points lying on the left boundary of a 
strip belong to  
the adjacent strip to its left (i.e., every strip is left open and right closed). 
\item[$\bullet$] Compute some desired feasible solutions for the odd strips (of 
width $k$). These solutions can be merged to produce a solution of the entire 
problem since these odd numbered strips are distance-$d$ apart. 
\end{itemize}

The second level of the nested shifting strategy is used to find a solution 
for an iteration in the first level. We consider each non-empty odd strip
separately, and execute $k$ iterations. In the $i$-th iteration, we partition 
it horizontally as in the first level (mentioned in the first bullet above). 
We get a solution of a strip by solving each $k \times k$ square in that 
strip optimally. The union of the
solutions of all the odd numbered squares/rectangles in that strip is the desired solution of that vertical strip of the first level. Finally, we take 
the union of the solutions of all the odd vertical strips to compute the 
solution of that iteration of the first level. Thus, we have the solutions 
of all the iterations of the first level. We report the one having the 
maximum cardinality as the solution of the given GD$d$IS problem. Compute a matrix $\cal M$ containing the cost of all pair shortest paths in a complete graph defined with the points in $P$ where 
the edge costs are as defined in Section \ref{dd}. The method 
of computing an optimum solution inside a $k \times k$ square is described below.

\begin{figure}[t]
\centering
\includegraphics[scale=0.9]{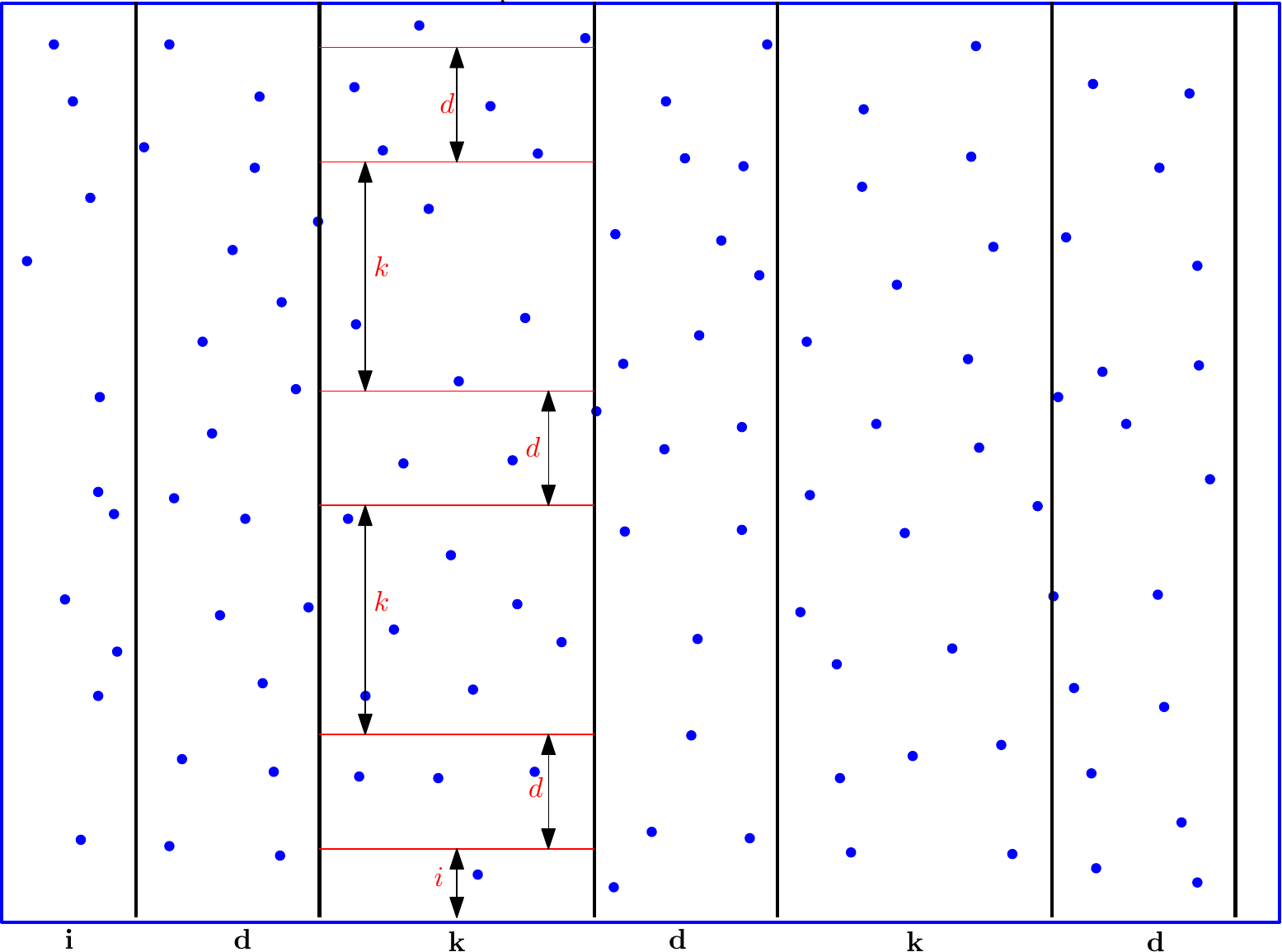}
\caption{Horizontal and vertical partition of the strips}\label{fig:ptas1}
\end{figure}

\subsection{Computing an optimum solution in a $k \times k$ square}\label{subsec:1}
We apply a divide and conquer strategy to compute an optimum solution 
of the GD$d$IS problem defined on a set of points ${\cal Q} \subseteq P$ 
inside a square $\chi$ of size $k \times k$. We partition $\chi$ into four sub-squares, each of size 
$\frac{k}{2}\times \frac{k}{2}$, using a horizontal line $\ell_h$ and a 
vertical lines $\ell_v$ (see Figure. \ref{fig:ptas}). Let ${\cal Q}_1 \subseteq {\cal Q}$ be the subset 
of points in $\chi$ which are at most $d$ distance away from  
$\ell_h$ and/or $\ell_v$. Let ${\cal Q}_2$ be a maximum cardinality 
subset of ${\cal Q}_1$ such that all the points in ${\cal Q}_2$ are pair 
wise distance-$d$ independent in $p$. 

\begin{figure}[t]
\centering
\includegraphics[scale=0.9]{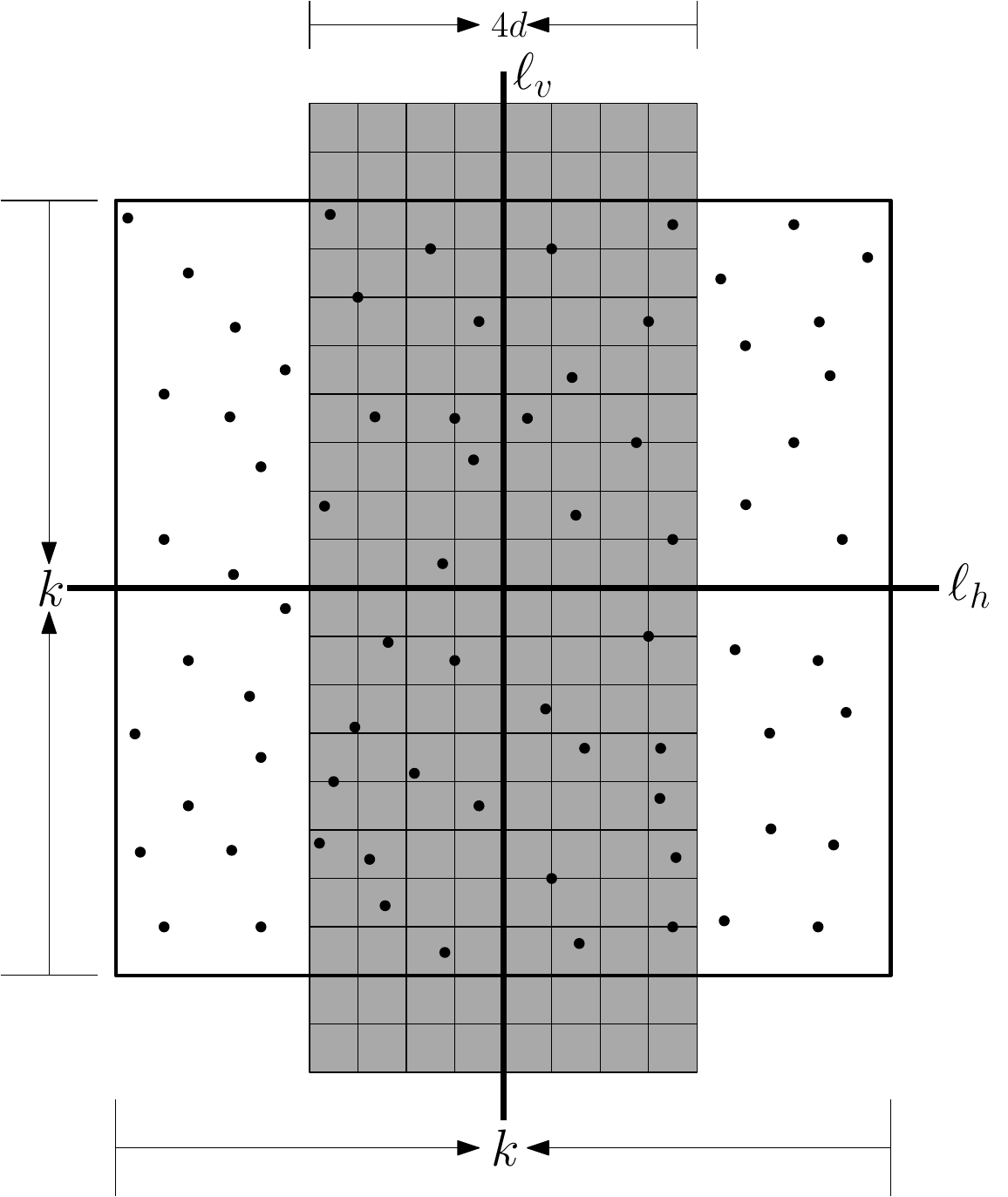}
\caption{A strip of width $4d$ around the vertical line $\ell_v$ 
(shown in dotted lines)}\label{fig:ptas}
\end{figure}
\begin{lemma}\label{lem:lemma_bound}
$|{\cal Q}_2|\leq O(k)$.
\end{lemma}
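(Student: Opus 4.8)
The plan is to observe that the cross-shaped region occupied by ${\cal Q}_1$ has area only $O(kd)$, and to combine this with the elementary fact that any distance-$d$ independent set with $d\geq 3$ is, in particular, a set of pairwise non-adjacent vertices of the unit disk graph. First I would make the region explicit. Since $\ell_h$ and $\ell_v$ cross at the centre of the $k\times k$ square $\chi$, the points of $\chi$ lying within Euclidean distance $d$ of $\ell_h$ form a horizontal band of dimensions $k\times 2d$, and those within distance $d$ of $\ell_v$ form a vertical band of dimensions $2d\times k$. Hence ${\cal Q}_1$, and therefore ${\cal Q}_2\subseteq{\cal Q}_1$, is contained in the union $R$ of these two bands, a region of area at most $4kd$.

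Next I would translate distance-$d$ independence into geometric separation. Every pair of points in ${\cal Q}_2$ is at graph-distance at least $d\geq 3$, so in particular the two points are non-adjacent in $G$; by the definition of a unit disk graph this means their Euclidean distance exceeds $1$. Consequently the open disks of radius $\tfrac12$ centred at the points of ${\cal Q}_2$ are pairwise disjoint. A standard area argument then finishes the bound: all these disks are contained in $R$ enlarged by $\tfrac12$ in every direction, a region of area $O(kd)$, and each disk has area $\pi/4$, so $|{\cal Q}_2|\le O(kd)/(\pi/4)=O(kd)$. As $d$ is a fixed constant, this is $O(k)$.

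The only step that needs care, and where a naive approach goes astray, is the passage from graph distance to geometric distance. Graph-distance at least $d$ does \emph{not} imply Euclidean distance at least $d$ (two points may be graph-far yet geometrically close, or may lie in different components), so one must not attempt a packing with separation $d$. What rescues the argument is that we need only the weak implication ``distance-$d$ independent $\Rightarrow$ non-adjacent $\Rightarrow$ Euclidean distance $>1$'', which already guarantees disjoint radius-$\tfrac12$ disks. As an alternative route, the same $O(k)$ bound can be obtained by tiling $R$ with $O(k/d)$ squares of side $2d$ and bounding the contribution of each square by the constant $O(d^2)$ furnished by Lemma~\ref{lem:lem1} and Lemma~\ref{lem:comp_size}; the packing argument is simply the cleaner of the two and avoids having to reason about cross-square pairs and the number of components.
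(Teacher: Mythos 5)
Your argument is correct, and it is essentially the packing/area argument that the paper only gestures at by saying the proof ``follows from the similar combinatorial argument'' of Lemma~\ref{lem:comp_size}; you supply the details the paper omits. The key observation you isolate --- that distance-$d$ independence with $d\geq 3$ implies pairwise non-adjacency and hence pairwise Euclidean distance greater than $1$, so that radius-$\tfrac12$ disks centred at the points of ${\cal Q}_2$ are disjoint inside a cross-shaped region of area $O(kd)$ --- is exactly the right weak implication to use, and your warning that graph distance $\geq d$ does \emph{not} give Euclidean separation $d$ is well taken (this is precisely the trap a naive packing with separation $d$ would fall into). Your disk-packing phrasing is equivalent to the paper's device of tiling with cells of side $\tfrac{1}{2\sqrt{2}}$ so that each cell holds at most one point of an independent set; both yield $O(kd)=O(k)$ for fixed $d$. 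The one cosmetic discrepancy is the width of the bands: the text of the paper defines ${\cal Q}_1$ via distance at most $d$ from $\ell_h$ and/or $\ell_v$ (bands of width $2d$), while Figure~\ref{fig:ptas} speaks of a strip of width $4d$; either reading leaves the area at $O(kd)$, so your bound is unaffected. Your alternative route via $O(k/d)$ squares of side $2d$ together with Lemmas~\ref{lem:lem1} and~\ref{lem:comp_size} also works and is closer to the letter of the paper's citation, but as you note it requires handling components and cross-square pairs, so the direct packing argument is the cleaner proof.
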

\begin{proof}

The proof follows from the 
 similar combinatorial argument discussed in 
the proof of Lemma \ref{lem:comp_size}.
\end{proof}

We apply the divide and conquer strategy on $\chi$ as follows:

\begin{description}
\item[Step 1:] Choose all possible subsets of points 
of sizes at most $O(k)$ in ${\cal Q}_1$. 
\item[Step 2:] For each subset, do the following:
\begin{itemize}
\item[$\bullet$] Check whether they are mutually distance-$d$ independent by 
consulting the table $\cal M$. If so, then they form ${\cal Q}_2$.
\item[$\bullet$] Consult the matrix $\cal M$ to delete the points in 
$\chi$ which are at most distance $d-1$ away from each member in ${\cal Q}_2$. 
\item[$\bullet$] Recursively solve the four independent sub-problems defined
by the points of ${\cal Q}\setminus {\cal Q}_1$ in the four quadrants 
$\chi_1$, $\chi_2$, $\chi_3$, $\chi_4$ 
defined by $\ell_h$ and $\ell_v$. 
\item[$\bullet$] Return ${\cal Q}_2$ = ${\cal Q}_2\cup (\bigcup\limits_{i=1}^4 {\cal Q}^i_2$), 
where ${\cal Q}^i_2$ is the solution of the sub-problem on the points of 
$\chi_i$. 
\item[$\bullet$] Retain the solution for the present subset if it is better 
than the solutions produced by earlier choices of ${\cal Q}_2$. 
\end{itemize}

\end{description}
  
 \begin{lemma}\label{lem:apprx_tc}
 The solution produced for the cell $\chi$ (of size $k \times k$) in the 
 aforesaid process is optimum, and the time complexity of the proposed 
 algorithm is $k^2m^{O(k)}$, where $m=|{\cal Q}|$.
 \end{lemma}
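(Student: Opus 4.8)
The plan is to prove the two assertions separately: first that the returned set is an optimum D$d$IS of $\chi$, and then the running-time bound. For optimality I would argue by induction on the side length of the square (equivalently, on the recursion depth), the base case being a square small enough that $\mathcal{Q}_1 = \mathcal{Q}$, where the subset enumeration of Step~1 inspects every candidate distance-$d$ independent set directly and hence returns a maximum one. For the inductive step, fix an optimum solution $OPT_\chi$ of $\chi$ and set $\mathcal{Q}_2^\star = OPT_\chi \cap \mathcal{Q}_1$. Since $\mathcal{Q}_2^\star$ is a distance-$d$ independent subset of $\mathcal{Q}_1$, Lemma~\ref{lem:lemma_bound} gives $|\mathcal{Q}_2^\star| \le O(k)$, so $\mathcal{Q}_2^\star$ occurs among the subsets enumerated in Step~1; I would then show that for this particular choice the algorithm recovers a solution of size $|OPT_\chi|$.

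The crux, and the step I expect to be the main obstacle, is establishing that once $\mathcal{Q}_2$ is fixed the residual problem splits into four genuinely independent quadrant subproblems, so that solving them separately and taking the union is both feasible and loses nothing. Two ingredients are needed. First, every point of $OPT_\chi \setminus \mathcal{Q}_1$ lies at graph-distance at least $d$ from each member of $\mathcal{Q}_2^\star \subseteq OPT_\chi$, so none of these points is discarded by the deletion step (which only removes points within distance $d-1$ of $\mathcal{Q}_2$); hence the optimal remainder survives into the four recursive calls. Second, any two surviving points lying in distinct quadrants are automatically distance-$d$ independent: each is at Euclidean distance greater than $d$ from both $\ell_h$ and $\ell_v$, so points in different quadrants are more than $2d$ apart in the plane, and since every edge of the UDG has Euclidean length at most $1$, their graph-distance in $G$ exceeds $2d \ge d$ (a shortest path leaving $\chi$ can only lengthen this, so consulting the global matrix $\mathcal{M}$ does not break the bound). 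Consequently there are no cross-quadrant constraints, the four subproblems are independent exactly as in the strip argument, and feasibility of the union reduces to three facts: $\mathcal{Q}_2$ is independent (checked against $\mathcal{M}$), each quadrant solution is independent (by induction), and quadrant-to-$\mathcal{Q}_2$ independence holds (guaranteed by the deletion step). Combining the first ingredient with the induction hypothesis applied to each $\chi_i$ yields a feasible set of size at least $|\mathcal{Q}_2^\star| + \sum_{i=1}^4 |OPT_\chi \cap \chi_i \setminus \mathcal{Q}_1| = |OPT_\chi|$; since every set the algorithm retains is feasible, the output is exactly optimum.

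For the running time, at each node of the recursion $|\mathcal{Q}_1| \le m$, so the number of subsets of size at most $O(k)$ examined in Step~1 is $\sum_{j \le O(k)} \binom{m}{j} = m^{O(k)}$; for each such subset the independence check and the distance-based deletions cost $O(k^2)$ table lookups in $\mathcal{M}$, after which the algorithm recurses on four squares of half the side length. Writing $N(k)$ for the total work on a $k \times k$ square, this gives $N(k) = m^{O(k)}\bigl(O(k^2) + 4\,N(k/2)\bigr)$. Since the side length halves at each level, the recursion has depth $O(\log k)$, the product of the per-level enumeration factors telescopes as $m^{O(k)(1 + 1/2 + 1/4 + \cdots)} = m^{O(k)}$, and the accumulated branching contributes only a factor polynomial in $k$. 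Hence $N(k) = k^2 m^{O(k)}$, matching the claimed bound and mirroring the $d^2 n^{O(d)}$ estimate of Lemma~\ref{lem:run_time}.
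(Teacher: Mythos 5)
Your proposal is correct and follows the same route as the paper's (very terse) proof: enumerate the candidate boundary set $\mathcal{Q}_2$, recurse on the four quadrants, and solve the recurrence $T(m,k)=m^{O(k)}\bigl(4T(m,k/2)+O(k^2)\bigr)$ to get $k^2m^{O(k)}$. You in fact supply the key justification the paper leaves implicit --- that surviving points in distinct quadrants are more than $2d$ apart in the plane and hence automatically distance-$d$ independent, so the four subproblems genuinely decouple --- which makes your write-up a more complete version of the same argument.
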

 
\begin{proof} 
Let $OPT_\chi$ be an optimal solution for the points lying in $\chi$.
Note that our process checks all combinations of points of size $|OPT_\chi|$.
Thus, the combination of points in $OPT_\chi$ must appear at some stage in the process.

If $T(m,k)$ denote the time complexity of computing the distance-$d$ 
independent set in $\chi$, then $ T(m,k) = 4\times T(m,\frac{k}{2})\times m^{O(k)} + O(k^2)$, 
which is $k^2 \times m^{O(k)}$ in the worst case.
\end{proof}

Using the analysis of \cite{hochbaum}, we have the following result.

\begin{theorem}
 Given a set $P$ of $n$ points (centers of the unit disks) in the plane and an integer $k >1$, 
 the proposed scheme produces a D$d$IS of size at least 
 $\frac{1}{(1+\frac{1}{k})^2}|OPT|$ in $k^2n^{O(k)}$ time, where $|OPT|$ is the maximum cardinality of a GD$d$IS.
 \end{theorem}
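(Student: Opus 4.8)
The plan is to split the argument into three parts: feasibility of the reported set, the approximation ratio, and the running time. Feasibility is the easiest: inside each $k \times k$ square the divide-and-conquer routine of Section~\ref{subsec:1} returns an exact (hence feasible) D$d$IS by Lemma~\ref{lem:apprx_tc}, so the only real question is whether the union taken over squares, over the horizontal blocks within a strip, and over the odd vertical strips stays distance-$d$ independent. Here I would exploit the buffer strips of width $d$: any two chosen points coming from two distinct useful blocks are separated by a discarded strip of width $d$, hence their Euclidean distance exceeds $d-1$; since a path of $d-1$ edges in a UDG spans Euclidean distance at most $d-1$, no such short path can exist, so their graph distance is at least $d$. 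Thus every cross-block pairing is automatically distance-$d$ independent, and the reported union is a valid D$d$IS.

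For the ratio I would invoke the shifting lemma of Hochbaum and Maass \cite{hochbaum} once per level. Fix a global optimum $OPT$. At the outer (vertical) level, iteration $i$ discards the width-$d$ buffer strips; since any point of $P$ lands in a discarded buffer in only a bounded fraction of the $k$ shifts, there is an outer shift $i^*$ for which the part of $OPT$ surviving inside the odd (width-$k$) strips has size at least $\frac{1}{1+\frac{1}{k}}|OPT|$. The restriction of $OPT$ to those strips is itself a feasible solution of the per-strip subproblem, so the inner (horizontal) shifting applied within each surviving strip loses at most another factor of $\frac{1}{1+\frac{1}{k}}$ by the identical argument, and inside every $k \times k$ square the computed solution is exactly optimal (Lemma~\ref{lem:apprx_tc}), incurring no further loss. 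Because the algorithm returns the best over all outer iterations, it does at least as well as iteration $i^*$, yielding a reported solution of size at least $\frac{1}{1+\frac{1}{k}}\cdot\frac{1}{1+\frac{1}{k}}|OPT| = \frac{1}{(1+\frac{1}{k})^2}|OPT|$.

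I expect the delicate step to be the shifting bookkeeping: making precise, at each level, that some shift charges only a $\frac{1}{1+\frac{1}{k}}$ fraction of the relevant optimum to the discarded buffers, and—more importantly—composing the two nested levels so that the two factors multiply rather than interfere. The observation that makes the composition clean is that the inner optimum dominates the restriction of the outer-surviving $OPT$ to each strip; this lets me bound the inner loss against the outer-surviving quantity and then chain the two inequalities into a single product.

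Finally, the running time is a product of the three nested costs. The outer loop runs $k$ times; within a fixed outer shift each odd vertical strip runs the inner loop $k$ times; and each $k \times k$ square is solved in $k^2 m^{O(k)}$ time, where $m$ is the number of points in that square (Lemma~\ref{lem:apprx_tc}). Since the squares partition $P$, we have $\sum_\chi m_\chi \le n$ and each factor $m_\chi^{O(k)} \le n^{O(k)}$, so summing over all squares and absorbing the $O(k)$ loop counts into the exponent and the leading $k^2$ gives a total of $k^2 n^{O(k)}$, as claimed.
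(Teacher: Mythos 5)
Your proposal takes the same route as the paper, which in fact offers no proof at all beyond the sentence ``Using the analysis of \cite{hochbaum}, we have the following result''; your write-up of feasibility (width-$d$ buffers force Euclidean separation greater than $d-1$, hence graph distance at least $d$), of the per-square exactness via Lemma \ref{lem:apprx_tc}, of the chaining of the two shifting levels, and of the running time is therefore strictly more detailed than the paper's argument and is structured correctly.

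The one point that deserves scrutiny is exactly the step you flag as delicate, and it does not quite close as stated. To extract a per-level factor of $\frac{1}{1+\frac{1}{k}}=\frac{k}{k+1}$ from the averaging argument, each point of $OPT$ must fall into a discarded buffer in at most one of the shift positions. With the schedule described in Section \ref{sec:ptas} --- first strip of width $i$ for $i=1,\dots,k$, so the partition advances by one unit per iteration, while the discarded buffer strips have width $d$ --- a fixed point lies in a buffer for up to $d$ of the $k$ iterations, and the averaging argument only yields a surviving fraction of $1-\frac{d}{k}$ per level, i.e.\ $\bigl(1-\frac{d}{k}\bigr)^2$ overall, which is weaker than $\frac{1}{(1+\frac{1}{k})^2}$ for $d\geq 2$. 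To obtain the constant actually claimed one must advance the shift by $d$ units per iteration (equivalently, index the shifts by multiples of $d$) so that the buffer positions across iterations are pairwise disjoint and each optimal point is charged at most once. This imprecision is inherited from the paper itself, and since $k\gg d$ either accounting gives a PTAS, but your phrase ``a bounded fraction of the $k$ shifts'' silently assumes the at-most-once property, and that is precisely what the unit-shift schedule fails to deliver.
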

\section{The Hardness Result of GD$d$DS Problem on Unit Disk Graphs} \label{sec:hardness-1}
The decision version of the GD$d$DS problem, denoted by D(GD$d$DS), for a fixed integer $d\geq 2$, is defined as follows:
\begin{description}
 \item[Input.] An unweighted unit disk graph $G=(V,E)$ defined on a point set $P$ and a positive integer $k \leq |V|$.
 \item[Question.] Does there exist a distance-$d$ dominating set of size at most $k$ in $G$?
\end{description}
\begin{lemma}\label{lem:np}
D(GD$d$DS) problem is NP.
\end{lemma}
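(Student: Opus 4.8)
The plan is to establish membership in NP by exhibiting a deterministic polynomial-time verifier together with a polynomial-size certificate. For a ``yes'' instance the natural certificate is the purported solution itself: a vertex subset $V' \subseteq V$ claimed to be a distance-$d$ dominating set of size at most $k$.

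First I would check the cardinality constraint $|V'| \leq k$, which costs $O(n)$ time. The substantive part is verifying that $V'$ distance-$d$ dominates $G$, that is, that every $u \in V$ either lies in $V'$ or has some $v \in V'$ with $d_G(u,v) \leq d$. Since $d_G$ is the shortest-path distance, I would first precompute all-pairs shortest-path distances by running the Floyd-Warshall algorithm \cite{cormen}, obtaining in $O(n^3)$ time a matrix $M$ with $M[u][v] = d_G(u,v)$. With $M$ available, the domination test reduces to the following: for each $u \in V$, accept $u$ if $u \in V'$ or $\min_{v \in V'} M[u][v] \leq d$; this takes $O(n^2)$ time in total. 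The verifier accepts if and only if the size bound holds and every vertex passes the domination test.

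Because the total running time $O(n^3)$ is polynomial in the input size, and a valid certificate exists exactly when $G$ admits a distance-$d$ dominating set of cardinality at most $k$, this shows that D(GD$d$DS) lies in NP. I do not anticipate any genuine obstacle here: the argument closely parallels the NP-membership proof for the independence variant D(GD$d$IS) stated earlier, the only difference being that the certificate is now tested against the distance-$d$ domination condition rather than against pairwise distance-$d$ independence.
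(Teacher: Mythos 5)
Your proposal is correct and follows essentially the same route as the paper: the certificate is the candidate set $V'$, and the verification uses Floyd--Warshall all-pairs shortest paths to test the distance-$d$ domination condition in polynomial time. Your write-up simply spells out the cardinality check and the $O(n^3)$ bound, which the paper leaves implicit.
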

\begin{proof}

For a given set of vertices, we can verify whether all the vertices of the input graph are distance-$d$ dominated or not in polynomial time using Floyd-Warshall's all-pair shortest path algorithm \cite{cormen}. Hence D(GD$d$DS) $\in$ NP. 
\end{proof}

Now, for proving the problem belongs to NP-hard class, we do polynomial time reduction from a known NP-hard problem, the {\it vertex cover} problem \cite{garey2002} defined on planar graphs with maximum degree 3, to it. The decision version of the vertex cover problem defined on planar graphs with maximum degree 3, denoted by D($VC_p$), is defined as follows:
\begin{description}
 \item [The vertex cover problem on planar graphs] (\textsc{$VC_p$})
 \item [Input:] An undirected planar graph $G$ with maximum degree 3 and a positive integer $k$.
 \item [Question:] Does there exist a vertex cover $D$ of $G$ such that $|D|\leq k$?.
\end{description}
The following corollary follows from Lemma \ref{embedding}:
\begin{corollary}\label{cor:graph_embed-1}
 A planar graph $G=(V,E)$ with maximum degree 3 can be embedded on a plane having grid cell of size $2d \times 2d$, 
 so that its vertices lie at points of the form $(i * 2d,j * 2d)$ and its edges are drawn using a 
 sequence of consecutive line segments drawn on the vertical lines of the form $x=i * 2d$ and/or horizontal lines of the form $y=j * 2d$, for some integers $i$ and $j$ \emph{(see Figure \ref{fig:embedding-1})}.
\end{corollary}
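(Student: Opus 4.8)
The plan is to derive this directly from Lemma~\ref{embedding} by a uniform rescaling of the grid, exactly mirroring the argument already used for Corollary~\ref{cor:graph_embed}. Since a planar graph of maximum degree $3$ is in particular a planar graph of maximum degree $4$, Lemma~\ref{embedding} applies and furnishes an orthogonal drawing of $G$ in which every vertex $v_i$ sits at an integer point $(a_i, b_i)$ and every edge is realized as a sequence of axis-parallel segments lying on lines $x = c$ and $y = c'$ for integers $c, c'$.

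First I would fix such an integer-grid embedding and then apply the affine scaling map $\sigma(x,y) = (2d \cdot x,\, 2d \cdot y)$ to the entire drawing. Under $\sigma$, each vertex $(a_i, b_i)$ is sent to $(2d \cdot a_i,\, 2d \cdot b_i)$, which has exactly the required form $(i * 2d,\, j * 2d)$ with $i = a_i$ and $j = b_i$. Likewise, a segment supported on the vertical line $x = c$ is mapped to the segment supported on $x = 2d \cdot c$, and a segment on $y = c'$ is mapped onto $y = 2d \cdot c'$; hence every edge of $G$ is still drawn as a polyline of consecutive horizontal and/or vertical segments, now lying on the lines $x = i * 2d$ and $y = j * 2d$. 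The cells of the resulting grid have side length $2d$, as claimed.

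The final point to record is that this rescaling does not corrupt the drawing. Because $\sigma$ is an injective affine homeomorphism of the plane (a pure dilation), it sends straight segments to straight segments and preserves all incidences and the cyclic order of edges around each vertex; in particular, no two previously disjoint edge-segments can acquire a new common point, so the image is again a planar orthogonal drawing of the same graph $G$. I do not anticipate a genuine obstacle here: the entire content is that the integer-grid guarantee of Lemma~\ref{embedding} survives scaling by $2d$, and the only things to verify are the coordinate form of the vertices and the supporting lines of the edges, both of which follow immediately from the definition of $\sigma$.
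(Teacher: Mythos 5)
Your proposal is correct and follows the same route the paper intends: the paper simply asserts that Corollary~\ref{cor:graph_embed-1} follows from Lemma~\ref{embedding} (just as the proof of Corollary~\ref{cor:graph_embed} merely restates the lemma's properties), and your explicit dilation by $2d$ is the standard way to make that deduction precise. In fact your write-up is more complete than the paper's, since you verify that the scaling preserves the orthogonal drawing and puts vertices and supporting lines in the required form.
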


\begin{lemma}\label{lemma-udg-1}
 Let $G=(V,E)$ be an instance of D(\textsc{$VC_p$}) with maximum degree 3. An instance $G'=(V',E')$ of D(GD$d$DS) can be constructed from $G$ in polynomial-time.
\end{lemma}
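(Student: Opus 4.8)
The plan is to build $G'$ by a gadget reduction that geometrizes the classical reduction from \textsc{Vertex Cover} to \textsc{Dominating Set}, in which every edge $e=\{v_i,v_j\}$ of $G$ is equipped with a private vertex that can be dominated only from one of its two endpoints. Since the present lemma asserts only that such a $G'$ is \emph{constructible in polynomial time}, I would concentrate on giving the construction explicitly and on bounding $|V'|$ and $|E'|$; the semantic equivalence between vertex covers of $G$ and distance-$d$ dominating sets of $G'$ is deferred to the subsequent lemma. First I would invoke Corollary~\ref{cor:graph_embed-1} to obtain an orthogonal embedding of $G$ on a $2d$-grid, so that each $v_i$ sits at a grid node $x_i$ of the form $(2di,2dj)$ and each edge is realized as a polyline of axis-parallel segments, each of geometric length a multiple of $2d$. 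By Lemma~\ref{embedding} and the linear-time orthogonal-drawing routine of Biedl and Kant~\cite{biedl1998}, this embedding uses $O(|V|^2)$ area and is computed in polynomial time.

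Next I would turn the embedded picture into a unit disk graph exactly as in the GD$d$IS reduction: I keep the vertex points $X=\{x_1,\dots,x_n\}$ and introduce a set $Y$ of extra points placed at unit spacing along every edge polyline, so that in the resulting UDG only consecutive points of a polyline are within unit distance and each edge becomes an induced path joining its two vertex points. The role of the enlarged ($2d\times 2d$) cells is to give each segment enough geometric room to host a unit-spaced chain whose \emph{graph} length can be tuned to the multiples of $d$ needed by the gadget; in particular I would earmark, for each edge, a designated ``dominatee'' point placed so that its graph distance to the two endpoint chains is controlled, playing the role of the private neighbour $w_e$ of the textbook reduction (see Figure~\ref{fig:embedding-1}). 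Setting $V'=X\cup Y$ and $E'=\{(p,q):d(p,q)\le 1\}$ then yields the instance $G'=(V',E')$.

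Finally I would verify polynomiality by counting. The total number of grid segments used over all edges is bounded by the grid area, hence is $O(|V|^2)$; since each segment carries $O(d)$ unit-spaced points and $d$ is a fixed constant, $|V'|=n+|Y|=O(d\,|V|^2)$, and because adjacencies occur only between consecutive chain points and between a chain's ends and its vertex point, $|E'|=O(d\,|V|^2)$ as well. Enumerating the points and these adjacencies is clearly doable in time polynomial in $|V|$, which establishes the claim. I expect the genuinely delicate part of the overall argument to lie \emph{not} in this counting but in the choice of the chain lengths and of the dominatee gadget: one must guarantee that each edge's dominatee can be reached within graph distance $d$ only from its two endpoints, and that selecting an interior chain point never dominates more cheaply than selecting a shared vertex point, so that minimum distance-$d$ dominating sets of $G'$ are forced to mirror minimum vertex covers of $G$. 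That calibration is what the $2d$-grid is designed to make possible, and it is the crux that the following equivalence lemma must discharge; for the present polynomial-time construction claim it suffices that the gadget has the size bounds established above.
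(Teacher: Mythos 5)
Your overall skeleton matches the paper's: embed $G$ orthogonally on a $2d$-grid via Corollary~\ref{cor:graph_embed-1}, replace each edge by a chain of closely spaced points so that only consecutive points along a polyline are within unit distance, take $V'$ to be the vertex points plus the chain points, and bound $|V'|$ and $|E'|$ polynomially. Your final count is consistent with the paper's $|V'|\le n+3\ell d$ and $|E'|\le 3\ell d+m$.

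There is, however, a genuine gap: this lemma is precisely where the paper pins down the instance that Lemma~\ref{lem:claim} later reasons about, and your proof leaves that instance unspecified. The paper fixes, for each edge with $\ell'$ grid segments, exactly $3d$ added points on one designated segment incident to an endpoint (making that sub-path of graph length $3d+1$) and $2d$ added points on each of the remaining $\ell'-1$ segments (graph length $2d+1$ each), with explicit sub-unit gaps ($0.72$ at the ends and $\frac{2d-1.44}{3d-1}>0.5$ in between, respectively $0.75$ and $\frac{2d-1.5}{2d-1}>0.5$) chosen so that consecutive points are within unit distance while points two apart are not. Your ``unit spacing'' prescription is actually incompatible with this: with unit spacing a segment of geometric length $2d$ becomes a path of graph length exactly $2d$, leaving no room to ``tune'' it to $2d+1$ or $3d+1$; packing the extra points requires gaps strictly between $\frac{1}{2}$ and $1$. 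Likewise, the ``designated dominatee point'' you gesture at is never placed or given coordinates. Since the equivalence in Lemma~\ref{lem:claim} depends on every segment having graph length at least $2d+1$ (forcing one dominator per segment) and on the extra $d$ of slack in exactly one segment per edge (forcing an endpoint of each edge to be effectively selected), deferring the calibration entirely to the next lemma leaves the reduction without a defined target graph; the polynomial-size bound alone does not discharge this lemma's role in the argument.
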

\begin{figure}[!ht]
  \centering
  \begin{minipage}{.45\textwidth}
  \hspace{-3.9cm}
  \centering
  \includegraphics[scale=0.99]{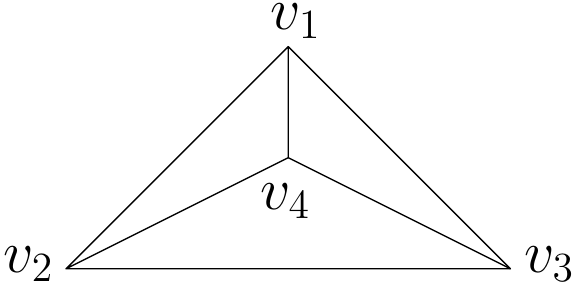} \\ \hspace{-3.5cm}{(a)}
  \end{minipage}
  \begin{minipage}{.45\textwidth}
  \hspace{-2.6cm}
  \centering
  \includegraphics[scale=0.70]{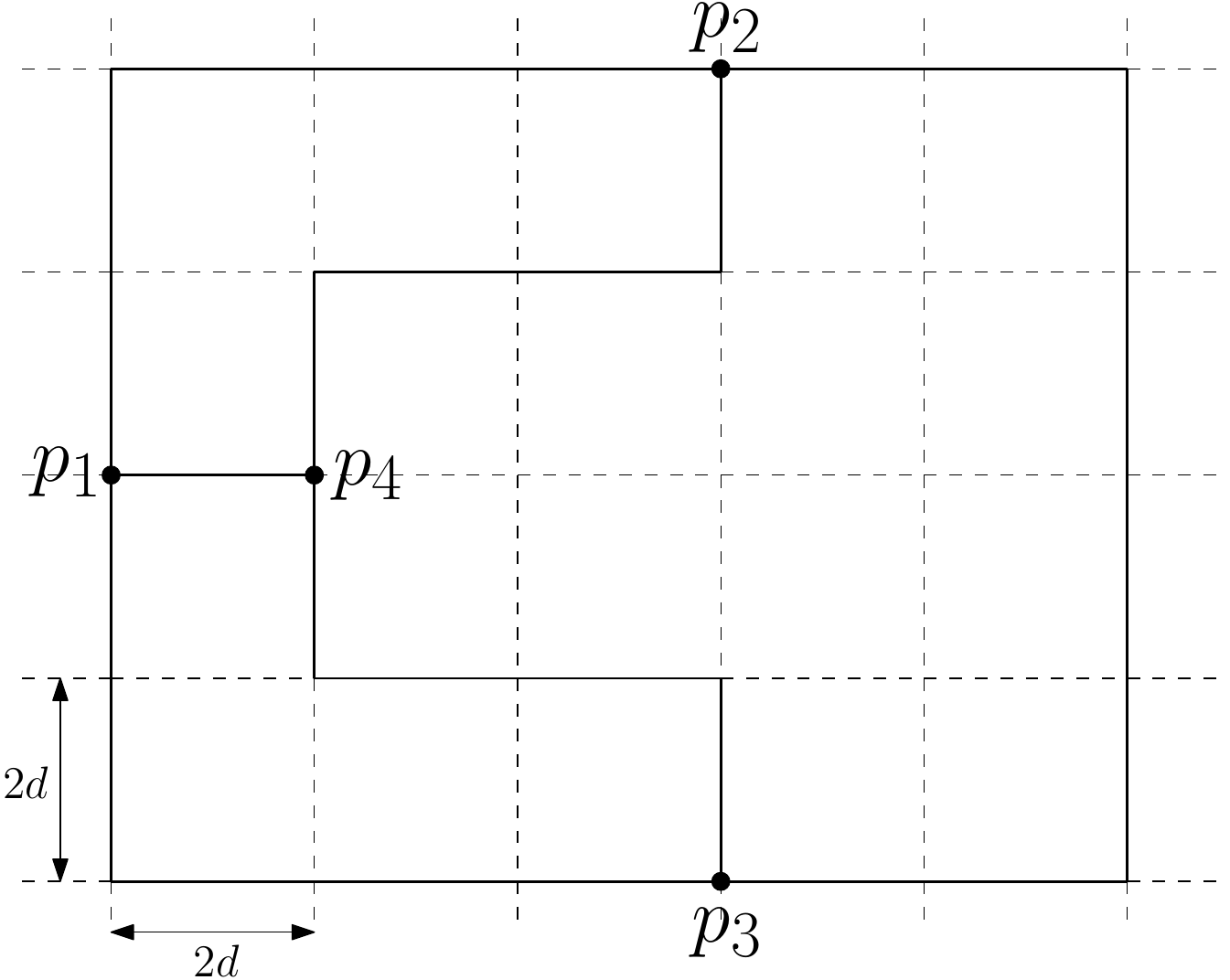}\\
  {(b)}
  \end{minipage}
  \begin{minipage}{.45\textwidth}
  \centering
  \hspace*{-1.7cm}
  \includegraphics[scale=0.7]{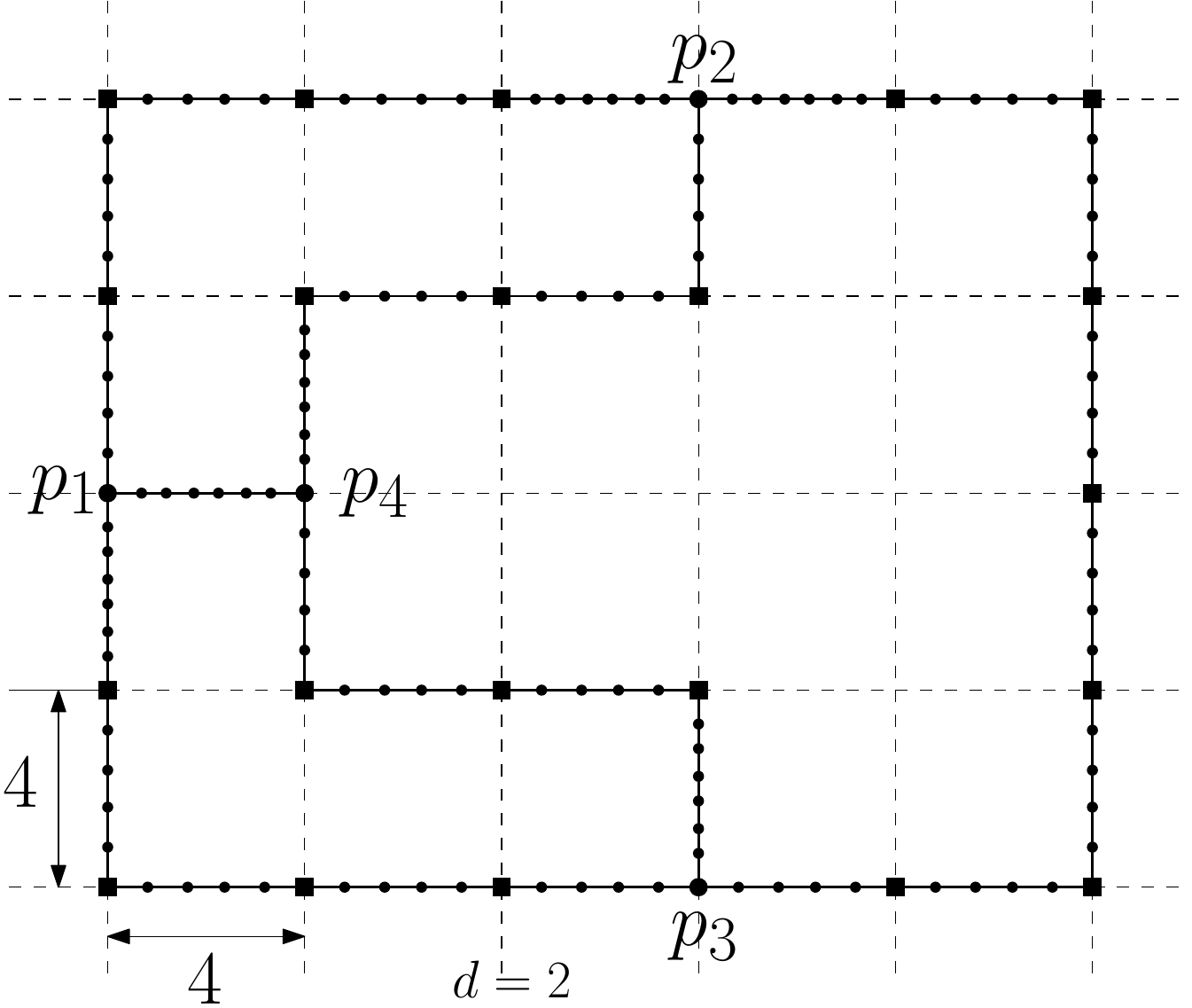}\\
  {(c)}
  \end{minipage}
  \begin{minipage}{.45\textwidth}
  \centering
  \hspace{1.5cm}
  \includegraphics[scale=0.65]{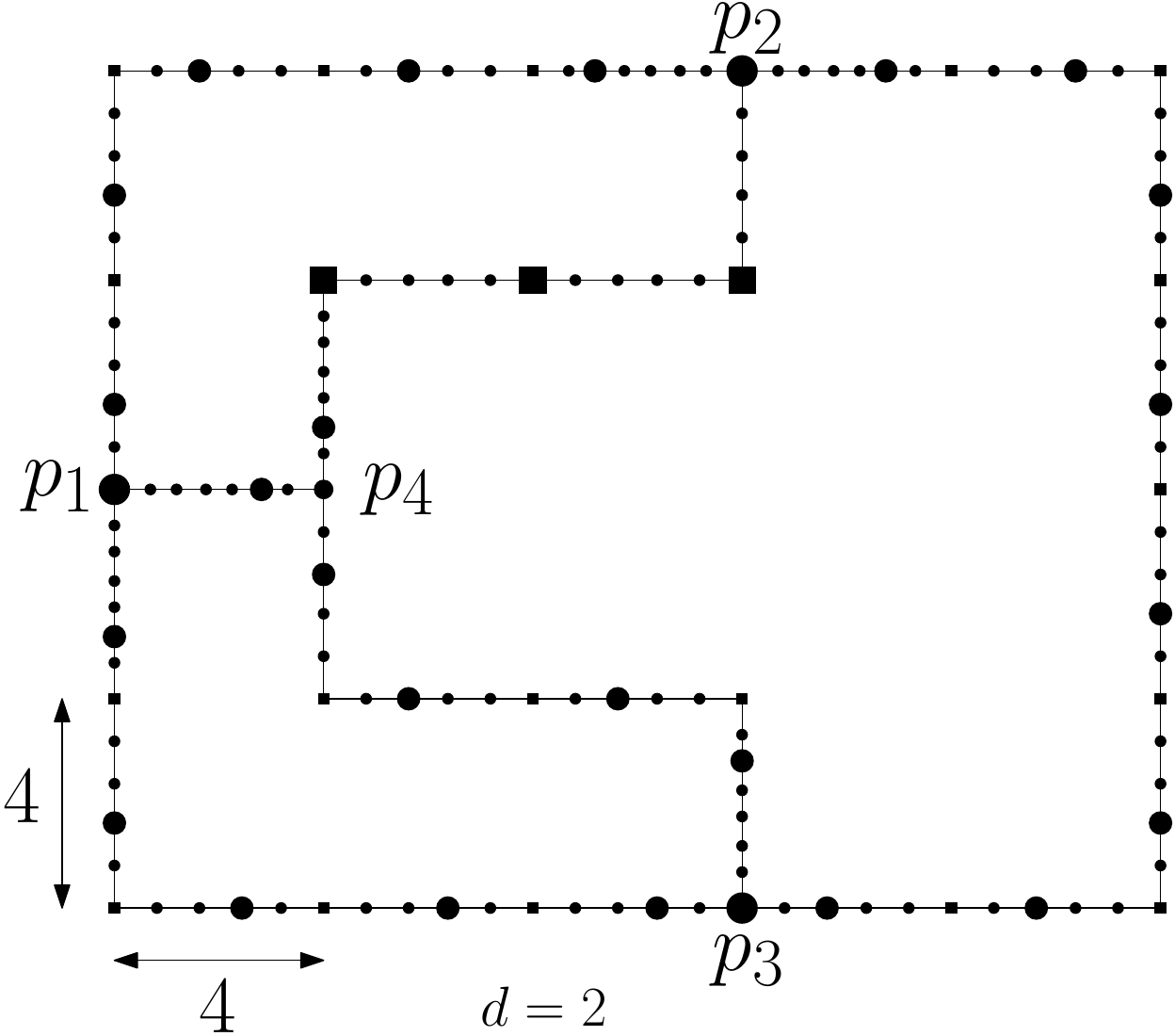}\\
  {(d)}
  \end{minipage}
  \caption{(a) A planar graph $G$ of maximum degree 3, (b) its embedding $G'$ on a grid of cell size
  $4\times 4$, (c) adding of extra points to $G'$, (d) the obtained UDG $G'$.}\label{fig:embedding-1}
  \end{figure}
\begin{proof}
Embed the instance $G$ on the plane as discussed in corollary \ref{cor:graph_embed-1},  using one of the algorithms in \cite{hopcroft,itai}. An edge in the embedding is a sequence of connected line segment(s) of length $2d$ units each. Let $\ell$ be the total number of line segments in the embedding. We add points $P=\{p_1,p_2,\ldots,p_n\}$ corresponding to the vertices $V=\{v_1,v_2,\ldots,v_n\}$ in the embedding. To make $G'$ a UDG we introduce a set $Q$ of 
extra points on the segments used to draw the edges of $G'$. Thus, the set of points in $P$ together with $Q$ form a UDG $G'$. Let $(p_i,p_j)$ be an edge in $G'$ 
corresponding to the edge $(v_i,v_j)$ in $G$ and has $\ell'$ grid segments. 

{\bf Case-1:} If $\ell'=1$, then we add $3d$ points $p_{ij}^1,p_{ij}^2,\ldots,p_{ij}^{3d}$ on the segment such that the Euclidean distance of $p_i$ to $p_{ij}^1$ and $p_{ij}^{3d}$ to $p_j$ is 0.72 and the Euclidean distance between $p_{ij}^t$ and $p_{ij}^{t+1}$ is $\frac{2d-1.44}{3d-1}>0.5$ for $t=1,2,3,\ldots,3d-1$. Therefore, the length of the path from $p_i$ to $p_j$ is exactly $3d+1$ (for $d=2$, see the edge ($p_1,p_4$) in Figure \ref{fig:embedding-1}(c)).\\
{\bf Case-2:} If $\ell'>1$, then we consider all joint points of each pair of consecutive segments other than the points of $P$ and add a point for each joint  points to the set $Q$ (see the square points in the edge ($p_1,p_3$) in Figure \ref{fig:embedding-1}(c)). Then, we add $3d$ points (see Case-1) in one of the two segments for which one end is associated either with $p_i$ or $p_j$ and $2d$ points $p_{ij}^1,p_{ij}^2,\ldots,p_{ij}^{2d}$ on the segment such that the Euclidean distance of $p_{ij}^1$ and $p_{ij}^{2d}$ from the end points of the segment is 0.75 and the Euclidean distance between $p_{ij}^t$ and $p_{ij}^{t+1}$ is $\frac{2d-1.5}{2d-1}>0.5$ for $t=1,2,3,\ldots,2d-1$ in the remaining $\ell'-1$ segments in such a way that  after adding the extra points, the length of one segment is $3d+1$ and the length of all other segments is $2d+1$ (see the edge ($p_2,p_3$) in Figure \ref{fig:embedding-1}(c)).\\
Note that in both the cases, only consecutive points on the path $p_i \rightsquigarrow p_j$ are within unit distance apart. Let $Q$ be the set of points generated in these two cases. 

Observe that $G'=(V',E')$ is a UDG, where  $V' = P \cup Q$, and $E' = \{(p_i,p_j) \mid p_i,p_j \in V' \text{ and } d(p_i,p_j) \leq 1\}$. Here $|V'| = |P| + |Q| \leq n + 3\ell d$, and $|E'| \leq 3\ell d + m$, where $m$ is the number of edges in $G$ and $\ell$ is bounded by $n$. Thus, $G'$ can be constructed in polynomial time.
 \end{proof}
\begin{lemma}\label{lem:claim}
  $G$ has a vertex cover of size at most $k$
 if and only if $G'$ has a distance-$d$ dominating set of size at most $k + \ell$.
 \end{lemma}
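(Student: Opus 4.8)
The plan is to prove the two directions separately, exploiting the structural fact that in $G'$ every short connection (graph-distance at most $d$) runs \emph{along} the polyline of a single edge of $G$: by construction two points arising from different edges are never within unit Euclidean distance, so any path between them must pass through a shared vertex point $p_i$. Intuitively the $\ell$ segments are the unavoidable ``baseline'' cost of dominating $G'$, each segment forcing roughly one dominator onto its polyline, while choosing a vertex point $p_i$ is precisely what lets a single dominator serve each incident path from its endpoint. This is the origin of the split $k+\ell$, where $\ell$ pays for the segments and $k$ records the chosen vertices, i.e.\ the cover.

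For the necessity direction I would start from a vertex cover $D$ of $G$ with $|D|\le k$ and build a dominating set $D'$ explicitly. I put $p_i\in D'$ for every $v_i\in D$ (at most $k$ points) and then process each edge $e=(v_i,v_j)$ by \emph{anchoring} at one of its endpoints that lies in $D$ (such an endpoint exists because $D$ is a cover). From that anchor, whose ball already dominates a length-$(d+1)$ prefix of the path, I sweep greedily along $e$ placing one dominator per ball of $2d+1$ consecutive points. Since the path of an edge with $\ell_e$ segments has exactly $\ell_e(2d+1)+(d+1)$ vertices, the anchor absorbs the surplus $d+1$ and the remaining $\ell_e(2d+1)$ vertices split into exactly $\ell_e$ balls; summing gives $\sum_e \ell_e=\ell$ segment dominators. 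Using the interpoint distances fixed in Lemma~\ref{lemma-udg-1} (the $0.72/0.75$ offsets and the $>0.5$ gaps, which make $p_i$ reach exactly $d$ steps into every incident path), one checks that this sweep leaves no point of $V'=P\cup Q$ undominated; note that this verification is insensitive to whether the long $3d$-segment sits at the anchored or the unanchored end, because the linear sweep covers it either way. Hence $|D'|\le k+\ell$.

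For the sufficiency direction I would argue by charging. Let $D'$ be a distance-$d$ dominating set with $|D'|\le k+\ell$, put $S=\{v_i:p_i\in D'\}$, and for each edge $e$ let $c_e$ be the number of points of $D'$ lying strictly on the polyline of $e$; since interiors of distinct edges are disjoint, $|D'|=|S|+\sum_e c_e$. The robust part of the estimate is a ball-covering bound on the portion of each path that is at distance more than $d$ from both endpoints: this middle region forces $c_e\ge \ell_e$ for every edge, so $\sum_e c_e\ge \ell$ and therefore $|S|\le k$. The delicate part is to show that every \emph{bad} edge (one with neither endpoint in $S$) contributes one dominator \emph{beyond} its baseline $\ell_e$, giving $\sum_e c_e\ge \ell+|B|$ and hence $|S|+|B|\le k$. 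Granting this, the set $S$ together with one endpoint chosen from each bad edge is a vertex cover of $G$ of size at most $|S|+|B|\le k$, completing the reduction.

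The main obstacle is exactly this last lower bound, because domination can \emph{leak} across a shared vertex: a dominator placed near $v_i$ on one incident edge reaches up to $d-1$ steps into another incident edge through $p_i$, so a naive per-edge count of bad edges fails, and at a degree-$3$ unselected vertex a single dominator could attempt to serve $p_i$ and the near-regions of all three incident paths at once. Controlling this is where the $3d$-versus-$2d$ point counts and the exact $0.72/0.75$ offsets are spent: they are tuned so that the endpoint region of each path is long enough that no leaked dominator can simultaneously cover a vertex point and discharge a neighbour's baseline quota, forcing each bad edge to buy its own extra dominator. I expect to make this rigorous by first \emph{normalising} $D'$—argue that any dominator may be pushed to a canonical position on its own segment, or onto a vertex point, without increasing $|D'|$ or destroying feasibility—so that the clean ball-covering count applies and leakage is eliminated; verifying that this normalisation preserves domination (again via the fixed distances) is the technical crux of the proof.
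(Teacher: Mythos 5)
Your necessity direction is essentially the paper's: place $p_i$ for each cover vertex, anchor each edge at a covered endpoint, and sweep placing one dominator per $2d+1$ points so that each of the $\ell$ segments contributes exactly one point; the vertex count $\ell_e(2d+1)+(d+1)$ per edge path is consistent with the construction in Lemma \ref{lemma-udg-1}, and the observation that the anchored endpoint absorbs the surplus $d+1$ is exactly the role of the single $(3d)$-point segment. One small correction: the $0.72/0.75$ offsets only enforce the unit-disk adjacency structure (consecutive points adjacent, non-consecutive not); the combinatorial tuning you attribute to them lives entirely in the $3d$-versus-$2d$ point counts.

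The sufficiency direction is where the proposal diverges from the paper and where it is incomplete. Your ball-covering bound $c_e\ge\ell_e$ via the middle region is sound and yields $|S|\le k$, but the inequality you actually need, $\sum_e c_e\ge\ell+|B|$, is asserted and then deferred to an unspecified ``normalisation'' of $D'$. This is the crux, not a technicality: in your charging scheme the extra dominator of a bad edge $(v_i,v_j)$ must be found in the endpoint neighbourhoods of its path, but a single dominator sitting one step from $p_i$ on a neighbouring edge $e'$ dominates $p_i$ and the first $d-1$ interior points of every edge incident to $v_i$, while simultaneously being eligible to count toward $e'$'s own baseline $\ell_{e'}$. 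You would have to prove that such a dominator cannot do both jobs at once (or that it can be relocated so that it does not), and nothing in the proposal establishes this; without it the bound $|S|+|B|\le k$ is unsupported. The paper sidesteps the global charging entirely by an iterative local exchange: for each edge of $G$ with neither endpoint selected, it exhibits a specific segment on the relevant path $p_i\rightsquigarrow p_j$ (or $p_i\rightsquigarrow p_k$, $p_j\rightsquigarrow p_\ell$) that must carry two solution points, deletes one of them, inserts $p_i$ or $p_j$, and repeats until every edge is covered, so feasibility and cardinality are preserved one repair at a time. If you want to keep the charging formulation, you should either carry out the normalisation explicitly or replace it with such an exchange argument.
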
 
\begin{proof}

({\bf Necessity}) Let there exist a vertex cover $D$ of size at most $k$ for the graph $G$. Let ${\cal S}$ be the collection of points from $P$ in $G'$ corresponding to the vertices of $D$ in $G$, i.e., ${\cal S}= \{p_i \in P \mid v_i \in D\}$. Note that $|{\cal S}|=|D|$. We choose one point from each segment in such a way (see next paragraph) that the selected points along with ${\cal S}$ form a distance-$d$ dominating set of $G'$ and the size of ${\cal S}$ is at most $k+\ell$.  

Note that, every edge in $G$ has at least one of its end vertices in $D$ ($D$ is a vertex cover in $G$). 
For each edge $(v_i,v_j)$ in $G$, start traversing from the corresponding vertex $p_i$ in $G'$ (if $v_i \in D$ or from $p_j$, if $v_j\in D$) in the embedding and select each $2d+1$-th vertex in ${\cal S}$ encountered from $p_i$ to $p_j$ in the traversal (see $(p_2,p_3)$ in Figure \ref{fig:embedding-1}(d). The big vertices are part of ${\cal S}$ while traversing from $p_2$). Observe that, ${\cal S}$ is a distance-$d$ dominating set in $G'$ having $|{\cal S}|\leq k+\ell$ as we have chosen one vertex from each segment in the embedding and the way we have choosen ${\cal S}$, for any $p_i\in V'$  there is always exist at least one point $p_j\in {\cal S}$ such that $d(p_i,p_j) \leq d$.

 \begin{figure}[!ht]
  \centering
  \begin{minipage}{.45\textwidth}
  \hspace{-1.3cm}
  \centering
  \includegraphics[scale=0.99]{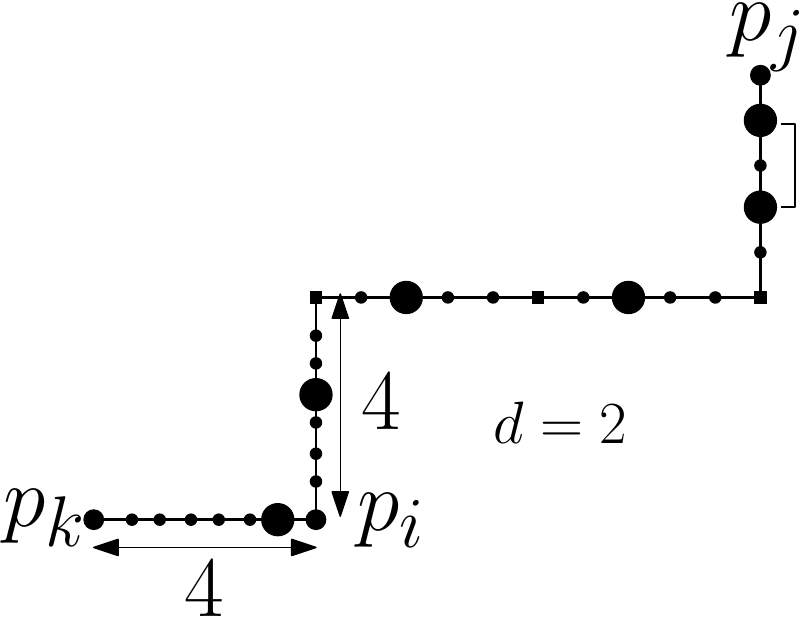} \\ \hspace{-3.5cm}{(a)}
  \end{minipage}
  \begin{minipage}{.45\textwidth}
  \centering
  \includegraphics[scale=0.70]{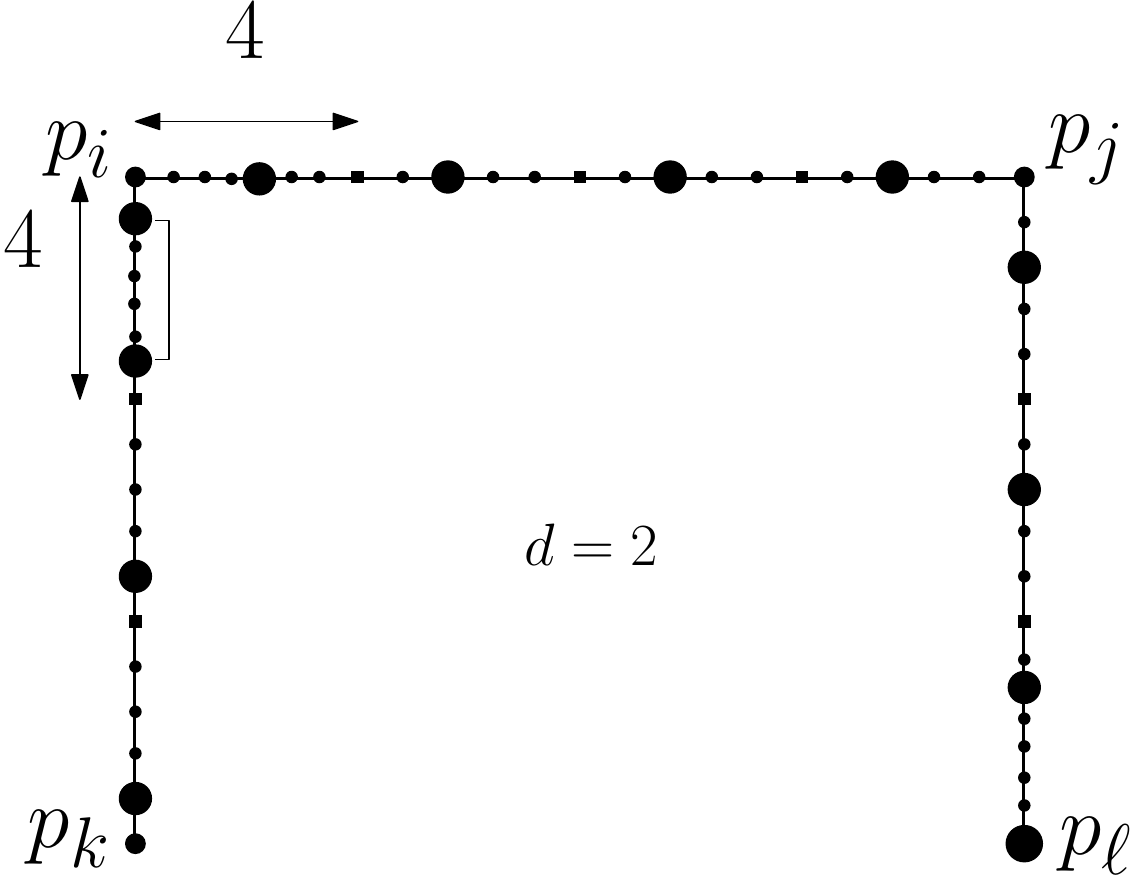}\\
  {(b)}
  \end{minipage}
    \caption{(a) $p_j$ is only connected with $p_i$, (b) $p_i$ connected with $p_k$ and $p_j$ connected with $p_{\ell}$.}\label{fig:suff}
  \end{figure}

 ({\bf Sufficiency}) Let ${\cal S} \subseteq V'$ be a GD$d$DS of size at most $k + \ell$ in $G'$. We need to prove that $G$ has a
 vertex cover of size at most $k$. 
Let $D = \{v_i \in V \mid p_i \in {\cal S} \cap P\}$. Observe that $|D|\leq k$ as the length of each segment in $G'$ is at least $2d+1$ there must be at least one point from each segment choosen in ${\cal S}$. It remains to prove that $D$ is a vertex cover of $G$.
 If any edge $(v_i,v_j)$ in $G$ has none of its end 
 vertices in $D$, then consider the points $p_i$ and $p_j$ corresponding to $v_i$ and $v_j$ respectively. \\
 {\bf Case (i)}: If $p_j$ is the only vertex that is connected with $p_i$ in $G'$, then the chain of segments (say $\ell'$) in the path $p_i \rightsquigarrow p_j$ in $G'$ has at least $\ell'+1$ vertices in ${\cal S}$ (see Figure \ref{fig:suff}(a) for example). In this case, we delete one point from the segment containing two points in ${\cal S}$ and introduce $p_i$ in ${\cal S}$.\\ 
 {\bf Case (ii)}: If both $p_i$ and $p_j$ are connected with some points $p_k$ and $p_{\ell}$ respectively in $G'$, then either the chain of segments (say $\ell'$) in the path $p_i \rightsquigarrow p_j$ in $G'$ has at least $\ell'+1$ vertices in ${\cal S}$ (see Case (i)) or the chain of segments (say $\ell'$) in the path $p_i \rightsquigarrow p_k$ or ( $p_j \rightsquigarrow p_{\ell}$) in $G'$ has at least $\ell'+1$ vertices in ${\cal S}$ (see Figure \ref{fig:suff}(b) for example). In this case, we choose the segment having two points in ${\cal S}$ and remove one point of the segment from ${\cal S}$ and introduce $p_j$ in ${\cal S}$ if $p_k \in {\cal S}$ otherwise introduce $p_i$ in ${\cal S}$. Update $D$ and repeat the process till every edge has at least one of its end vertices in $D$. Note that, in both the cases, we delete at most one point from such segments having two of its points in the solution and there does not exist a segment in $G'$ having none of its points in ${\cal S}$, which leads the proof that  $D$ is a vertex cover in $G$ with $|D|\leq k$.
 \end{proof}
 \begin{lemma}\label{lem:nphard}
 D(GD$d$DS) problem is NP-hard.
 \end{lemma}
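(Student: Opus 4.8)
The plan is to assemble the preceding results into a polynomial-time many-one reduction from the decision version of the vertex cover problem on planar graphs of maximum degree 3, D($VC_p$), which is known to be NP-hard \cite{garey2002}, to D(GD$d$DS). First I would take an arbitrary instance $\langle G, k\rangle$ of D($VC_p$), where $G=(V,E)$ is a planar graph of maximum degree 3. By Lemma \ref{lemma-udg-1}, the corresponding instance $G'=(V',E')$ of D(GD$d$DS) can be constructed in time polynomial in $|V|+|E|$; moreover, the number of segments $\ell$ used in the orthogonal embedding is bounded by $n=|V|$, so the target parameter $k+\ell$ is computable in polynomial time as well. Thus the map $\langle G,k\rangle \mapsto \langle G', k+\ell\rangle$ is a polynomial-time computable transformation that produces an instance of D(GD$d$DS) whose size is polynomial in that of $\langle G,k\rangle$.

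Next I would invoke Lemma \ref{lem:claim} to certify the correctness of the reduction: it states that $G$ has a vertex cover of size at most $k$ if and only if $G'$ has a distance-$d$ dominating set of size at most $k+\ell$. Consequently $\langle G,k\rangle$ is a \emph{yes}-instance of D($VC_p$) exactly when $\langle G', k+\ell\rangle$ is a \emph{yes}-instance of D(GD$d$DS). Combining this logical equivalence with the polynomial-time bound on the construction yields a valid polynomial-time reduction, so D(GD$d$DS) inherits the NP-hardness of D($VC_p$).

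Since all of the substantive work is already in place---namely the grid embedding and gadget construction (Corollary \ref{cor:graph_embed-1} and Lemma \ref{lemma-udg-1}) together with the two-directional feasibility-and-size argument (Lemma \ref{lem:claim})---the remaining step is essentially bookkeeping, and I do not expect a genuine obstacle here. The one point that requires care is confirming that the additive offset $\ell$ in the parameter is bounded by a polynomial in the input size, so that the instance $\langle G', k+\ell\rangle$ is of polynomial size; this is already guaranteed by the counting $|V'| \le n + 3\ell d$ with $\ell \le n$ established in the proof of Lemma \ref{lemma-udg-1}. With that, together with membership in NP from Lemma \ref{lem:np}, the NP-hardness of D(GD$d$DS) follows immediately.
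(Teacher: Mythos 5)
Your proposal is correct and follows exactly the paper's route: the paper's own proof of this lemma is a one-line citation of Lemma \ref{lemma-udg-1} (the polynomial-time construction) and Lemma \ref{lem:claim} (the equivalence of the instances), which is precisely the reduction you assemble. Your extra remark that $\ell \le n$ keeps the target parameter polynomial is sensible bookkeeping already implicit in the paper.
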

 \begin{proof}
 Follows from Lemma \ref{lemma-udg-1} and Lemma \ref{lem:claim}.
 \end{proof}
 \begin{theorem}
 D(GD$d$DS) problem is NP-complete.
 \end{theorem}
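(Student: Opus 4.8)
The plan is to establish NP-completeness by verifying the two defining conditions separately and then combining them, exactly as was done for the GD$d$IS problem earlier in the paper. That is, I would show that D(GD$d$DS) lies in NP and that it is NP-hard; a decision problem satisfying both is by definition NP-complete.

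First I would invoke Lemma \ref{lem:np} to place D(GD$d$DS) in NP. The certificate is simply a candidate vertex subset $\mathcal{S}$ of size at most $k$. To verify it, one runs the Floyd--Warshall all-pairs shortest path algorithm on $G$ and checks, for every vertex $u \in V$, whether some chosen vertex of $\mathcal{S}$ lies within graph-distance $d$ of $u$. Both the distance computation and the per-vertex domination check run in polynomial time, so the verification is polynomial and membership in NP is immediate.

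Next I would invoke Lemma \ref{lem:nphard} for the NP-hardness. This lemma itself rests on two components already established in the excerpt. The first is Lemma \ref{lemma-udg-1}, which constructs in polynomial time a UDG instance $G'$ of D(GD$d$DS) from an arbitrary planar graph $G$ of maximum degree $3$ (an instance of the NP-hard problem D(\textsc{$VC_p$})), by embedding $G$ on a $2d$-grid via Corollary \ref{cor:graph_embed-1} and subdividing each edge-segment with the prescribed $2d$ or $3d$ extra points so that only consecutive points lie within unit distance. The second is Lemma \ref{lem:claim}, the correctness equivalence stating that $G$ has a vertex cover of size at most $k$ if and only if $G'$ has a distance-$d$ dominating set of size at most $k+\ell$, where $\ell$ is the number of segments in the embedding.

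The main obstacle in the whole chain is not the final assembly, which is a one-line combination, but the sufficiency direction of Lemma \ref{lem:claim}. There, given a small distance-$d$ dominating set $\mathcal{S}$ in $G'$, one must convert the induced vertex set $D=\{v_i : p_i \in \mathcal{S}\cap P\}$ into a genuine vertex cover of $G$ without inflating its size, handling precisely those edges $(v_i,v_j)$ whose two endpoints both lie outside $D$ by the local swap that trades a doubly-covered segment point for an original vertex $p_i$ or $p_j$, while using the fact that each segment has length at least $2d+1$ to guarantee $|D|\le k$. With Lemma \ref{lem:np} and Lemma \ref{lem:nphard} both in hand, the theorem follows since a problem that is simultaneously in NP and NP-hard is NP-complete.
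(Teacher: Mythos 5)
Your proposal is correct and follows exactly the paper's own argument: membership in NP via the Floyd--Warshall verification lemma, NP-hardness via the reduction from vertex cover on planar graphs of maximum degree 3 (the polynomial-time embedding construction together with the size-$(k+\ell)$ equivalence), and the standard one-line combination. No differences worth noting.
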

 \begin{proof}
 From Lemma \ref{lem:np}, D(GD$d$DS) $\in$ NP and from Lemma \ref{lem:nphard}, D(GD$d$DS) $\in$ NP-hard. Therefore, D(GD$d$DS) $\in$ NP-complete.
\end{proof}
\section{Approximation Algorithm for GD$d$DS Problem} \label{sec:apprx}
In this section, we explain a $4$-factor approximation algorithm for distance-$d$ dominating set problem.
Let ${\cal R}$ be the smallest rectangular region containing the point set $P$ (disk centers).  We partition ${\cal R}$ into squares having side length $\frac{3}{\sqrt{2}}d \times \frac{3}{\sqrt{2}}d$ (see Figure \ref{fig:dom_approx}(a)).
The basic idea behind our algorithm is as follows:
\begin{itemize}
\item color the partitioning squares with 4-colors such that the distance between two same colored squares are more than $2d$ (see Figure \ref{fig:dom_approx}(a)).
\item find the optimal solution of each squares (see Subsection \ref{subsec:opt}).
\item let $OPT_i$ denotes the optimal solution generated by our algorithm for the squares having color $i$, for $i=1,2,3,4$.
\item Let $OPT$ be the minimum distance-$d$ dominating set of the graph. Therefore, $|OPT_i| \leq |OPT|$. Thus $\sum_i|OPT_i|\leq 4*|OPT|$.
\end{itemize}

\begin{figure}[!ht]
  \centering
  \begin{minipage}{.45\textwidth}
  \centering
  \hspace*{-1cm}
  \includegraphics[scale=0.65]{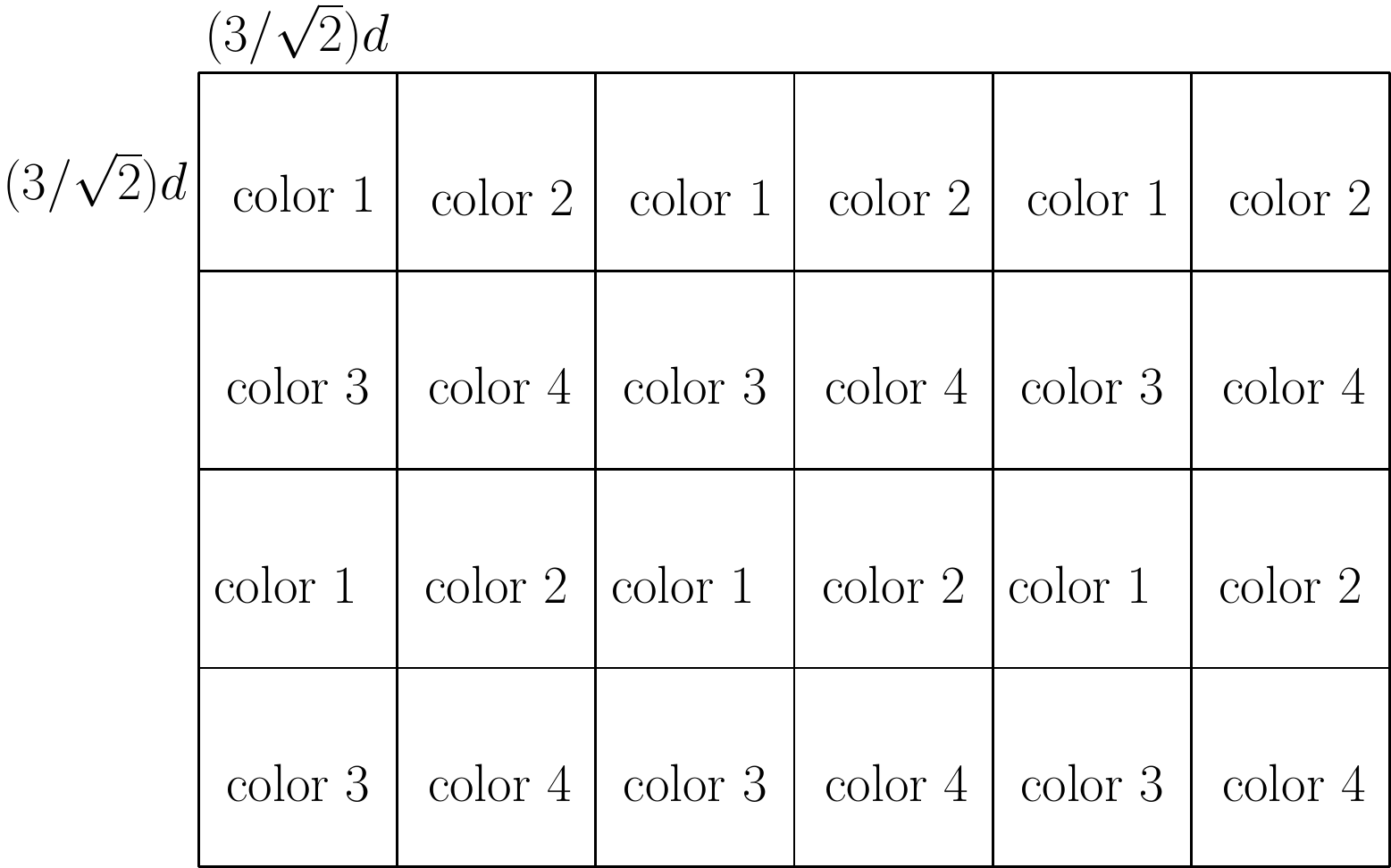} \\ {(a)}
  \end{minipage}
  \begin{minipage}{.45\textwidth}
  \centering
  \hspace*{2.2cm}
  \includegraphics[scale=0.55]{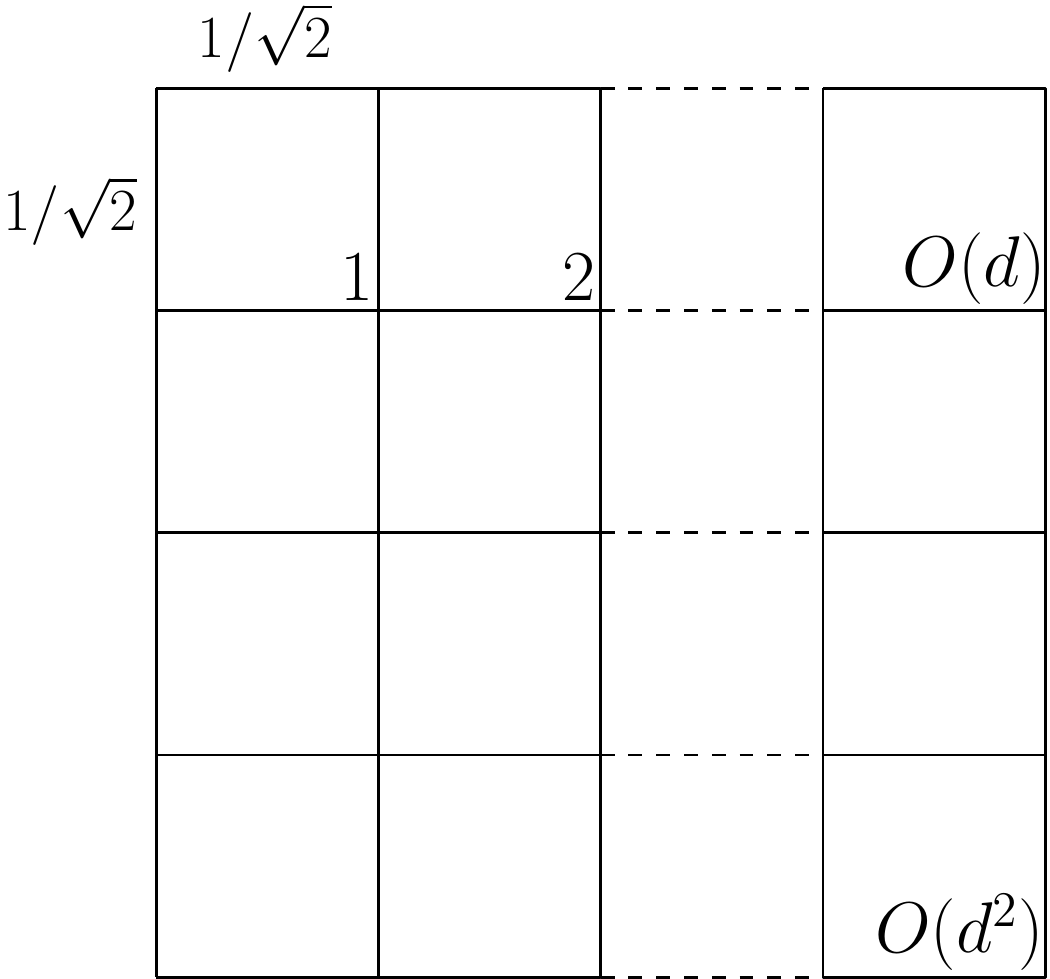}\\
  {(b)}
  \end{minipage}
  \begin{minipage}{.45\textwidth}
  \centering
  \hspace{-2.0cm}
  \includegraphics[scale=0.85]{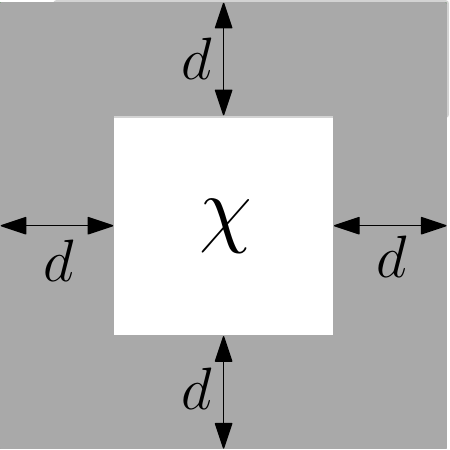}\\
  {(c)}
  \end{minipage}
    \caption{(a) partition of \emph{R} into smaller cells of size $\frac{3}{\sqrt{2}}d \times \frac{3}{\sqrt{2}}d$, (b) one cell partitioned into $O(d^2)$ sub-cells, and (c) one cell surrounded with $d$ width region.}\label{fig:dom_approx}
  \end{figure}

\subsection{Computing a minimum D$d$DS in a $\frac{3}{\sqrt{2}}d \times \frac{3}{\sqrt{2}}d$ square} \label{subsec:opt}
Let $\chi$ be a single $\frac{3}{\sqrt{2}}d \times \frac{3}{\sqrt{2}}d$ cell with $P_\chi \subseteq P$ be the set of points inside $\chi$, and $G_\chi$ be the UDG defined on $P_\chi$. Let $C_1,C_2,\ldots ,C_\ell$ be the different connected components of $G_\chi$ with the constraint that each $C_i$ ($1\leq i\leq \ell$) are $d$-distance apart from each other in $G$ (i.e., the UDG corresponding to points set $P$). If the distance between any two components is less than $d$, then combine these two components as a single component.

\begin{lemma} \label{lem:1}
The worst case number of different connected components in $G_\chi$ is $O(d^2)$.
\end{lemma}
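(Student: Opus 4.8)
The plan is to reuse the gridding argument already employed in Lemma~\ref{lem:lem1}: partition the square $\chi$ into constant-area cells, each of diameter at most $1$, so that any two points lying in a common cell are pairwise adjacent in $G_\chi$, and then bound the number of components by the number of non-empty cells.

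Concretely, I would partition $\chi$ into axis-parallel cells of size $\frac{1}{\sqrt2}\times\frac{1}{\sqrt2}$ (equivalently, one could use cells of size $\frac{\sqrt3}{2}\times\frac12$). Each such cell has diameter exactly $1$, so any two points falling inside the same cell are at Euclidean distance at most $1$ and are therefore joined by an edge of $G_\chi$. Consequently, all points of a single cell belong to the same connected component of $G_\chi$, i.e.\ no cell can be split across two distinct components.

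Next I would count the cells. The area of $\chi$ is $\left(\frac{3}{\sqrt2}d\right)^2=\frac{9}{2}d^2$, while each cell has area $\frac12$, so the partition uses at most $9d^2=O(d^2)$ cells. Since every non-empty cell lies within a single component, the number of distinct connected components of $G_\chi$ is at most the number of non-empty cells, which is $O(d^2)$. Note also that merging components lying within graph-distance $d$ in $G$ (as required by the statement preceding the lemma) only decreases this count, so the bound is unaffected by that convention.

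There is essentially no hard step here; the only points requiring care are selecting a cell shape whose diameter does not exceed $1$ (so that each cell induces a clique in $G_\chi$) and verifying the area bound, both of which are routine. The argument is identical in spirit to Lemma~\ref{lem:lem1}, the sole difference being the slightly larger side length $\frac{3}{\sqrt2}d$ of $\chi$, which still yields a $\Theta(d^2)$ cell count and hence the claimed $O(d^2)$ bound on the number of components.
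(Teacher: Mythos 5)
Your proof is correct and follows essentially the same route as the paper's: partition $\chi$ into $\frac{1}{\sqrt{2}}\times\frac{1}{\sqrt{2}}$ sub-cells of diameter at most $1$, observe that each non-empty cell induces a clique and hence lies in a single component, and count the $O(d^2)$ cells. Your version is slightly more explicit about the diameter condition and the cell count, but the argument is the same.
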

\begin{proof}
 Partition $\chi$ into sub-cells of size $\frac{1}{\sqrt{2}}\times \frac{1}{\sqrt{2}}$ (see Figure \ref{fig:dom_approx}(b)). Therefore, the total number of sub-cells is $O(d^2)$. Every pair of points  within a sub-cell are connected as they are at most unit distance apart. Therefore, the points lying inside each sub-cell are in a same connected component. Thus, the lemma follows. 
\end{proof}


\begin{lemma} \label{lem:apprx_cmpnt}
The size of MD$d$DS in any connected component $C \in \{C_1,C_2,\ldots,C_\ell\}$ of $G_\chi$ is bounded by $O(d)$.
\end{lemma}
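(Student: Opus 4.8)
The plan is to exploit the standard duality between inclusion-maximal independent sets and dominating sets, and then reuse the geometric packing bound already established for independent sets. First I would fix any inclusion-maximal distance-$d$ independent set $I$ of the component $C$, that is, a set of vertices of $C$ that are pairwise at graph-distance at least $d$ and to which no further vertex of $C$ can be added without destroying this property. By maximality, every vertex $u \in C \setminus I$ must admit some $v \in I$ with $d_G(u,v) \le d-1$, for otherwise $u$ could be added to $I$. Hence $I$ is a distance-$(d-1)$ dominating set of $C$, and therefore also a distance-$d$ dominating set of $C$.

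Consequently the minimum distance-$d$ dominating set of $C$ has cardinality at most $|I|$, so it suffices to bound $|I|$. Since $I$ is itself a distance-$d$ independent set inside $C$, its size is at most that of a maximum distance-$d$ independent set of $C$. I would then invoke the same combinatorial argument used in the proof of Lemma \ref{lem:comp_size}: enclose $\chi$ in a region larger by $d$ on every side (whose side length is still $O(d)$), partition it into $O(d^2)$ cells each of diameter at most $1$ so that the points inside one cell are mutually adjacent, and argue that introducing each new mutually distance-$d$ independent point forbids the points of $\Omega(d)$ cells lying along the shortest path that connects it to the previously chosen points. This caps the number of mutually distance-$d$ independent points at $O(d^2)/\Omega(d) = O(d)$, yielding $|I| = O(d)$ and hence the claimed bound on the size of the MD$d$DS in $C$.

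I expect the main obstacle to be the bookkeeping in re-deriving the $O(d)$ independent-set bound for the present cell size $\frac{3}{\sqrt{2}}d \times \frac{3}{\sqrt{2}}d$ (rather than the $d \times d$ cell of Lemma \ref{lem:comp_size}), while keeping the measurement of distances consistent: both independence and domination are taken with respect to graph-distance in the full UDG $G$, so I must verify that the shortest paths witnessing these distances remain inside the slightly enlarged region and that the per-point ``forbidden-cell'' count really is $\Omega(d)$. The duality step is routine; the delicate part is the geometric packing estimate, and I would choose the enlargement and sub-cell sizes so that both the component count of Lemma \ref{lem:1} and the packing count come out as the required powers of $d$.
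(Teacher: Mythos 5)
Your proposal is correct and follows essentially the same route as the paper: the paper's proof simply cites the fact that the minimum distance-$d$ dominating set is bounded by the maximum distance-$d$ independent set and then invokes Lemma \ref{lem:comp_size}. You supply more detail than the paper does (justifying the domination--independence duality via maximality, and noting that the packing argument must be re-checked for the $\frac{3}{\sqrt{2}}d \times \frac{3}{\sqrt{2}}d$ cell, which only affects constants), but the underlying argument is the same.
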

 
\begin{proof}
The proof follows from Lemma \ref{lem:comp_size}, with the fact that the minimum distance-$d$ dominating set in any graph is bounded by the maximum cardinality distance-$d$ independent set of the same graph. The same result holds for any sub-graphs also. 
\end{proof}
 \begin{lemma} \label{lem:2}
 The time complexity to compute the optimal D$d$DS in $\chi$ is $d^2n^{O(d)}$.
 \end{lemma}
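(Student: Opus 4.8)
The plan is to mirror the argument used for Lemma~\ref{lem:run_time} in the independent-set case, now exploiting the component structure of $G_\chi$ together with the $O(d)$ bound on the size of an optimal solution per component. First I would observe that a minimum D$d$DS of $G_\chi$ decomposes across connected components. By construction any two components $C_i,C_j$ are at least $d$ apart in $G$, so no vertex of $C_i$ lies within graph-distance $d$ of any vertex of $C_j$; consequently the points that dominate $C_i$ must themselves lie in $C_i$, and a minimum D$d$DS of $G_\chi$ is exactly the union of minimum D$d$DS's computed independently for $C_1,\ldots,C_\ell$.

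As a preprocessing step I would build the weighted complete graph on the relevant point set (with edge weights $1$ for unit-distance pairs and $\infty$ otherwise, as in Section~\ref{dd}) and run an all-pairs shortest-path computation, storing the pairwise graph distances in a matrix $M$. This is done once, so that every subsequent domination check reduces to table lookups in $M$.

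Then, for each component $C$ I would enumerate candidate dominating sets. By Lemma~\ref{lem:apprx_cmpnt}, a minimum D$d$DS of $C$ has size $O(d)$, so it suffices to examine all subsets of the vertices of $C$ of cardinality at most $O(d)$; there are $|C|^{O(d)}$ of them. For each such subset $S$ I would verify, using $M$, that every vertex of $C$ is within graph-distance $d$ of some point of $S$; this validity check costs polynomial time (at most $O(d\,|C|)$ lookups). Retaining the smallest valid $S$ yields a minimum D$d$DS of $C$, and their union over all components is the optimal D$d$DS of $\chi$.

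Finally, summing over components, the running time is the per-subset check times $\sum_{C \in G_\chi} |C|^{O(d)}$; since there are only $O(d^2)$ components (Lemma~\ref{lem:1}) and $\sum_{C}|C| = n_\chi \le n$, the total is $d^2 n^{O(d)}$, as claimed. The one point requiring care---and the main obstacle---is the decomposition claim: I must argue that restricting dominators to the same component loses nothing, which rests precisely on the $d$-separation of components guaranteed by the construction, so that no point can usefully dominate across a component boundary.
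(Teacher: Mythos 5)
Your proposal is correct and follows essentially the same route as the paper: decompose $G_\chi$ into its $d$-separated components, bound each component's optimal solution by $O(d)$ via Lemma~\ref{lem:apprx_cmpnt}, enumerate all $|C|^{O(d)}$ candidate subsets per component, and verify domination by lookups in a precomputed all-pairs shortest-path matrix. The only detail worth making explicit (which the paper does) is that the shortest-path matrix must be built on the points of the $d$-width expanded region $\chi'$ around $\chi$, so that the graph distances used in the domination checks are those of the full UDG $G$ rather than of $G_\chi$ alone.
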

 \begin{proof}
Consider a $d$-width region around a cell $\chi$ as $\chi'$ (see Figure \ref{fig:dom_approx}(c)), having point set $P_\chi'\subseteq P$. Let $G_{P_\chi'}$ be a weighted complete graph with the vertex set corresponding to point set $P_\chi'$ and edge costs are as defined in Section \ref{dd}. Apply all-pairs shortest path algorithm \cite{cormen} on graph $G_{P_\chi'}$ and store the result in a matrix $\cal M$.
 
 Observe that, for computing a MD$d$DS in graph $G_\chi$, we need to compute a MD$d$DS in each and every component $C_i\in G_\chi$. As per the definition of components, all the components are $d$-distance apart from each other. So, taking the union of the computed solutions of each component leads to a MD$d$DS for the graph $G_\chi$. For computing an optimum solution in a component $C_i$, we consider all possible tuples of size at most $O(d)$ (refer Lemma \ref{lem:apprx_cmpnt}) and check whether the selected tuple is a feasible solution or not with the help of matrix $\cal M$ in $O(d^2)$ time. So, MD$d$DS can be computed in a single component $C_i$ in $O(d^2|C_i|^{O(d)})$ time. Hence, computing an MD$d$DS in $G_\chi$ takes  $O(d^2\sum\limits_{C_i\in G_\chi}|C_i|^{O(d)})=d^2n^{O(d)}$ time, where $n$ is the number of vertices in $G_\chi$.
  \end{proof}
 \begin{theorem}
 Given a set $P$ of $n$ points in the plane $\cal R$, a distance-$d$ dominating set of size at most $4|OPT|$ can be computed in $d^2n^{O(d)}$ time, where $OPT$ is a minimum distance-$d$ dominating set.
 \end{theorem}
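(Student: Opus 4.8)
The plan is to establish three properties of the algorithm of Section~\ref{sec:apprx}: that its output is a feasible distance-$d$ dominating set, that its cardinality is at most $4|OPT|$, and that the running time is $d^2n^{O(d)}$. I expect feasibility and the time bound to be routine and would dispatch them first. For feasibility, observe that each point of $P$ lies in exactly one partitioning square $\chi$, hence in exactly one connected component of $G_\chi$; the routine of Subsection~\ref{subsec:opt} outputs a minimum D$d$DS of $G_\chi$, which by definition dominates every vertex of $G_\chi$, so every point of $\chi$ is within distance $d$ of a selected point. Here I would use that distances are read from the matrix ${\cal M}$ built on the enlarged region $\chi'$, so a path of length at most $d$ inside $\chi'$ is also one in $G$ (shortest paths only shorten in the larger graph); taking the union over all squares therefore dominates all of $P$. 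For the running time, Lemma~\ref{lem:2} solves each square in $d^2n_\chi^{O(d)}$ time, and since a square has side $\frac{3}{\sqrt2}d$ while $\chi'$ extends only $d$ beyond it, any point of $P$ lies in the enlarged region of only $O(1)$ squares; thus $\sum_\chi n_\chi=O(n)$, giving $\sum_\chi n_\chi^{O(d)}\le n^{O(d)}$ and the claimed bound.

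The approximation factor is the heart of the proof, and I would organise it around the $4$-colouring. It suffices to prove $|OPT_i|\le|OPT|$ for each colour $i$, since then the output has size $\sum_{i=1}^{4}|OPT_i|\le 4|OPT|$. Fix a colour $i$ and a square $\chi$ of that colour, and let $N_\chi\subseteq OPT$ be the set of global-optimum points dominating at least one point of $\chi$; since $OPT$ dominates all of $P$, the set $N_\chi$ dominates every point of $\chi$. Because each edge of $G$ has Euclidean length at most $1$, a point at graph-distance at most $d$ from $\chi$ is at Euclidean distance at most $d$ from $\chi$, so $N_\chi\subseteq\chi'$. The colouring guarantees that two squares of the same colour are more than $2d$ apart, whence their $d$-neighbourhoods are disjoint; consequently the sets $N_\chi$ over same-coloured squares are pairwise disjoint, and $\sum_{\chi\text{ of colour }i}|N_\chi|\le|OPT|$.

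The remaining inequality, which I expect to be the main obstacle, is the per-square bound that the minimum D$d$DS the algorithm computes for $\chi$ has size at most $|N_\chi|$. The difficulty is that $N_\chi$ may contain points of $\chi'\setminus\chi$, whereas a naive reading of the routine selects dominators only from inside $\chi$, and an exterior dominator need not be replaceable by a single interior one (two interior points it covers can be up to graph-distance $2d$ apart, so no interior vertex dominates both). I would resolve this by noting that $N_\chi$ is already a feasible way to dominate the points of $\chi$ using vertices of $\chi'$, and that the enumeration of Lemma~\ref{lem:2}, which already consults all pairwise distances over $\chi'$ through ${\cal M}$, should range over candidate dominators drawn from $\chi'$ rather than from $\chi$ alone. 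The component-size bound of Lemma~\ref{lem:apprx_cmpnt} carries over to $\chi'$ up to constants (its proof in Lemma~\ref{lem:comp_size} already uses a $3d\times 3d$ expansion), so the minimum such dominating set has size at most $|N_\chi|$, and feasibility is unaffected because domination measured in the subgraph $G_{\chi'}$ implies domination in $G$. Combining this per-square inequality with the disjointness established above yields $|OPT_i|\le|OPT|$, and summing over the four colours completes the proof.
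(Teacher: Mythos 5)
Your proposal follows the same skeleton as the paper's argument --- feasibility, the four-colour charging, and the per-square running time --- but the paper's own proof of this theorem is a one-line citation of Lemmas \ref{lem:1}, \ref{lem:apprx_cmpnt} and \ref{lem:2}, and the crucial inequality $|OPT_i|\leq |OPT|$ is merely asserted in the algorithm outline of Section \ref{sec:apprx}. The charging argument you supply (defining $N_\chi$ as the points of the global optimum that dominate something in $\chi$, showing $N_\chi\subseteq \chi'$ because graph distance upper-bounds Euclidean distance, and using the $>2d$ separation of same-coloured squares to make the sets $N_\chi$ pairwise disjoint) is exactly the argument the paper leaves implicit, so you are not taking a different route so much as actually writing the route down. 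More importantly, you correctly isolate the one step where the paper's version, read literally, breaks: the subroutine of Subsection \ref{subsec:opt} enumerates dominators only from $P_\chi$, and then the per-square bound $|OPT_\chi|\leq |N_\chi|$ does not follow, since a point of the global optimum lying outside $\chi$ may dominate two points of $\chi$ at mutual graph distance up to $2d$, which no single interior point can cover. Your fix --- letting the enumeration range over candidates in the enlarged square $\chi'$, where the $O(d)$ size bound and the distance matrix ${\cal M}$ are already in place, and observing that domination certified in $G_{\chi'}$ remains valid in $G$ --- repairs this without affecting the asymptotic running time. In short, your proof is a corrected and complete version of the paper's sketch rather than a genuinely different approach, and it is the more trustworthy of the two.
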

\begin{proof}
Follows from Lemma \ref{lem:1}, Lemma \ref{lem:apprx_cmpnt} and Lemma \ref{lem:2}.
\end{proof} 
 
 \section{Approximation scheme for GD$d$DS Problem} \label{sec:apprx_schm}
 
 In this section, using the technique of shifting strategy  \cite{hochbaum}, we propose a polynomial time approximation scheme (PTAS) for the D$d$DS problem, for a given constant $d$. Given a point set $P$ (centers of the UDG) in a rectangular region ${\cal R}$ and a fixed integer $k \gg d$.
 We use two-level nested shifting strategy as follows:
 \begin{itemize}
 \item first, we apply shifting strategy in the horizontal direction. The $i$-th iteration $(1\leq i\leq k)$ of the first level, partition ${\cal R}$ into horizontal strips such that (i) first strip is of width $i$, and (ii) remaining strips are of width $k$. Note that width of last strip may be less than $k$.
 \item without loss of generality, assume that each points lying on the left boundary of a strip belong to its left adjacent strip.
 \item consider each non-empty horizontal strip $H$, and apply second level of shifting strategy on the vertical direction.
 \item in the second level, the $j$-th iteration $(1\leq j\leq k)$ partition each non-empty horizontal strip $H$ into square/rectangular cells of size (i) $j\times \ell$ for the first cell, and (ii) $k\times \ell$ for all other cells, where $\ell$ defines the width of the strip $H$ ($\ell=i$ for the first strip and $\ell=k$ for all other strips).
  \end{itemize}
 
We solve each $k \times k$ square (conceptually extend the smaller cells to  $k \times k$ square) optimally. Take the union of each $k \times k$ square in a horizontal strip to get a feasible solution of each  strip. Finally, we take the union of solutions of  each non-empty horizontal strip to get a feasible solution of the problem in a single iteration. In the same process, we get the feasible solutions of all the iterations in the first level. We report the solution $D$, having minimum cardinality among all the solutions generated in each iterations as the solution of  the D$d$DS problem.

  Now, we discuss the process of getting solution from each $k \times k$ square optimally. Before discussing the process, we compute a matrix $\cal M$ containing the cost of all pair shortest paths in a complete graph defined with the points in $P$ where the edge costs are as defined in Section \ref{dd}.
 \subsection{Computing an optimum solution in a $k \times k$ square}
We apply the same strategy as described in Sub-section \ref{subsec:1}  on the point set $P_\chi\subseteq P$ inside a square $\chi$ of size $k\times k$. Let $P'_\chi\subseteq P_\chi$ denote the point set which are at most $d$ distance away from $\ell_h$ and $\ell_v$ (the horizontal and vertical lines which divides $\chi$ into $4$ squares). Let $P''_\chi$ be a minimum cardinality subset of $P'_\chi$ such that all the points in $P'_\chi$ are distance-$d$ dominated by the point set $P''_\chi$.

\begin{lemma}
$|P''_\chi|\leq O(k)$
\end{lemma}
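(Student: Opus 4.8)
The plan is to reduce this dominating-set bound to the independent-set bound already established for the $k\times k$ square, and then reuse the combinatorial counting behind it. Recall that $P''_\chi$ is, by definition, a minimum-cardinality distance-$d$ dominating set of the unit disk graph induced on $P'_\chi$ (the points of $\chi$ lying within distance $d$ of $\ell_h$ and/or $\ell_v$). First I would observe, exactly as in the proof of Lemma \ref{lem:apprx_cmpnt}, that for any graph the minimum distance-$d$ dominating number is at most the maximum distance-$d$ independence number. This holds because any \emph{maximal} distance-$d$ independent set is automatically a distance-$d$ dominating set: if a vertex $u$ cannot be added to a maximal set $I$ without destroying independence, then some $w\in I$ satisfies $d_G(u,w)<d$, so $u$ is $d$-dominated by $w$. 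Consequently $|P''_\chi|$ is bounded by the largest number of mutually distance-$d$ independent points contained in $P'_\chi$.

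It then suffices to bound this independence number by $O(k)$. This is precisely the content of Lemma \ref{lem:lemma_bound} (there ${\cal Q}_2$ is a maximum distance-$d$ independent subset of the analogous cross-shaped set ${\cal Q}_1$), which in turn rests on the counting argument of Lemma \ref{lem:comp_size}. So the second step is to repeat that counting in the present setting: the points of $P'_\chi$ all lie in the cross-shaped region consisting of the horizontal strip of size $k\times 2d$ around $\ell_h$ together with the vertical strip of size $2d\times k$ around $\ell_v$, whose total area is $O(dk)$. Enlarging this region by a margin of width $d$ and partitioning it into cells of side $\frac{1}{2\sqrt 2}$, as in Lemma \ref{lem:comp_size}, produces $O(dk)$ cells, each of which induces a clique of the UDG. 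Every newly chosen mutually distance-$d$ independent point forces a length-$\Theta(d)$ chain of $O(d)$ cells along the connecting shortest paths to be free of further independent points, so each such point "uses up" $\Theta(d)$ cells. Dividing the $O(dk)$ available cells by the $\Theta(d)$ cells blocked per point gives an independence number of $O(k)$, whence $|P''_\chi|\le O(k)$.

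The only genuinely delicate point is the first step, the inequality that the minimum distance-$d$ dominating number is at most the maximum distance-$d$ independence number: one must measure all distances in the full graph $G$ (as in Section \ref{dd}), so that a maximal distance-$d$ independent subset of $P'_\chi$ really does $d$-dominate every point of $P'_\chi$ at graph-distance $\le d$. Granting this, the remaining area/cell estimate is a verbatim repetition of the argument already carried out for the independent-set case, so no new geometric computation is required. The essential feature is that the width-$2d$ cross contributes only $O(dk)$ cells (rather than the $O(k^2)$ of the whole square), which is exactly what the per-point blocking factor $\Theta(d)$ converts into the final bound $O(k)$.
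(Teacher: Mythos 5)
Your proposal is correct and follows essentially the same route as the paper: the paper's own proof is the one\-/line remark that the claim ``follows from Lemma \ref{lem:apprx_cmpnt} with some similar combinatorial argument,'' which unpacks into exactly your two steps --- bounding the minimum distance-$d$ dominating number by the maximum distance-$d$ independence number (via the observation that a maximal distance-$d$ independent set is distance-$d$ dominating), and then reusing the cell-counting bound of Lemmas \ref{lem:comp_size} and \ref{lem:lemma_bound} for the width-$2d$ cross around $\ell_h$ and $\ell_v$. The one caveat is that your step ``divide the $O(dk)$ cells by the $\Theta(d)$ cells blocked per point'' is only justified for independent points lying in a common connected component (otherwise there is no connecting path along which cells get blocked); for points in distinct components one must fall back on the cruder bound of at most one independent point per cell, i.e.\ $O(dk)$ points in total, which still yields $O(k)$ because $d$ is a fixed constant --- a looseness the paper's own Lemmas \ref{lem:comp_size} and \ref{lem:lemma_bound} share.
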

 \begin{proof}
 Follows from Lemma \ref{lem:apprx_cmpnt} with some similar combinatorial argument.  
 \end{proof}
 
 We use similar divide and conquer strategy as discussed in the Sub-section \ref{subsec:1} for obtaining the optimum solution on $\chi$.

\begin{lemma} \label{lem_complx}
The optimum solution produced by our algorithm for each $k\times k$ square ($\chi$) takes $k^2n_{\chi}^{O(k)}$ time, where $n_\chi=|P_\chi|$ is the number of points inside $\chi$.
\end{lemma}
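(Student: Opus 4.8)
The plan is to mirror the argument of Lemma~\ref{lem:apprx_tc}, since the divide-and-conquer recursion for the GD$d$DS problem has exactly the same shape as the one used for GD$d$IS in Subsection~\ref{subsec:1}; only the role of the separator changes, from a maximum distance-$d$ independent set ${\cal Q}_2$ to a minimum distance-$d$ dominating set $P''_\chi$ of the band $P'_\chi$. I would establish two things in turn: that the reported solution is optimal, and that the recursion runs in $k^2 n_\chi^{O(k)}$ time.

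For optimality, I would first argue that the enumeration is exhaustive over separators. By the preceding lemma, $|P''_\chi| \le O(k)$, so trying all subsets of $P'_\chi$ of size at most $O(k)$ is guaranteed to include the restriction of some global optimum to the width-$d$ band around $\ell_h$ and $\ell_v$. The crucial structural point I would then verify is that fixing the separator decouples the four quadrants: any point lying strictly inside one quadrant (i.e., in $P_\chi \setminus P'_\chi$) and any point lying strictly inside a different quadrant are separated by a band of width $d$ on each side of $\ell_h,\ell_v$, so their graph distance exceeds $d$ and neither can distance-$d$ dominate the other. Consequently, once $P''_\chi$ is fixed and required to dominate all of $P'_\chi$, the residual domination requirements in $\chi_1,\chi_2,\chi_3,\chi_4$ become mutually independent, and the union of $P''_\chi$ with optimal sub-solutions on the four quadrants is a globally optimal D$d$DS of $\chi$. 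Since every candidate separator is tried, the best combination found is optimal.

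For the running time, I would set up the recurrence exactly as in the proof of Lemma~\ref{lem:apprx_tc}. The number of subsets of $P'_\chi$ of size at most $O(k)$ is $\binom{n_\chi}{O(k)} = n_\chi^{O(k)}$; for each candidate, checking that it distance-$d$ dominates $P'_\chi$ (and recording which interior points of each quadrant remain to be dominated) costs $O(k^2)$ look-ups in the precomputed shortest-path matrix ${\cal M}$, after which we recurse on four sub-squares of size $\frac{k}{2}\times\frac{k}{2}$. This yields
$$T(n_\chi,k) = 4\, T\!\left(n_\chi, \tfrac{k}{2}\right) n_\chi^{O(k)} + O(k^2).$$
Unrolling over the $\log_2 k$ levels of recursion, the per-level factors of $4$ multiply to $4^{\log_2 k} = k^2$, while the exponents $O(k) + O(k/2) + O(k/4) + \cdots$ form a geometric series summing to $O(k)$, so the accumulated enumeration cost is $n_\chi^{O(k)}$; hence $T(n_\chi,k) = k^2 n_\chi^{O(k)}$, as claimed.

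I expect the main obstacle to be the decoupling step for domination rather than the arithmetic of the recurrence. Unlike independence, which is a local pairwise constraint that the earlier proof dispatched by simply deleting points within distance $d-1$ of ${\cal Q}_2$, domination is a covering constraint that could in principle be satisfied by a vertex on the far side of a dividing line. The argument therefore hinges on confirming that the band width $d$ is large enough that no interior point of one quadrant can be dominated by an interior point of another, so that all \emph{cross-line} domination is absorbed entirely into the enumerated separator $P''_\chi$. Verifying this carefully, including the boundary convention for points lying on $\ell_h,\ell_v$, is the step I would treat most cautiously.
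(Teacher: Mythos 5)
Your proposal is correct and follows essentially the same route as the paper, whose own proof simply notes that all candidate separators of size $|OPT_\chi|$ are enumerated and then defers the recurrence $T(n_\chi,k)=4\,T(n_\chi,\tfrac{k}{2})\,n_\chi^{O(k)}+O(k^2)$ to Lemma~\ref{lem:apprx_tc}. You supply considerably more detail than the paper does --- in particular the decoupling argument showing that the width-$d$ band absorbs all cross-line domination --- but the decomposition, the key bound $|P''_\chi|\leq O(k)$, and the unrolling of the recurrence are exactly the paper's argument.
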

\begin{proof}
As our algorithm checks all combinations of points of size $|OPT_\chi|$, where $OPT_\chi$ is an optimal solution for  $\chi$, there must be a case that the combination of points in $OPT_\chi$ appear in the process.

The time complexity follows from Lemma \ref{lem:apprx_tc}.
\end{proof}

\begin{theorem}
Given a set $P$ of $n$ points (center of the unit disks) in ${\cal R}$ and an integer $k \gg d$, a distance-$d$ dominating set of size at most $(1+\frac{1}{k})^2 \times |OPT|$ can be computed in $k^2n^{O(k)}$ time, where $OPT$ is the optimum solution.
\end{theorem}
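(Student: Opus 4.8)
The plan is to establish the statement in three parts: first that every iteration of the two–level shifting procedure returns a feasible distance-$d$ dominating set, then that the best iteration has size at most $(1+\frac{1}{k})^2|OPT|$, and finally that the running time is $k^2 n^{O(k)}$, the last being essentially immediate from Lemma~\ref{lem_complx}.

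For feasibility I would fix an arbitrary iteration $(i,j)$ of the nested shifts. The horizontal strips of the first level and, inside each strip, the vertical cells of the second level together partition $\mathcal{R}$, so every point of $P$ lies in exactly one $k\times k$ cell $\chi$ (after the smaller border cells are conceptually extended to full squares). The cell sub-routine of Subsection~\ref{subsec:opt}/Subsection~\ref{subsec:1} computes a set that distance-$d$ dominates all of $P_\chi$, and because the all-pairs distances come from the matrix $\mathcal{M}$ built on the $d$-width enlargement $\chi'$ of $\chi$, a dominator lying just outside $\chi$ is still available to the cell. Hence the union of the per-cell solutions dominates every point of $P$, so each iteration is feasible and we may output the smallest solution $D$ found.

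For the approximation factor I would invoke the shifting lemma of \cite{hochbaum} once per level. The crucial geometric observation is locality: if $p\in V'$ distance-$d$ dominates $q$, then their graph distance is at most $d$ and hence their Euclidean distance is at most $d$; since $k\gg d$, a point of $OPT$ cannot lie within Euclidean distance $d$ of both boundaries of a width-$k$ strip, so it can help dominate points in at most two consecutive strips. Fixing the horizontal shift, restricting $OPT$ to each strip (together with the $d$-margin $\chi'$) gives a feasible dominator of that strip, so the sum over strips of the optimal per-strip values is at most $|OPT|$ plus the number of $OPT$-points within distance $d$ of a strip boundary. Averaging over the $k$ choices of the shift $i$, each $OPT$-point incurs this excess for only a $1/k$ fraction of the shifts, so the best horizontal shift costs at most $(1+\frac{1}{k})$ times the total per-strip optimum. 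Applying the identical argument to the $k$ vertical shifts inside a strip, and using that each $k\times k$ cell is solved \emph{exactly} (the divide-and-conquer of Subsection~\ref{subsec:1} is optimal, by the counterpart of Lemma~\ref{lem:apprx_tc}), contributes a second factor $(1+\frac{1}{k})$, yielding $|D|\le(1+\frac{1}{k})^2|OPT|$.

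The running time is dominated by the cell solves: by Lemma~\ref{lem_complx} a single cell $\chi$ is processed in $k^2 n_\chi^{O(k)}$ time, and summing over the at most $n$ non-empty cells while using $\sum_\chi n_\chi\le n$ together with the superadditivity of $x\mapsto x^{O(k)}$ gives $k^2 n^{O(k)}$ per iteration; the $k^2$ shift iterations add only a polynomial factor absorbed into $n^{O(k)}$. I expect the main obstacle to be the approximation step rather than the timing or feasibility: one must argue carefully that solving each bounded cell optimally while charging cross-boundary domination to the $d$-margin both preserves global feasibility and keeps the per-level boundary overcounting down to the $1/k$ fraction that the shifting lemma then converts into the stated factor.
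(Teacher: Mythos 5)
Your proposal is correct and follows essentially the same route as the paper: a two-level shifting argument in which each $k\times k$ cell is solved exactly, boundary-crossing dominators are charged to the $d$-margin, and averaging over the $k$ shifts per level yields the $(1+\frac{1}{k})^2$ factor, with the running time coming from the per-cell bound of Lemma~\ref{lem_complx}. The only cosmetic difference is that you apply the shifting lemma once per level to get a factor $(1+\frac{1}{k})$ each time, whereas the paper sums the solutions over all $k^2$ iterations at once, bounds the total boundary excess by $2k|OPT|$, and concludes that the best iteration is at most $(1+\frac{2}{k})|OPT|\le(1+\frac{1}{k})^2|OPT|$; both are the same averaging argument.
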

 \begin{proof}
  Let $OPT$ be a minimum D$d$DS for the point set $P$ in UDG $G$, and $OPT'\subseteq OPT$ be the points chosen in $OPT$, which $d$-distance dominates the points outside the boundary  of all the cells in an iteration (first level $i$-th iteration and second level $j$-th iteration). Let $D^*$ be a solution obtained by our algorithm in an iteration.

  Then, $|D^*|\leq |OPT|+|OPT'|$. For all the iterations of $(i,j)$ ($1\leq i,j \leq k$), we have \\ $\sum\limits_{i=1}^{k}\sum\limits_{j=1}^{k} |D^*|\leq k^2|OPT|+\sum\limits_{i=1}^{k}\sum\limits_{j=1}^{k}|OPT'|$.\\
 Since any point from a cell $\chi$ chosen in $OPT$ can $d$-distance dominate points from no more than one horizontal strip (or vertical strip), and at most $k$ times each horizontal (or vertical) boundary appears throughout the algorithm, we have\\
 $\sum\limits_{i=1}^{k}\sum\limits_{j=1}^{k}|OPT'|\leq k|OPT|+k|OPT|$.\\
 Thus,\\
 $\sum\limits_{i=1}^{k}\sum\limits_{j=1}^{k}|D^*|\leq k^2|OPT|+2k|OPT|=(k^2+2k)|OPT|$.\\ Thus, $min\sum\limits_{i=1}^{k}\sum\limits_{j=1}^{k}|D^*|\leq (1+\frac{1}{k})^2 \times |OPT|$.
\\ the time complexity result follows from Lemma \ref{lem_complx}. 
 \end{proof}

\section{Conclusion} \label{sec:conclusion}

In this article, we studied the D$d$IS problem, a variant and generalized version of the independent set problem and D$d$DS problem, a generalized version of dominating set problem, on unit disk graphs. We proved that both GD$d$IS and GD$d$DS problems are NP-complete on unit disk graphs. We proposed a simple 4-factor approximation algorithms for both the problems.  We also proposed polynomial time approximation schemes (PTAS) for each of the problems.
\bibliographystyle{spmpsci}
\bibliography{JoCO}
\end{document}